\let\proof\@undefined
\let\endproof\@undefined
\pgfplotsset{compat=1.13}
\let\originalleft\left
\let\originalright\right
\renewcommand{\left}{\mathopen{}\mathclose\bgroup\originalleft}
\renewcommand{\right}{\aftergroup\egroup\originalright}
\newcounter{thm} 
\newtheorem{theorem}[thm]{\indent Theorem}
\newtheorem{assumption}{\indent Assumption}
\newtheorem{proposition}{\indent Proposition}
\newtheorem{lemma}{\indent Lemma}
\newtheorem{corollary}{\indent Corollary}
\newtheorem{definition}{\indent Definition}
\newtheorem{remark}{\indent Remark}
\newtheorem{example}{\indent Example}
\newtheorem{Simulation}{Simulation}
\newtheorem{fact}{\indent Fact}
\newtheorem{conjecture}{\indent Conjecture}
\newtheorem{experiment}{\indent Experiment}
\newlength\figureheight 
\newlength\figurewidth
\DeclareMathAlphabet{\mathcal}{OMS}{cmsy}{m}{n} 
\crefname{equation}{}{}
\begin{document}
	\title{Adaptive Control Barrier Functions with Vanishing Conservativeness Under Persistency of Excitation}
	
	\author{Ricardo Gutierrez and Jesse B. Hoagg
		\thanks{R. Gutierrez and J. B. Hoagg are with the Department of Mechanical and Aerospace Engineering, University of Kentucky, Lexington, KY, USA. (e-mail: Ricardo.Gutierrez@uky.edu, jesse.hoagg@uky.edu).}
		\thanks{R. Gutierrez is supported by the Fulbright-SENACYT Scholarship. This work is also supported in part by the National Science Foundation (1849213) and Air Force Office of Scientific Research (FA9550-20-1-0028).}
	}
	\maketitle
	
	\begin{abstract}
This article presents a closed-form adaptive control-barrier-function (CBF) approach for satisfying state constraints in systems with parametric uncertainty.
This approach uses a sampled-data recursive-least-squares algorithm to estimate the unknown model parameters and construct a nonincreasing upper bound on the norm of the estimation error. 
Together, this estimate and upper bound are used to construct a CBF-based constraint that has nonincreasing conservativeness.
Furthermore, if a persistency of excitation condition is satisfied, then the CBF-based constraint has vanishing conservativeness in the sense that the CBF-based constraint converges to the ideal constraint corresponding to the case where the uncertainty is known. 
In addition, the approach incorporates a monotonically improving estimate of the unknown model parameters---thus, this estimate can be effectively incorporated into a desired control law. 
We demonstrate constraint satisfaction and performance using 2 two numerical examples, namely, a nonlinear pendulum and a nonholonomic robot. 
\end{abstract}
	
	
	

\section{Introduction}

Control systems are often required to respect state constraints (e.g., safety constraints) and achieve performance requirements such as minimizing a performance-based cost function or achieving tasks such as formation control \cite{borrmann2015control,lippay2021formation,heintz2023formation}, locomotion \cite{nguyen2015safety}, or destination seeking 
%
%
%
%
%
%
%
%
performance subject to state constraints can be addressed using a variety of methods including model predictive control (e.g., \cite{borrelli2017predictive,bemporad2002model,tondel2003algorithm,zeng2021safety} and barrier function approaches (e.g., \cite{prajna2007framework, wieland2007constructive,ames2016control,nguyen2016exponential,jin2018adaptive,xiao2021high,tan2021high}).
%
%
%
%
%
%
%
%
%
%

Control barrier functions (CBFs) are an approach to achieve forward invariance of a set that satisfies state constraints (e.g., safety) \cite{wieland2007constructive}. 
CBFs are commonly implemented as constraints in real-time optimization control methods (e.g., quadratic programs) in order to guarantee state-constraint satisfaction while also attempting to minimize a performance based cost \cite{ames2016control}. 
The minimum-intervention problem is one common example, where a desired control $u_\rmd$ is designed to satisfy performance requirements but may not account for state constraints. 
Then, a control is generated that is as close as possible to $u_\rmd$ while satisfying a CBF constraint that guarantees state-constraint satisfaction. 
Examples of CBF applications include adaptive cruise control \cite{ames2016control,xiao2022event}, satellites \cite{breeden2023robust}, unmanned air vehicles \cite{wang2023multi, zheng2023constrained}, and ground robots \cite{ames2019control,cosner2021measurement,rabiee2023automatica, rabiee2024closed, safari2024time}.
%
%

CBF methods are model based, and model uncertainty can lead to both state-constraint violations and poor performance. 
Robust CBF methods such as \cite{jankovic2018robust,cohen2022robust} address model uncertainty with respect to state-constraint satisfaction. 
However, robust CBF methods adopt worst-case bounds, which can lead to overly conservative system behavior. 
Moreover, robust CBFs do not specifically address performance degradation that typically arises due to model uncertainty. 
In other words, robust CFBs do not directly address that a desired control $u_\rmd$ may lead to poor performance because of model uncertainty.

As an alternative, adaptive CBF techniques can also address model uncertainty (e.g., \cite{taylor2020adaptive,lopez2020robust,nguyen20221,cohen2022high,zeng2023robust}). 
%
%
%
%
%
For example, \cite{taylor2020adaptive} presents an adaptive CBF for state-constraint satisfaction in the presence of model uncertainty. 
However, the approach in \cite{taylor2020adaptive} can be conservative in the sense that the CBF constraint imposed for safety may be more conservative than needed. 
In \cite{lopez2020robust}, set-membership identification is combined with the basic approach of \cite{taylor2020adaptive} to reduce conservativeness. 
Higher-order adaptive CBF methods are presented in \cite{cohen2022high}.

This article presents an adaptive CBF approach that has 2 key features. 
First, the adaptive CBF for the state-constraint satisfaction is guaranteed to have nonincreasing conservativeness, that is, the conservativeness can be captured by a decreasing function of time. 
Moreover, the conservativeness vanishes in the case of persistency of excitation. 
Second, the approach incorporates an estimate of the unknown model parameters, where norm of the estimation error is guaranteed to be nonincreasing and vanishes in the case of persistency of excitation. 
Thus, the estimate of the unknown model parameters can be effectively incorporated into a desired control law. 
Together, these 2 features can help improve performance while ensuring state-constraint satisfaction.

\section{Notation}
Let $q:\mathbb{R}^{n} \rightarrow \mathbb{R}$ be continuously differentiable. 
Then, $q^\prime:\mathbb{R}^{n} \rightarrow \mathbb{R}^{1\times n}$ is defined by $q^\prime(x) \triangleq \textstyle \frac{\partial q(x)}{\partial x}$. 
The Lie derivative of $q$ along the vector fields of $\upsilon:\mathbb{R}^{n} \rightarrow \mathbb{R}^{n \times l}$ is defined as $L_\upsilon q(x) \triangleq q^\prime(x) \upsilon(x)$.
If $l=1$, then for all positive integers $r$,  define $L_\upsilon^{r}q(x) \triangleq L_{\upsilon} L_{\upsilon}^{r-1} q(x)$.
In this paper, we assume that all functions are sufficiently smooth such that all derivatives that we write exist and are continuous.
	
The boundary of $B \subseteq \BBR^n$ is denoted by $\mbox{bd } B$. 
Let $\|\cdot\|$ denote the 2 norm on $\BBR^n$.
The maximum eigenvalue of $A \in \BBR^{n \times n}$ is denoted by $\lambda_{\rm{max}}(A)$.

\section{Problem Formulation}\label{sec:problem formulation}

Consider the dynamic system
\begin{equation}
		\Dot{x}(t)=f(x(t))+\phi(x(t))\theta_{*}+g(x(t))u(t),
		\label{eq:dyn}
\end{equation}
where $x(t) \in \mathbb{R}^n$ is the state;
$x(0)=x_{0} \in \BBR^n$ is the initial condition;
$u: [0, \infty) \rightarrow \mathbb{R}^m$ is the control; $f: \mathbb{R}^n \rightarrow \mathbb{R}^{n}$, $g: \mathbb{R}^n \rightarrow \mathbb{R}^{n \times m}$, and $\phi: \mathbb{R}^{n} \rightarrow \mathbb{R}^{n \times p}$ are locally Lipshchitz continuous on $\mathbb{R}^{n}$;
$\theta_{*} \in \Theta$ is an unknown parameter; 
and $\Theta \subset \mathbb{R}^p$ is assumed to be bounded and known.

	
Let $\psi_{0}:\mathbb{R}^{n} \rightarrow \mathbb{R}$ be continuously differentiable, and define the \textit{safe set}
    \begin{equation}
	C_{0} \triangleq \{x \mid \psi_{0}(x)\geq0\},
	\label{eq:psi0}
    \end{equation}
which is the set of states that satisfy the state constraint. 
We make the following assumption:

\begin{assumption}{\rm 
There exists a positive integer $d$ such that for all $i\in\{0,1,\ldots,d-2\}$ and all $x \in \BBR^n$, $L_{g}L_{f}^{i}\psi_{0}(x)=0$ and               $L_{\phi}L_{f}^{i}\psi_{0}(x)=0$.
}\label{assum:input}
\end{assumption}

Assumption~\ref{assum:input} implies that the relative degree of $\psi_0$ with respect to the control $u$ and the unknown parameter $\theta_*$ is at least $d$ on $\BBR^n$. 
Thus, we use a higher-order approach to construct a candidate CBF. 
For all $i \in \{1,\ldots,d-1\}$, let $\alpha_{i-1}: \mathbb{R} \rightarrow \mathbb{R}$ is a $(d-i)$-times continuously differentiable extended class-$\SK$ function, and define 
    \begin{equation}
	\psi_{i}(x) \triangleq L_{f}\psi_{i-1}(x)+\alpha_{i-1} \big( \psi_{i-1}(x) \big).
	\label{eq:psii}
    \end{equation}
For all $i \in \{1,\ldots,d-1\}$, the zero-superlevel set of $\psi_i$ is given by 
    \begin{equation}
        C_{i} \triangleq \{x \mid \psi_{i}(x)\geq0\}.
        \label{eq:Ci}
    \end{equation}    
We make the following assumption:

\begin{assumption}
\rm{For all $x \in \mathrm{bd} \hspace{1mm} C_{d-1}$, $L_{g}L_{f}^{d-1}\psi_{0}(x) \neq 0$.} 
	\label{assum:assump2}
\end{assumption}
Assumption~\ref{assum:assump2} implies that $\psi_0$ has relative degree $d$ with respect to the control $u$ on $\mbox{bd } C_{d-1}$. 
Thus, 
\begin{equation}
\bar{C} \triangleq \bigcap_{i=0}^{d-1} C_{i}
        \label{eq:safe_set_mixed}
\end{equation}
is control forward invariant.

Next, consider the desired control $u_{\rm{d}}:\mathbb{R}^{n} \times \mathbb{R}^{p} \rightarrow \mathbb{R}^{m}$, which is designed to satisfy performance requirements but may not satisfy the state constraint that  $x(t) \in C_{0}$ for all $t$.
The desired control $u_{\rm{d}}(x,\hat \theta)$ is a function of the state $x$ and can be parameterized $\hat \theta$, which represents an estimate of the unknown parameter $\theta_*$. 
In other words, $u_{\rm{d}}(x,\theta_*)$ is the ideal desired control.

Consider the cost function $J:\mathbb{R}^{n} \times \mathbb{R}^{p} \times \mathbb{R}^{m} \rightarrow \mathbb{R}$ defined by
    \begin{equation}
        J(x,\hat{\theta},\hat{u})\triangleq \frac{1}{2} \Big ( \hat{u}-u_{\rm{d}}(x,\hat{\theta}) \Big )^{\rmT} H ( x,\hat{\theta}) \Big (\hat{u}-u_{\rm{d}}(x,\hat{\theta}) \Big ), 
	\label{eq:cost_func0}
    \end{equation}
where  $H : \mathbb{R}^{n} \times \mathbb{R}^{p} \rightarrow \BBR^{m \times m}$ is locally Lipschitz, and for all $(x,\hat{\theta}) \in \mathbb{R}^{n} \times \mathbb{R}^{p}$, $H(x,\hat{\theta})$ is positive definite.

The objective is twofold. 
First, design a feedback control that for all $(x,\hat{\theta})$, minimizes \eqref{eq:cost_func0} subject to the state constraint that $x(t) \in \bar C$ for all $t \ge 0$. 
Second, we aim to estimate $\theta_*$ so that the desired control $u_{\rm{d}}(x,\hat \theta)$ approaches the ideal desired control $u_{\rm{d}}(x,\theta_*)$.

\section{Parameter Estimation} \label{sec:Method}
This section provides a method of estimating $\theta_*$, and provides a computable upper bound on the norm of the parameter estimation error. 
This computable upper bound is nonincreasing and used in subsequent sections to design a safe control that is not overly conservative.

Unless otherwise stated, all expressions with subscript $k$ are for all $k \in \mathbb{N} \triangleq \{0,1,2,3,\ldots\}$.
Let $t_{0}=0$, and let $t_{k+1} > t_{k}$.
Define
	\begin{equation}
		\Phi_{k} \triangleq \int_{t_{k}}^{t_{k+1}} \phi(x(t)) \rmd t,
		\label{eq:phik}
	\end{equation}
and for all negative integers $i$, let $\Phi_{i} = 0_{n \times p}$. 
Next, let $k_{\rm{n}}$ be a nonnegative integer, and define
	\begin{equation}
		P_{k}\triangleq (\sigma_{k} I_{p}+\Omega_k )^{-1} \in \mathbb{R}^{p \times p},
		\label{eq:Pk}
	\end{equation}
where $\sigma_{k} > 0$ and
\begin{equation}
\Omega_k \triangleq \sum_{i=k-k_{\rm{n}}}^{k}\Phi_{i}^{\rm{T}}\Phi_{i}. \label{eq:Omega}
\end{equation}
Define     
\begin{align}
y_{k} &\triangleq x(t_{k+1})-x(t_{k}) \nn\\
&\qquad -\int_{t_{k}}^{t_{k+1}} f(x(t))+g(x(t))u(t) \rmd t.
		\label{eq:yk}
\end{align}
The next result shows that $y_k$ is a linear regression in $\Phi_k$ and the unknown parameter $\theta_*$. 

\begin{proposition}{\rm 
For all $k \in \mathbb{N}$, $y_{k} = \Phi_{k}\theta_{*}$.
}\label{prop:yk}
\end{proposition}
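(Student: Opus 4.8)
The plan is to integrate the dynamics \eqref{eq:dyn} over the sampling interval $[t_k, t_{k+1}]$ and match the resulting terms against the definitions of $y_k$ and $\Phi_k$. First I would apply the fundamental theorem of calculus to write $x(t_{k+1}) - x(t_k) = \int_{t_k}^{t_{k+1}} \dot x(t)\, \rmd t$; this is legitimate because the standing local-Lipschitz hypotheses on $f$, $g$, and $\phi$ guarantee that the state trajectory $x$ is absolutely continuous (indeed continuously differentiable wherever $u$ is continuous) on the interval.

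Next I would substitute the right-hand side of \eqref{eq:dyn} into this integral and split it by linearity into three pieces. The decisive step is that $\theta_*$ is a constant vector, so it factors out of the integral of the $\phi$ term: $\int_{t_k}^{t_{k+1}} \phi(x(t))\theta_*\, \rmd t = \big( \int_{t_k}^{t_{k+1}} \phi(x(t))\, \rmd t \big)\theta_* = \Phi_k \theta_*$, where the last equality is exactly definition \eqref{eq:phik}. Collecting the remaining terms gives $x(t_{k+1}) - x(t_k) = \int_{t_k}^{t_{k+1}} \big[ f(x(t)) + g(x(t))u(t) \big]\, \rmd t + \Phi_k \theta_*$, and rearranging so that the $f$ and $g$ terms move to the left reproduces the definition \eqref{eq:yk} of $y_k$ on one side and leaves $\Phi_k \theta_*$ on the other.

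I do not anticipate any genuine obstacle: the identity is an immediate consequence of integrating the dynamics and exploiting the time-invariance of $\theta_*$. The only point requiring care is the regularity justifying the fundamental theorem of calculus, and this is already supplied by the assumptions made on $f$, $g$, $\phi$, and the existence of the trajectory $x$ over each sampling interval.
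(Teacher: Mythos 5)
Your proposal is correct and follows essentially the same route as the paper's proof: both integrate the dynamics over $[t_k,t_{k+1}]$, factor the constant $\theta_*$ out of the integral of the $\phi$ term to obtain $\Phi_k\theta_*$, and match the remaining terms against the definition \eqref{eq:yk} of $y_k$. The only cosmetic difference is that the paper isolates $\phi(x(t))\theta_*$ before integrating rather than after, and your added remark on the regularity justifying the fundamental theorem of calculus is a harmless refinement.
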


\begin{proof}[\indent Proof]
It follows from \eqref{eq:dyn} that
\begin{equation*}
\phi(x(t))\theta_{*}=\Dot{x}(t)-f(x(t))-g(x(t))u(t),
\end{equation*}
and integrating over $[t_{k},t_{k+1})$ yields
\begin{align*}
\Phi_{k}\theta_{*}&= x\big(t_{k+1}\big)-x\big(t_{k}\big)\\
    			     & \qquad -\int_{t_{k}}^{t_{k+1}} f(x(t))+g(x(t))u(t) \rmd t.
\end{align*}
Then, \eqref{eq:yk} implies that $y_{k} = \Phi_{k}\theta_{*}$.
\end{proof}

Next, consider the regularized finite-horizon least-squares cost $\mathcal{J}_{k}:\mathbb{R}^{p} \rightarrow \mathbb{R}$ defined by
\begin{equation*}
\mathcal{J}_{k}(\hat{\theta})\triangleq \sigma_{k}\|\hat{\theta}-\theta_{k}\|^{2}+ \sum_{i=k-k_{\rm{n}}}^{k} \| \Phi_{i}\hat{\theta}-y_{i}\|^{2}.
\end{equation*}
For all $k \in \mathbb{N}$, the minimizer of $\SJ_k$ is given by
	\begin{equation}
		\theta_{k+1} \triangleq P_{k}\bigg(\sum_{i=k-k_{\rm{n}}}^{k}\Phi_{i}^{\rm{T}}y_{i}+\sigma_{k}\theta_{k}\bigg),
		\label{eq:theta_k}
	\end{equation}
	where $\theta_{0} \in \mathbb{R}^{p}$. 
 The finite-horizon least-squares estimator \eqref{eq:theta_k} can also be expressed in recursive form. 
See \cite{ali2016stability} for details.
%

	Next, define
	\begin{equation}
		\tilde{\theta}_{k}\triangleq \theta_{k} - \theta_{*},
		\label{eq:theta_tild}
	\end{equation}
        and consider the Lyapunov-like function
	\begin{equation}
		V_{k}\triangleq \tilde{\theta}_{k}^{\rm{T}}\tilde{\theta}_{k},
		\label{eq:Vk}
	\end{equation} 
        and the Lyapunov-like difference
	\begin{equation}
		\Delta V_{k} \triangleq \tilde{\theta}_{k+1}^{\rm{T}}\tilde{\theta}_{k+1}-\tilde{\theta}_{k}^{\rm{T}}\tilde{\theta}_{k}.
		\label{eq:dVk}
	\end{equation}    
The next result provides properties of the finite-horizon least-squares estimator \Cref{eq:theta_k,eq:phik,eq:Omega,eq:Pk,eq:yk}.

\begin{proposition}\label{prop:prop}{\rm
The following statements holds:
\begin{enumerate}

\item \label{prop:prop:Vk} 
For all $k \in \mathbb{N}$, $\Delta V_{k}=-\tilde{\theta}_{k}^{\rm{T}}\big(I_{p}-\sigma_{k}^{2}P_{k}^{2}\big)\tilde{\theta}_{k} \leq 0$.

\item \label{prop:prop:tk} 
$\{\theta_{k}\}_{k=0}^{\infty}$ is bounded.

\item \label{prop:prop:tk2}
Assume there exists $k_{\rm{i}} \in \mathbb{N}$ such that for all $k \geq k_{i}$, $\Omega_k$ is full rank. 
Then, $\theta_{k} \rightarrow \theta_{*}$ as $k \rightarrow \infty$.
\end{enumerate}
}\end{proposition}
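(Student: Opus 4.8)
The plan is to base all three parts on a single error recursion. First I would substitute $y_{i}=\Phi_{i}\theta_{*}$ from Proposition~\ref{prop:yk} into the estimator \eqref{eq:theta_k}, so that $\sum_{i=k-k_{\rm n}}^{k}\Phi_{i}^{\rm T}y_{i}=\Omega_{k}\theta_{*}$ by \eqref{eq:Omega}. Using $\Omega_{k}=P_{k}^{-1}-\sigma_{k}I_{p}$ from \eqref{eq:Pk}, the $\theta_{*}$ contribution telescopes and yields $\theta_{k+1}=\theta_{*}+\sigma_{k}P_{k}(\theta_{k}-\theta_{*})$, i.e.\ the clean error recursion $\tilde{\theta}_{k+1}=\sigma_{k}P_{k}\tilde{\theta}_{k}$. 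This identity is the workhorse. For part~\ref{prop:prop:Vk}, I would substitute it into \eqref{eq:dVk}; since $P_{k}$ is symmetric, $\tilde{\theta}_{k+1}^{\rm T}\tilde{\theta}_{k+1}=\sigma_{k}^{2}\tilde{\theta}_{k}^{\rm T}P_{k}^{2}\tilde{\theta}_{k}$, giving $\Delta V_{k}=-\tilde{\theta}_{k}^{\rm T}(I_{p}-\sigma_{k}^{2}P_{k}^{2})\tilde{\theta}_{k}$ at once. The sign then reduces to showing $I_{p}-\sigma_{k}^{2}P_{k}^{2}$ is positive semidefinite: because $\Omega_{k}\succeq 0$, every eigenvalue of $\sigma_{k}I_{p}+\Omega_{k}$ is at least $\sigma_{k}$, so every eigenvalue of $\sigma_{k}P_{k}$ lies in $(0,1]$, whence $\sigma_{k}^{2}P_{k}^{2}\preceq I_{p}$ and $\Delta V_{k}\le 0$. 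Part~\ref{prop:prop:tk} is then immediate: $\{V_{k}\}$ is nonincreasing, so $\|\tilde{\theta}_{k}\|^{2}=V_{k}\le V_{0}$, and since $\theta_{k}=\tilde{\theta}_{k}+\theta_{*}$ with $\theta_{*}$ fixed, $\{\theta_{k}\}$ is bounded.

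For part~\ref{prop:prop:tk2} I would argue as follows. By part~\ref{prop:prop:Vk}, $\{V_{k}\}$ is nonincreasing and bounded below by $0$, hence converges to some $V_{\infty}\ge 0$; telescoping $\Delta V_{k}$ gives $\sum_{k\ge k_{\rm i}}\tilde{\theta}_{k}^{\rm T}(I_{p}-\sigma_{k}^{2}P_{k}^{2})\tilde{\theta}_{k}=V_{k_{\rm i}}-V_{\infty}<\infty$, so the summand tends to $0$. Diagonalizing $\Omega_{k}$, the eigenvalue of $I_{p}-\sigma_{k}^{2}P_{k}^{2}$ associated with an eigenvalue $\mu\ge 0$ of $\Omega_{k}$ equals $\mu(2\sigma_{k}+\mu)/(\sigma_{k}+\mu)^{2}$, which is increasing in $\mu$; hence for $k\ge k_{\rm i}$, where $\Omega_{k}$ is full rank, $\lambda_{\rm min}(I_{p}-\sigma_{k}^{2}P_{k}^{2})$ is strictly positive and governed by $\lambda_{\rm min}(\Omega_{k})$. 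If this smallest eigenvalue is bounded below by a constant $c>0$, then $\tilde{\theta}_{k}^{\rm T}(I_{p}-\sigma_{k}^{2}P_{k}^{2})\tilde{\theta}_{k}\ge c\|\tilde{\theta}_{k}\|^{2}\to 0$ forces $\tilde{\theta}_{k}\to 0$, i.e.\ $\theta_{k}\to\theta_{*}$.

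I expect the main obstacle to be precisely this last uniformity. Pointwise full rank of $\Omega_{k}$ only delivers $\lambda_{\rm min}(\Omega_{k})>0$ for each $k$, not a uniform bound, and if $\lambda_{\rm min}(\Omega_{k})\to 0$ then the per-step contraction factor $\sigma_{k}/(\sigma_{k}+\lambda_{\rm min}(\Omega_{k}))$ can approach $1$ quickly enough that $V_{\infty}>0$. To close the gap I would invoke boundedness of the regressors $\Phi_{i}$ (so that $\lambda_{\rm max}(\Omega_{k})$ and $\sigma_{k}$ remain bounded) together with a uniform excitation bound $\lambda_{\rm min}(\Omega_{k})\ge c>0$, which makes $\mu(2\sigma_{k}+\mu)/(\sigma_{k}+\mu)^{2}$ uniformly positive and yields the required uniform contraction. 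Absent such a uniform bound, I would instead extract the convergent subsequence guaranteed by part~\ref{prop:prop:tk} and rule out $V_{\infty}>0$ through a limiting argument on the recursion $\tilde{\theta}_{k+1}=\sigma_{k}P_{k}\tilde{\theta}_{k}$.
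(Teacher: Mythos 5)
Your derivation of the error recursion $\tilde\theta_{k+1}=\sigma_k P_k\tilde\theta_k$ and your arguments for statements (a) and (b) coincide with the paper's proof: the paper obtains $\tilde\theta_{k+1}=\tilde\theta_k-P_k\Omega_k\tilde\theta_k$ from \eqref{eq:theta_k}, rewrites it using $\Omega_k=P_k^{-1}-\sigma_k I_p$, and then bounds $\Delta V_k\le\big(\sigma_k^2\lambda_{\rm max}(P_k)^2-1\big)\|\tilde\theta_k\|^2\le 0$, with boundedness of $\{\theta_k\}$ following immediately. So for (a) and (b) you are correct and on the same route.

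For statement (c), the paper's proof is a single line: full rank of $\Omega_k$ gives $\sigma_k^2\lambda_{\rm max}(P_k)^2-1<0$, hence $\Delta V_k<0$ whenever $\tilde\theta_k\ne 0$, and it concludes $\tilde\theta_k\to 0$. The uniformity issue you raise is therefore genuine, and it is present in the paper's own argument as well: strict per-step decrease does not force $V_k\to 0$ unless the contraction factor $\sigma_k\lambda_{\rm max}(P_k)=\sigma_k/\big(\sigma_k+\lambda_{\rm min}(\Omega_k)\big)$ is bounded away from $1$ for $k\ge k_{\rm i}$, which requires $\lambda_{\rm min}(\Omega_k)/\sigma_k$ bounded away from $0$ --- a condition not implied by pointwise full rank (your example of a product of factors $1-2^{-k}$ style decay is exactly the failure mode). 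Your eigenvalue computation $\mu(\mu+2\sigma_k)/(\sigma_k+\mu)^2$ for $I_p-\sigma_k^2P_k^2$ is correct, and your proposed repair (a uniform excitation lower bound on $\lambda_{\rm min}(\Omega_k)$ together with bounded $\sigma_k$) is the standard and, presumably, intended reading of the persistency-of-excitation hypothesis. In short: you have reproduced the paper's proof of (a) and (b) and correctly diagnosed a gap in (c) that the paper's own proof does not close; to match the stated conclusion you would need to strengthen the hypothesis to a uniform bound, exactly as you suggest.
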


    
\begin{proof}[\indent Proof]
Note that it follows from \eqref{eq:theta_k} that
        \begin{equation} 
		\tilde{\theta}_{k+1}=\tilde{\theta}_{k}-P_{k} \Omega_k \tilde{\theta}_{k}.
		\label{eq:tildethetaplus}
	\end{equation} 
Since \eqref{eq:Pk} implies that $\Omega_k =P_{k}^{-1}-\sigma_{k} I_{p}$, it follows from \eqref{eq:tildethetaplus} that 
\begin{equation} 
		\tilde{\theta}_{k+1}=\sigma_{k} P_{k} \tilde{\theta}_{k}.
		\label{eq:tildethetaplus.2}
\end{equation} 
Evaluating the Lyapunov-like difference \eqref{eq:dVk} along \eqref{eq:tildethetaplus.2} yields
\begin{align}
		\Delta V_{k}&=-\tilde{\theta}_{k}^{\rm{T}}\big(I_{p}-\sigma_{k}^{2}P_{k}^{2}\big)\tilde{\theta}_{k}\nn\\
        &\le  \Big ( \sigma_{k}^{2} \lambda_{\rm{max}}(P_{k})^2 -1 \Big ) \| \tilde{\theta}_{k} \|^2,\label{eq:dVk.2}
\end{align}
and it follow from \eqref{eq:Pk} that $\sigma_{k}^{2} \lambda_{\rm{max}}(P_{k})^2 -1 \le 0$, which confirms \ref{prop:prop:Vk}.

To prove \ref{prop:prop:tk}, since $\Delta V_{k} \leq 0$, it follows that $\tilde{\theta}_{k}$ is bounded. 
Since, in addition, $\tilde{\theta}_{k} = \theta_{k}-\theta_{*}$, it follows that $\theta_{k}$ is bounded.

To prove \ref{prop:prop:tk2}, since $\Omega_k$ is full rank, it follows from \eqref{eq:Pk} that $\sigma_{k}^{2} \lambda_{\rm{max}}(P_{k})^2 -1 < 0$, which combined with \eqref{eq:dVk.2} implies that $\Delta V_{k} < 0$ for all $\tilde\theta_k \ne 0$. 
Thus, $\tilde\theta_k \to 0$ as $k \to \infty$, which confirms \ref{prop:prop:tk2}.
\end{proof}

    Define
    \begin{equation}
        \bar{y}_{i}(\theta_{k})\triangleq \Phi_{i}\theta_{k}-y_{i},
        \label{eq:yhat}
    \end{equation}
    and define
	\begin{equation}
            \begin{aligned}
    		\tau_{k} & \triangleq \frac{2}{\sigma_{k}}\bigg[-\sum_{i=k-k_{\rm{n}}}^{k}||\bar{y}_{i}(\theta_{k})||^{2}\\
    		& \qquad +\bigg(\sum_{i=k-k_{\rm{n}}}^{k}\bar{y}_{i}^{\rm{T}}(\theta_{k})\Phi_{i}\bigg)P_{k}\bigg(\sum_{i=k-                             
                k_{\rm{n}}}^{k}\Phi_{i}^{\rm{T}}\bar{y}_{i}(\theta_{k})\bigg)\bigg]\\
    		& \qquad +\bigg(\sum_{i=k-k_{\rm{n}}}^{k}\bar{y}_{i}^{\rm{T}}(\theta_{k})\Phi_{i}\bigg)P_{k}^{2}\bigg(\sum_{i=k-
                k_{\rm{n}}}^{k}\Phi_{i}^{\rm{T}}\bar{y}_{i}(\theta_{k})\bigg),
            \end{aligned}
		\label{eq:tauk}
	\end{equation}
 
%
%
%

 The next result shows that the computable quantity $\tau_k$ is equal to the Lyapunov-like difference $\Delta V_k$, which cannot be computed from its definition because $\theta_*$ is unknown.

\begin{lemma}{\rm
For all $k\in \mathbb{N}$, $\tau_{k}=\Delta V_{k}$.
		\label{lemma:tauk}
}\end{lemma}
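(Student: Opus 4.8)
The plan is to reduce the computable expression $\tau_k$ in \eqref{eq:tauk} to exactly the quadratic form that appears in Proposition~\ref{prop:prop}\ref{prop:prop:Vk}. The key observation is that, although $\theta_*$ is unknown, the residuals $\bar y_i(\theta_k)$ defined in \eqref{eq:yhat} encode the estimation error $\tilde\theta_k$ exactly. Indeed, Proposition~\ref{prop:yk} gives $y_i = \Phi_i\theta_*$, so substituting into \eqref{eq:yhat} and using \eqref{eq:theta_tild} yields
\[
\bar y_i(\theta_k) = \Phi_i\theta_k - \Phi_i\theta_* = \Phi_i\tilde\theta_k .
\]
This identity is the crucial link: every occurrence of $\bar y_i(\theta_k)$ in \eqref{eq:tauk} can be replaced by $\Phi_i\tilde\theta_k$.

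With this substitution and the definition \eqref{eq:Omega} of $\Omega_k = \sum_{i=k-k_{\rm n}}^{k}\Phi_i^{\rm{T}}\Phi_i$, I would collapse the three aggregate terms in \eqref{eq:tauk}, since
\[
\sum_{i=k-k_{\rm n}}^{k}\|\bar y_i(\theta_k)\|^2 = \tilde\theta_k^{\rm{T}}\Omega_k\tilde\theta_k, \qquad \sum_{i=k-k_{\rm n}}^{k}\Phi_i^{\rm{T}}\bar y_i(\theta_k) = \Omega_k\tilde\theta_k .
\]
Hence $\tau_k = \tilde\theta_k^{\rm{T}} M_k \tilde\theta_k$, where
\[
M_k \triangleq \frac{2}{\sigma_k}\big(-\Omega_k + \Omega_k P_k \Omega_k\big) + \Omega_k P_k^2 \Omega_k .
\]

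It then remains to show $M_k = -(I_p - \sigma_k^2 P_k^2)$, which I would do using the identity $\Omega_k = P_k^{-1} - \sigma_k I_p$ obtained from \eqref{eq:Pk}. A short computation gives $\Omega_k P_k \Omega_k = P_k^{-1} - 2\sigma_k I_p + \sigma_k^2 P_k$ and $\Omega_k P_k^2 \Omega_k = I_p - 2\sigma_k P_k + \sigma_k^2 P_k^2$. Substituting these into $M_k$, the $\frac{2}{\sigma_k}$-weighted bracket simplifies to $-2I_p + 2\sigma_k P_k$, and adding $\Omega_k P_k^2 \Omega_k$ produces the clean cancellation $M_k = -I_p + \sigma_k^2 P_k^2$. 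Therefore $\tau_k = -\tilde\theta_k^{\rm{T}}(I_p - \sigma_k^2 P_k^2)\tilde\theta_k$, which by Proposition~\ref{prop:prop}\ref{prop:prop:Vk} equals $\Delta V_k$.

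The only substantive work is the matrix algebra in the last step; everything before it is direct substitution. Accordingly, the main obstacle is bookkeeping—tracking the $P_k^{-1}$, $I_p$, and $P_k$ contributions when expanding $\Omega_k P_k \Omega_k$ and $\Omega_k P_k^2 \Omega_k$ so that the weighted terms combine exactly. No positivity, boundedness, or convergence arguments are required: this is a pure algebraic identity that holds for every $k \in \mathbb{N}$, which is precisely what makes $\tau_k$ a \emph{computable} surrogate for the non-computable $\Delta V_k$.
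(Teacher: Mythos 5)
Your proposal is correct and follows essentially the same route as the paper: substitute $\bar y_i(\theta_k)=\Phi_i\tilde\theta_k$ (via Proposition~\ref{prop:yk}), collapse the sums into $\Omega_k$, and use $\Omega_k=P_k^{-1}-\sigma_k I_p$ to reduce the quadratic form to $-\tilde\theta_k^{\rm{T}}(I_p-\sigma_k^2P_k^2)\tilde\theta_k$, matching Proposition~\ref{prop:prop}\ref{prop:prop:Vk}. The only difference is that you spell out the matrix expansions that the paper leaves implicit; the algebra checks out.
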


\begin{proof}[\indent Proof]
\Cref{prop:yk} and \eqref{eq:yhat} imply $\bar{y}_{i}(\theta_{k})=\Phi_{i}\tilde{\theta}_{k}$, and substituting into \eqref{eq:tauk} yields 
\begin{equation*}
\tau_{k} = \frac{2}{\sigma_{k}}\tilde{\theta}_{k}^{\rm{T}} (-\Omega_k + \Omega_k P_k \Omega_k )\tilde{\theta}_{k}+ \tilde{\theta}_{k}^{\rm{T}}\Omega_k P_k^{2}\Omega_k \tilde{\theta}_{k}. 
\end{equation*}
Since \eqref{eq:Pk} implies that $\Omega_k =P_{k}^{-1}-\sigma_{k} I_{p}$, it follows that $\tau_{k} = \tilde{\theta}_{k}^{\rm{T}} (-I_{p}+\sigma_{k}^{2}P_{k}^{2} )\tilde{\theta}_{k}$, which combined with \ref{prop:prop:Vk} of \Cref{prop:prop} yields $\tau_{k}=\Delta V_{k}$.
\end{proof}

Next, let $\nu_{0} >0$ satisfy
	\begin{equation}
		\nu_{0} \geq \sup_{\bar{\theta} \hspace{0.25mm} \in \hspace{0.25mm}\Theta}\big|\big|\theta_{0}-\bar{\theta}\big|\big|,
		\label{eq:c0}    
	\end{equation}
and note that $\nu_0$ satisfying \eqref{eq:c0} can be computed because the set $\Theta$ is known. 
For all $k \in \mathbb{N}$, define
	\begin{equation}
		\nu_{k+1}\triangleq \min\bigg(\sigma_{k}\lambda_{\rm{max}}(P_{k})\nu_{k} \hspace{1mm}, \hspace{1mm} \sqrt{\nu_{k}^{2}+\tau_{k}}\bigg),
		\label{eq:ck1}
	\end{equation}
 and the following result shows that $\nu_k$ is a nonincreasing upper bound on the magnitude of the estimation error $\tilde \theta_k$.
	
\begin{proposition}          \label{prop:tkck2}\rm 
The following statements holds:
\begin{enumerate}

\item \label{prop:tkck2:ck1} 
For all $k \in \mathbb{N}$, $\nu_{k} \geq ||\tilde{\theta}_{k}||$.

\item \label{prop:tkck2:ck2} 
For all $k \in \mathbb{N}$, $\nu_{k+1} \leq \nu_{k}$.

\item \label{prop:tkck2:ck3} 
$\{\nu_{k}\}_{k=0}^{\infty}$ is bounded.

\item \label{prop:tkck2:ck4} 
   Assume there exists $k_{\rm{i}} \in \mathbb{N}$ such that for all $k \geq k_{i}$, $\Omega_k$ is full rank. 
   Then, $\nu_{k}$ $\rightarrow 0$ as $k \rightarrow \infty$.
\end{enumerate}
  
\end{proposition}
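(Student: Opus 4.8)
The plan is to establish \ref{prop:tkck2:ck1} by induction and then derive the remaining three parts from it in sequence. The two facts that drive everything are the error recursion $\tilde\theta_{k+1}=\sigma_{k}P_{k}\tilde\theta_{k}$ in \eqref{eq:tildethetaplus.2} and the identity $\tau_{k}=\Delta V_{k}$ from \Cref{lemma:tauk}; together these let me bound $\|\tilde\theta_{k+1}\|$ by each of the two arguments of the minimum in \eqref{eq:ck1}, so that the bound survives the $\min$.

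For \ref{prop:tkck2:ck1}, the base case uses $\theta_{*}\in\Theta$: by \eqref{eq:c0}, $\nu_{0}\ge\sup_{\bar\theta\in\Theta}\|\theta_{0}-\bar\theta\|\ge\|\theta_{0}-\theta_{*}\|=\|\tilde\theta_{0}\|$. For the inductive step I assume $\nu_{k}\ge\|\tilde\theta_{k}\|$ and bound $\|\tilde\theta_{k+1}\|$ two ways. Since $P_{k}$ is symmetric positive definite and $\sigma_{k}>0$, \eqref{eq:tildethetaplus.2} gives $\|\tilde\theta_{k+1}\|\le\sigma_{k}\lambda_{\rm{max}}(P_{k})\|\tilde\theta_{k}\|\le\sigma_{k}\lambda_{\rm{max}}(P_{k})\nu_{k}$, which bounds the first argument. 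For the second, \Cref{lemma:tauk} and the inductive hypothesis give $\|\tilde\theta_{k+1}\|^{2}=\|\tilde\theta_{k}\|^{2}+\Delta V_{k}=\|\tilde\theta_{k}\|^{2}+\tau_{k}\le\nu_{k}^{2}+\tau_{k}$; note that this same chain shows $\nu_{k}^{2}+\tau_{k}\ge\|\tilde\theta_{k+1}\|^{2}\ge 0$, so the square root in \eqref{eq:ck1} is real and $\|\tilde\theta_{k+1}\|\le\sqrt{\nu_{k}^{2}+\tau_{k}}$. Taking the minimum of the two bounds yields $\|\tilde\theta_{k+1}\|\le\nu_{k+1}$, closing the induction.

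Part \ref{prop:tkck2:ck2} then follows immediately, since the minimum in \eqref{eq:ck1} is bounded above by its first argument and \eqref{eq:Pk} gives $0\le\sigma_{k}\lambda_{\rm{max}}(P_{k})\le 1$ (as already used in the proof of \Cref{prop:prop}), whence $\nu_{k+1}\le\sigma_{k}\lambda_{\rm{max}}(P_{k})\nu_{k}\le\nu_{k}$; a parallel induction confirms $\nu_{k}\ge 0$ for all $k$. Part \ref{prop:tkck2:ck3} is then immediate, since $0\le\nu_{k}\le\nu_{0}$.

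The main obstacle is \ref{prop:tkck2:ck4}. By \ref{prop:tkck2:ck2} and \ref{prop:tkck2:ck3} the sequence $\{\nu_{k}\}$ is nonincreasing and bounded below by $0$, hence converges to some $\nu_{\infty}\ge 0$, and it remains to show $\nu_{\infty}=0$. When $\Omega_{k}$ is full rank, \eqref{eq:Pk} gives the strict contraction $\sigma_{k}\lambda_{\rm{max}}(P_{k})=\sigma_{k}/(\sigma_{k}+\lambda_{\rm{min}}(\Omega_{k}))<1$, and the first-argument bound $\nu_{k+1}\le\sigma_{k}\lambda_{\rm{max}}(P_{k})\nu_{k}$ then drives $\nu_{k}$ toward $0$ by the same mechanism that yields $\theta_{k}\to\theta_{*}$ in \ref{prop:prop:tk2} of \Cref{prop:prop}. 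The delicate point is that full rank alone produces a per-step factor strictly below $1$ but not uniformly so, so the naive product $\prod_{j\ge k_{\rm{i}}}\sigma_{j}\lambda_{\rm{max}}(P_{j})$ need not tend to $0$ from this fact in isolation; I would therefore invoke the persistency-of-excitation content of the hypothesis (equivalently, that $\lambda_{\rm{min}}(\Omega_{k})$ stays bounded away from $0$ relative to $\sigma_{k}$) to force this product to vanish. An equivalent route is to track the gap $\nu_{k}^{2}-\|\tilde\theta_{k}\|^{2}$, which the second-argument bound shows to be nonnegative and nonincreasing; combined with $\tilde\theta_{k}\to 0$ from \ref{prop:prop:tk2}, this reduces $\nu_{\infty}=0$ to showing that this gap vanishes, which is where the excitation hypothesis must again be used.
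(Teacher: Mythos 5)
Parts \ref{prop:tkck2:ck1}--\ref{prop:tkck2:ck3} of your proposal are correct and essentially identical to the paper's proof: the same induction for \ref{prop:tkck2:ck1}, bounding each argument of the minimum in \eqref{eq:ck1} below by $\|\tilde{\theta}_{j+1}\|$ via \eqref{eq:tildethetaplus.2} and \Cref{lemma:tauk}. The only cosmetic difference is in \ref{prop:tkck2:ck2}, where you bound the minimum by its first argument using $\sigma_{k}\lambda_{\rm max}(P_{k})\leq 1$, while the paper bounds it by its second argument using $\tau_{k}\leq 0$; both are valid. Your remark that $\nu_{k}^{2}+\tau_{k}\geq\|\tilde{\theta}_{k+1}\|^{2}\geq 0$, so the square root in \eqref{eq:ck1} is well defined, is a detail the paper leaves implicit.

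Part \ref{prop:tkck2:ck4} you do not actually prove, and that is a genuine gap in your proposal: you reduce the claim to showing that the gap $g_{k}\triangleq\nu_{k}^{2}-\|\tilde{\theta}_{k}\|^{2}$ vanishes, but then defer to an unspecified ``persistency-of-excitation content of the hypothesis'' without carrying out the argument. That said, your diagnosis of the difficulty is accurate, and the paper's own proof of \ref{prop:tkck2:ck4} stumbles at exactly this point: it establishes $\nu_{k+1}<\nu_{k}$, $\|\tilde{\theta}_{k+1}\|\leq\nu_{k+1}$, and $\|\tilde{\theta}_{k+1}\|\to 0$, and from these alone concludes $\nu_{k}\to 0$. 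That inference is a non sequitur --- for instance, a strictly decreasing sequence $\nu_{k}=1+1/k$ dominates $\|\tilde{\theta}_{k}\|=1/k\to 0$ yet converges to $1$. Closing the argument requires what you identify: a uniform contraction $\sigma_{k}\lambda_{\rm max}(P_{k})=\sigma_{k}/(\sigma_{k}+\lambda_{\rm min}(\Omega_{k}))\leq 1-\epsilon$, i.e., $\lambda_{\rm min}(\Omega_{k})/\sigma_{k}$ bounded away from zero, which is strictly stronger than the stated hypothesis that each $\Omega_{k}$ is full rank. Granting that uniform bound, your gap-tracking route does finish the proof: $g_{k}$ is nonnegative and nonincreasing; whenever the second branch of the minimum in \eqref{eq:ck1} is active the gap is preserved, so if it were active for all large $k$ the inequality $g+\|\tilde{\theta}_{k+1}\|^{2}\leq(1-\epsilon)^{2}(g+\|\tilde{\theta}_{k}\|^{2})$ forced by the first branch would be violated as $\|\tilde{\theta}_{k}\|\to 0$ unless $g=0$; hence the first branch is active infinitely often, and along that subsequence $g_{k}$ contracts to zero, giving $\nu_{k}^{2}=g_{k}+\|\tilde{\theta}_{k}\|^{2}\to 0$. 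In short, your proposal is incomplete on \ref{prop:tkck2:ck4}, but it correctly identifies a hole that the paper's own proof does not fill.
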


\begin{proof}[\indent Proof]

To prove \ref{prop:tkck2:ck1}, it follows from \eqref{eq:dVk} and \ref{prop:prop:Vk} of \Cref{prop:prop} that 
\begin{equation}
||\tilde{\theta}_{k+1}|| \leq \sigma_{k}\lambda_{\rm{max}}(P_{k})||\tilde{\theta}_{k}||.
            \label{eq:theta_tild1}
\end{equation}
We use induction on $k$. 
First, since $\theta_{*} \in \Theta$, it follows from \eqref{eq:c0} that $\nu_{0} \geq \|\tilde{\theta}_{0}\|$, which proves \ref{prop:tkck2:ck1} for $k=0$. 
Next, let $j \in \BBN$, and assume for induction that $\nu_{j} \geq \|\tilde{\theta}_{j}\|$. 
Thus, \eqref{eq:ck1} implies that
\begin{equation*}
    	\nu_{j+1} \geq  \min\Big( \sigma_{j} \lambda_{\rm{max}}\big(P_{j}\big)||\tilde{\theta}_{j}||,\sqrt{||\tilde{\theta}_{j}||^{2}+\tau_{j}}\Big),
\end{equation*}
which combined with \eqref{eq:theta_tild1} yields
\begin{equation}
\nu_{j+1} \geq \min \Big(\|\tilde{\theta}_{j+1}\| , 
\sqrt{\|\tilde{\theta}_{j}\|^{2}+\tau_{j}}\big). \label{eq:nu_j+1}
\end{equation}
Since \Cref{lemma:tauk} and \eqref{eq:dVk} imply that $\sqrt{\|\tilde{\theta}_{k}\|^{2} + \tau_{k}} = \|\tilde{\theta}_{k+1}\|$, it follows from \eqref{eq:nu_j+1} that $\nu_{j+1} \geq \|\tilde{\theta}_{j+1}\|$, which confirms \ref{prop:tkck2:ck1}.

To prove \ref{prop:tkck2:ck2}, it follows from \Cref{lemma:tauk} and \ref{prop:prop:Vk} of  \Cref{prop:prop} that $\sqrt{\nu_{k}^{2}+\tau_{k}} \leq \nu_{k}$, which combined with \eqref{eq:ck1} confirms \ref{prop:tkck2:ck2}.  

To prove \ref{prop:tkck2:ck3}, since \eqref{eq:ck1} implies that $\nu_k$ is nonnegative, it follows from \ref{prop:tkck2:ck2} that $\{\nu_{k}\}_{k=0}^{\infty}$ is bounded, which confirms \ref{prop:tkck2:ck3}.


	To prove \ref{prop:tkck2:ck4}, note that, since $\Omega_k$ is full rank, it           follows that $\lambda_{\rm{max}}\big(\sigma_{k}P_{k}\big) < 1$ and $\tau_{k} \leq 0$, which implies that $\lambda_{\rm{max}}\big(\sigma_{k}P_{k}\big)\nu_{k}      < \nu_{k}$ and $\sqrt{\nu_{k}^{2}+\tau_{k}} \leq \nu_{k}$ and consequently from \eqref{eq:ck1}, $\nu_{k+1} < \nu_{k}$. From \ref{prop:tkck2:ck1}, note that $||\tilde{\theta}_{k+1}|| \leq \nu_{k+1} < \nu_{k}$. Finally, since $||\tilde{\theta}_{k+1}|| \rightarrow 0$ as $k \rightarrow \infty$, it follows that $\nu_{k} \rightarrow 0$ as $k \rightarrow \infty$.
	\end{proof}

The next section uses the estimate $\theta_k$ and the computable upper bound $\nu_k$ to develop a safe an optimal control.

\section{Safe and Optimal Control}

Let $\xi: [0, \infty) \rightarrow [0,1]$ be a nondecreasing and continuously differentiable function such that for all $t \in (-\infty,0]$,  $\xi(t)=0$ and for all $t \in [1,\infty)$, $\xi(t)=1$. 
The following example provides one possible choice for $\xi$.

\begin{example}\label{ex:xi}\rm
Let $\displaystyle \eta \geq 1$, and consider
		\begin{equation}
			\xi(t) \triangleq
			\begin{cases} 
				0, & t \in (-\infty,0),\\
				\displaystyle \eta t-\frac{\sin{2\pi\eta t}}{2\pi}, & t \in[0,\frac{1}{\eta}],\\
				1, & t \in (\frac{1}{\eta},\infty).
			\end{cases}
			\label{eq:xit}
		\end{equation}
		\label{ex:Nt}
        \Cref{fig:nu vs t} shows \eqref{eq:xit} for $\eta \in \{1, \hspace{1mm} 2, \hspace{1mm} 5\}$. \hfill $\triangle$
\end{example}

 \vspace{0mm}
\begin{figure}[H]
    \centering
    \begin{subfigure}[t]{\linewidth} 
        \includegraphics[trim={0.5cm 0cm 1cm 0.25cm}, clip, width=\linewidth]{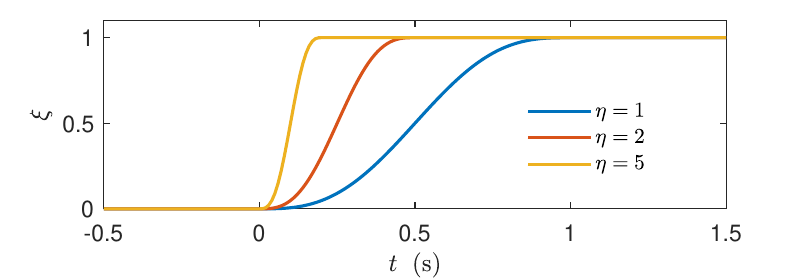}
    \end{subfigure}
    \caption{$\xi$ given by \Cref{ex:xi}.}
    \label{fig:nu vs t}
\end{figure}
 \vspace{0mm}

Define $\mathcal{T}_{k} \triangleq [t_{k}, t_{k+1})$.
For all $ k \in \mathbb{N}$ and all $t \in \mathcal{T}_{k}$, let $\theta:\mathcal{T}_{k} \rightarrow \mathbb{R}^{p}$ be given by
	\begin{equation}
		\theta(t) \triangleq
		\xi\bigg( \frac{t-t_{k}}{t_{k+1}-t_{k}}\bigg) \theta_{k} + \bigg[1-\xi\bigg( \frac{t-t_{k}}{t_{k+1}-t_{k}}\bigg)\bigg] \theta_{k-1},
		\label{eq:thetat}
	\end{equation}
 and let $\nu: \mathcal{T}_{k} \rightarrow [0,\infty)$ be given by
	\begin{equation}
		\nu (t) \triangleq
		\xi\bigg( \frac{t-t_{k}}{t_{k+1}-t_{k}}\bigg)\nu_{k}+\bigg[1-\xi\bigg( \frac{t-t_{k}}{t_{k+1}-t_{k}}\bigg)\bigg]\nu_{k-1},
		\label{eq:ct}
	\end{equation}
where $\nu_{-1} \triangleq \nu_{0}$ and $\theta_{-1} \triangleq \theta_{0}$.
Note that $\nu$ and $\theta$ are continuously differentiable functions constructed from the sequences $\nu_k$ and $\theta_k$.

Next, let $\alpha: \mathbb{R} \rightarrow \mathbb{R}$ be locally Lipschitz and nondecreasing such that $\alpha(0) = 0$, and consider the state-constraint function $\psi:\mathbb{R}^{n} \times \mathbb{R}^{p} \times \mathbb{R} \times \mathbb{R}^{m} \times \mathbb{R} \rightarrow \mathbb{R}$ be defined by
\begin{align}
	    \psi(x,\hat{\theta},\hat{\nu},\hat{u},\hat{\delta}) & \triangleq  L_{f}\psi_{d-1}(x) + L_{g}\psi_{d-1}(x)\hat{u} \nn \\
	    & \qquad +L_{\phi}\psi_{d-1}(x)\hat{\theta} - ||L_{\phi}\psi_{d-1}(x)||\hat{\nu} \nn \\
            & \qquad + \alpha_{d-1}(\psi_{d-1}(x)) + \hat{\delta} \psi_{d-1}(x), \label{eq:CBF-const}
\end{align}
where $\hat u$ is the control variable and $\hat \delta$ is a slack variable. 
Define 
\begin{equation}
\psi_*(x,\hat{u},\hat{\delta}) \triangleq \psi(x,\theta_{*},0,\hat{u},\hat{\delta}),
\label{eq:CBF_real}
\end{equation}
and note that $\psi_*(x(t),\hat{u},\hat{\delta}) \ge 0$ is a CBF-based state constraint for \eqref{eq:dyn} that guarantees for all $t \geq 0$, $x(t) \in \bar C$.
However, $\psi_*$ depends on the unknown parameter $\theta_{*}$. 
The next result shows that if $\hat \nu \ge \| \hat \theta -\theta_* \|$, then $\psi$ is a lower bound for $\psi_*$.
Since \ref{prop:tkck2:ck1} of \Cref{prop:tkck2} implies that $\nu(t) \ge \| \theta(t) -\theta_* \|$, the next result also demonstrates that the estimate $\theta$ and upper bound $\nu$ can be used in \eqref{eq:CBF-const} to obtain a constraint that is sufficient for $\psi_*(x(t),\hat{u},\hat{\delta}) \ge 0$.


\begin{proposition}\label{prop:prop3}\rm
Let $x\in \mathbb{R}^{n}$, $\hat{u} \in \mathbb{R}^{m}$, and $\hat{\delta} \in \mathbb{R}$. 
Then, the following hold:
\begin{enumerate}

\item \label{prop:prop3:a} 
Let $\hat{\theta} \in \mathbb{R}^{p}$ and let $\hat{\nu} \geq \| \hat{\theta}-\theta_{*} \|$. 
Then, $\psi(x,\hat{\theta},\hat{\nu},\hat{u},\hat{\delta}) \le \psi_*(x,\hat{u},\hat{\delta})$.

\item \label{prop:prop3:b} 
For all $t \geq 0$, $\psi(x,\theta(t),\nu (t),\hat{u},\hat{\delta}) \le \psi_*(x,\hat{u},\hat{\delta})$.

\item \label{prop:prop3:c} 
Assume there exists $k_{\rm{i}} \in \mathbb{N}$ such that for all $k \geq k_{i}$, $\Omega_k$ is full rank.
Then, $\lim_{t\to\infty} \psi(x,\theta(t),\nu (t),\hat{u},\hat{\delta}) = \psi_*(x,\hat{u},\hat{\delta})$
   
\end{enumerate}
\end{proposition}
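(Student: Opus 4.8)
The plan is to establish \ref{prop:prop3:a} by a direct Cauchy--Schwarz estimate, and then to derive \ref{prop:prop3:b} and \ref{prop:prop3:c} from \ref{prop:prop3:a} using the properties of the interpolated signals $\theta$ and $\nu$ together with the convergence results of \Cref{prop:prop,prop:tkck2}.

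For \ref{prop:prop3:a}, I would first subtract \eqref{eq:CBF-const} from \eqref{eq:CBF_real}. Every term that does not involve $\hat\theta$ or $\hat\nu$ cancels, and since $\psi_*$ corresponds to $\hat\theta = \theta_*$ and $\hat\nu = 0$, what remains is
\[
\psi_*(x,\hat u,\hat\delta) - \psi(x,\hat\theta,\hat\nu,\hat u,\hat\delta) = L_\phi\psi_{d-1}(x)(\theta_* - \hat\theta) + \|L_\phi\psi_{d-1}(x)\|\,\hat\nu .
\]
Cauchy--Schwarz gives $-L_\phi\psi_{d-1}(x)(\theta_* - \hat\theta) \le \|L_\phi\psi_{d-1}(x)\|\,\|\theta_* - \hat\theta\|$, and the hypothesis $\hat\nu \ge \|\hat\theta - \theta_*\|$ then makes the right-hand side above nonnegative, which is exactly \ref{prop:prop3:a}.

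To obtain \ref{prop:prop3:b}, the key intermediate step is to show that the interpolated bound dominates the interpolated error, i.e., $\nu(t) \ge \|\theta(t)-\theta_*\|$ for all $t\ge 0$. Using \eqref{eq:thetat} and \eqref{eq:ct}, both $\theta(t)$ and $\nu(t)$ are convex combinations, with the common weight $\xi(\cdot)\in[0,1]$, of consecutive terms $(\theta_k,\theta_{k-1})$ and $(\nu_k,\nu_{k-1})$. The triangle inequality then gives $\|\theta(t)-\theta_*\| \le \xi\|\theta_k-\theta_*\| + (1-\xi)\|\theta_{k-1}-\theta_*\|$, and applying \ref{prop:tkck2:ck1} of \Cref{prop:tkck2} term by term bounds this by $\xi\nu_k + (1-\xi)\nu_{k-1} = \nu(t)$. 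With this inequality in hand, \ref{prop:prop3:b} is immediate from \ref{prop:prop3:a} applied with $\hat\theta = \theta(t)$ and $\hat\nu = \nu(t)$.

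For \ref{prop:prop3:c}, I would use that $\psi$ is affine, hence continuous, in $(\hat\theta,\hat\nu)$, so it suffices to prove $\theta(t)\to\theta_*$ and $\nu(t)\to 0$ as $t\to\infty$ and then pass the limit through $\psi$. Under the full-rank hypothesis, \ref{prop:prop:tk2} of \Cref{prop:prop} gives $\theta_k\to\theta_*$ and \ref{prop:tkck2:ck4} of \Cref{prop:tkck2} gives $\nu_k\to 0$; since the interpolants are convex combinations of consecutive terms of these sequences, the convergence transfers, giving $\theta(t)\to\theta_*$ and $\nu(t)\to 0$. Continuity of $\psi$ then yields $\lim_{t\to\infty}\psi(x,\theta(t),\nu(t),\hat u,\hat\delta) = \psi(x,\theta_*,0,\hat u,\hat\delta) = \psi_*(x,\hat u,\hat\delta)$. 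I expect the only real subtlety to lie here: the convergence of the interpolants requires the sampling index $k$ with $t\in\mathcal{T}_k$ to satisfy $k\to\infty$ as $t\to\infty$ (equivalently $t_k\to\infty$), which is not forced by $t_{k+1}>t_k$ alone, so I would make this sampling condition explicit before concluding.
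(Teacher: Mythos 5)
Your proof is correct and follows essentially the same route as the paper's: Cauchy--Schwarz plus the hypothesis $\hat\nu \ge \|\hat\theta-\theta_*\|$ for \ref{prop:prop3:a}, the triangle inequality on the convex combinations together with \ref{prop:tkck2:ck1} of \Cref{prop:tkck2} for \ref{prop:prop3:b}, and the convergence results \ref{prop:prop:tk2} and \ref{prop:tkck2:ck4} plus continuity of $\psi$ in $(\hat\theta,\hat\nu)$ for \ref{prop:prop3:c}. Your closing observation --- that concluding $\theta(t)\to\theta_*$ and $\nu(t)\to 0$ from the sequence limits requires $t_k\to\infty$, which $t_{k+1}>t_k$ alone does not guarantee --- is a legitimate implicit assumption that the paper's proof also glosses over.
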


 %
 %
 %

\begin{proof}
To prove \ref{prop:prop3:a}, since $\hat{\nu} \geq \| \hat{\theta}-\theta_{*} \|$, it follows that 
\begin{equation*}
L_{\phi}\psi_{d-1}(x) [\hat{\theta}-\theta_{*}] \le 
\| L_{\phi}\psi_{d-1}(x) \| \| \hat{\theta}-\theta_{*} \| \le \|L_{\phi}\psi_{d-1}(x)\| \hat{\nu} ,
\end{equation*}
which implies
\begin{equation*}
L_{\phi}\psi_{d-1}(x) \theta_{*} \geq L_{\phi}\psi_{d-1}(x) \hat{\theta}-\| L_{\phi}\psi_{d-1}(x)\| \hat{\nu}.
\end{equation*}
Adding $L_{f}\psi_{d-1}(x)+L_{g}\psi_{d-1}\hat{u}+\hat{\delta}\psi_{d-1}(x)+\alpha_{d-1}(\psi_{d-1}(x))$ to both sides yields $\psi_*(x,\hat{u},\hat{\delta}) \geq \psi(x,\hat{\theta},\hat{\nu},\hat{u},\hat{\delta})$.

To prove \ref{prop:prop3:b}, it follows from \eqref{eq:thetat} that 
\begin{equation}
\|\theta(t)-\theta_{*}\| \leq \xi \|\theta_{k}-\theta_{*}\|+\Big (1-\xi \Big ) \|\theta_{k-1}-\theta_{*}\|,
\label{eq:prop:prop3:b:eq1}
\end{equation}
where the argument of $\xi$ is omitted for brevity. 
Next, using \ref{prop:tkck2:ck1} of \Cref{prop:tkck2} and \eqref{eq:ct} yields
\begin{equation*}
\|\theta(t)-\theta_{*} \| \leq \xi \nu_{k}+\Big(1-\xi \Big)\nu_{k-1} = \nu (t).
\end{equation*}
Since $\nu(t) \ge \|\theta(t)-\theta_{*} \|$, it follows from 
\ref{prop:prop3:a} that $\psi_*(x,\hat{u},\hat{\delta}) \geq \psi(x,\theta(t),\nu (t),\hat{u},\hat{\delta})$.

To prove \ref{prop:prop3:c}, note that \eqref{eq:thetat} and \ref{prop:prop:tk2} of \Cref{prop:prop} imply that $\lim_{t \to \infty} \theta(t) = \theta_{*}$.
Similarly, \eqref{eq:ct} and \ref{prop:tkck2:ck4} of \Cref{prop:tkck2} imply that $\lim_{t \to \infty} \nu(t) = 0$.
Thus, it follows from \cref{eq:CBF-const,eq:CBF_real} that $\lim_{t \to \infty} \psi(x,\theta(t),\nu(t),\hat{u},\hat{\delta}) = \psi_{*}(x,\hat{u},\hat{\delta})$.
    \end{proof}

Next, let $\beta>0$, and consider the cost function $\bar{J}:\mathbb{R}^{n} \times \mathbb{R}^{p} \times \mathbb{R}^{m} \times \mathbb{R} \rightarrow \mathbb{R}$ given by
\begin{equation}
\bar{J}(x,{\theta},\hat{u},\hat{\delta})=J(x,\theta,\hat{u})+\frac{\beta}{2} \hat{\delta}^{2},
\label{eq:cost_func}
\end{equation}
which is equal to \eqref{eq:cost_func0} plus a term that weights the slack variable $\hat \nu$. 
For each $t \ge 0$, the objective is to synthesize $(\hat{u},\hat{\delta})$ that minimizes $\bar{J}(x(t),\theta(t),\hat{u},\hat{\delta})$ subject to the CBF state constraint $\psi(x(t),\theta(t),\nu(t),\hat{u},\hat{\delta}) \geq 0$.

For all $(x, \theta, \nu) \in \mathbb{R}^{n}\times \mathbb{R}^{p} \times \mathbb{R}$, the minimizer of $\bar{J}(x,{\theta},\hat{u},\hat{\delta})$ subject to $\psi(x,{\theta},{\nu},\hat{u},\hat{\delta})$ can be obtained from the first-order necessary conditions for optimality.
For example, see \cite{rabiee2024closed, safari2024time, ames2016control, wieland2007constructive, cortez2022compatibility}.
%
%
%
%
%
%
%
The first-order necessary conditions yield the control $u_{*}: \mathbb{R}^{n} \times \mathbb{R}^{p} \times \mathbb{R} \rightarrow \mathbb{R}^{m}$ defined by
\begin{equation}
u_{*}(x,{\theta},{\nu}) \triangleq u_{\rm{d}}(x,{\theta}) +\lambda_{*}(x,{\theta},{\nu}) H(x,{\theta})^{-1} L_{g}\psi_{d-1}(x)^\rmT,
\label{eq:control_alpha}
\end{equation}
and the slack variable $\delta_{*}: \mathbb{R}^{n} \times \mathbb{R}^{p} \times \mathbb{R} \rightarrow \mathbb{R}$ given by
	\begin{equation}
		\delta_{*}(x,{\theta},{\nu}) \triangleq \beta^{-1} \psi_{d-1}(x) \lambda_{*}(x,{\theta},{\nu}), 
		\label{eq:control_delta}
	\end{equation}
where $\lambda_{*}: \mathbb{R}^{n} \times \mathbb{R}^{p} \times \mathbb{R} \rightarrow \mathbb{R}$ is defined by
	\begin{equation}
		\lambda_{*}(x,{\theta},{\nu}) \triangleq
		\begin{cases} 
			\displaystyle -\frac{\omega(x,{\theta},{\nu})}{q(x,{\theta})}, & \omega(x,{\theta},{\nu})<0,\\
			0, & \omega(x,{\theta},{\nu}) \geq 0,
		\end{cases}
		\label{eq:control_lambda}
	\end{equation}
	and $q: \mathbb{R}^{n} \times \mathbb{R}^{p} \rightarrow \mathbb{R}$ and $\omega: \mathbb{R}^{n} \times \mathbb{R}^{p} \times \mathbb{R} \rightarrow \mathbb{R}$ are given by
        \begin{equation}
            \begin{aligned}
                q(x,{\theta}) &\triangleq L_{g}\psi_{d-1}(x)H(x,{\theta})^{-1}L_{g}\psi_{d-1}(x)^{\rmT}\\
                & \qquad +\beta^{-1} \psi_{d-1}(x)^{2},
            \end{aligned}
        \label{eq:dx}
        \end{equation}
        \begin{equation}
    	\omega(x,{\theta},{\nu}) \triangleq \psi(x,{\theta},{\nu},u_{\rm{d}}(x,{\theta}),0).
    	\label{eq:omegax}
        \end{equation}

The next result shows that $(u_{*}(x,{\theta},{\nu}),\delta_{*}(x,{\theta},{\nu}))$ is the unique global minimizer of $\bar{J}(x,{\theta},\hat{u},\hat{\delta})$ subject to $\psi(x,{\theta},{\nu},\hat{u},\hat{\delta})$.
The proof is similar to that of \cite[Theorem 1]{safari2024time} and is omitted for space.

\begin{theorem}\label{th:cf_optimality}\rm 
Assume Assumption~\ref{assum:assump2} is satisfied. 
Let $(x,\theta,\nu) \in \mathbb{R}^{n}\times \mathbb{R}^{p}\times \mathbb{R}$, and let $(\hat u, \hat \delta) \in \mathbb{R}^{m}\times \mathbb{R}$ be such that $(\hat u, \hat \delta) \neq (u_{*}(x,{\theta},{\nu}),\delta_{*}(x,{\theta},{\nu})  )$ and $\psi(x,{\theta},{\nu},\hat{u},\hat{\delta}) \geq 0$. 
Then, $$\bar{J}(x,{\theta},\hat{u},\hat{\delta})>\bar{J}(x,{\theta},u_{*}(x,{\theta},{\nu}),\delta_{*}(x,{\theta},{\nu})).$$
\end{theorem}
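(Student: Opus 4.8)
The plan is to recognize that, for fixed $(x,\theta,\nu)$, minimizing $\bar{J}(x,\theta,\hat{u},\hat{\delta})$ subject to $\psi(x,\theta,\nu,\hat{u},\hat{\delta}) \ge 0$ is a strictly convex quadratic program in the decision variables $(\hat{u},\hat{\delta})$. Indeed, the cost \eqref{eq:cost_func} has Hessian $\diag(H(x,\theta),\beta)$, which is positive definite because $H(x,\theta)$ is positive definite and $\beta>0$, while the constraint function $\psi$ in \eqref{eq:CBF-const} is affine in $(\hat{u},\hat{\delta})$, so the feasible set is a closed half-space. For such a program the Karush--Kuhn--Tucker (KKT) conditions are both necessary and sufficient for global optimality, and strict convexity forces the minimizer to be unique. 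Thus it suffices to verify that $(u_{*},\delta_{*})$ together with the multiplier $\lambda_{*}$ satisfies the KKT conditions, and then invoke strict convexity to obtain the strict inequality for every other feasible point.

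First I would establish the preliminary fact that $q(x,\theta)>0$, which is needed so that $\lambda_{*}$ in \eqref{eq:control_lambda} is well defined. Since $H(x,\theta)^{-1}$ is positive definite and $\beta>0$, \eqref{eq:dx} shows $q(x,\theta)\ge 0$, with equality only if simultaneously $L_{g}\psi_{d-1}(x)=0$ and $\psi_{d-1}(x)=0$. The condition $\psi_{d-1}(x)=0$ places $x \in \mathrm{bd}\, C_{d-1}$, and the relative-degree structure of Assumption~\ref{assum:input} yields $L_{g}\psi_{d-1}(x)=L_{g}L_{f}^{d-1}\psi_{0}(x)$, so Assumption~\ref{assum:assump2} forces $L_{g}\psi_{d-1}(x)\neq 0$, a contradiction. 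Hence $q(x,\theta)>0$ for all $(x,\theta)$.

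Next I would form the Lagrangian $\mathcal{L}=\bar{J}-\lambda\psi$ with $\lambda \ge 0$ and impose stationarity. Differentiating in $\hat{u}$ gives $H(x,\theta)(\hat{u}-u_{\rm{d}}(x,\theta))-\lambda L_{g}\psi_{d-1}(x)^{\rm{T}}=0$, and differentiating in $\hat{\delta}$ gives $\beta\hat{\delta}-\lambda\psi_{d-1}(x)=0$; solving these recovers exactly the forms \eqref{eq:control_alpha} and \eqref{eq:control_delta} with $\lambda=\lambda_{*}$. Substituting these expressions back into \eqref{eq:CBF-const} and using \eqref{eq:dx} and \eqref{eq:omegax} collapses the constraint value to $\psi(x,\theta,\nu,u_{*},\delta_{*})=\omega(x,\theta,\nu)+\lambda_{*}q(x,\theta)$. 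I would then verify the two branches of \eqref{eq:control_lambda}: when $\omega\ge 0$ the choice $\lambda_{*}=0$ gives $(u_{*},\delta_{*})=(u_{\rm{d}},0)$ with $\psi=\omega\ge 0$ (primal feasibility) and complementary slackness trivial; when $\omega<0$ the choice $\lambda_{*}=-\omega/q>0$ (well defined since $q>0$) gives $\psi=0$, so the constraint is active, complementary slackness holds, and $\lambda_{*}>0$ gives dual feasibility. In both cases all KKT conditions---stationarity, primal and dual feasibility, and complementary slackness---are met.

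Finally, since $\bar{J}$ is strictly convex and the feasible set is convex, the KKT point $(u_{*},\delta_{*})$ is the unique global minimizer; consequently any feasible $(\hat{u},\hat{\delta})\neq(u_{*},\delta_{*})$ satisfies $\bar{J}(x,\theta,\hat{u},\hat{\delta})>\bar{J}(x,\theta,u_{*},\delta_{*})$, which is the claim. The only genuinely nontrivial step is the identification $L_{g}\psi_{d-1}=L_{g}L_{f}^{d-1}\psi_{0}$ on $\mathrm{bd}\, C_{d-1}$ that lets Assumption~\ref{assum:assump2} deliver $q>0$; I expect this to follow by induction on the recursion \eqref{eq:psii}, since Assumption~\ref{assum:input} makes $L_{g}\psi_{i}\equiv 0$ for $i\le d-2$ and annihilates the extra terms generated by the class-$\mathcal{K}$ functions, after which the remainder is routine quadratic-programming bookkeeping.
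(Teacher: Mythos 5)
Your proof is correct and follows essentially the same route the paper intends: the paper omits the proof but points to the first-order necessary conditions and to \cite[Theorem 1]{safari2024time}, and its own (commented-out) derivation establishes $q(x,\theta)>0$ via Assumption~\ref{assum:assump2}, verifies the identity $\psi(x,\theta,\nu,u_*,\delta_*)=\omega+\lambda_* q$, and checks the two branches of $\lambda_*$ exactly as you do, with strict convexity of $\bar J$ and affineness of the constraint delivering uniqueness of the KKT point. No gaps; your inductive justification that $L_g\psi_{d-1}=L_gL_f^{d-1}\psi_0$ is the right way to make the $q>0$ step rigorous.
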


The following theorem is the main result on satisfaction of the state constraint despite the model uncertainty. 
This result follows from standard CBF analysis techniques (e.g., \cite{ames2016control}) and is omitted for space.

\begin{theorem}\label{th:main}\rm 
Consider \eqref{eq:dyn}, where Assumptions~\ref{assum:input} and~\ref{assum:assump2} are satisfied. 
Let $u=u_{*}$, where $u_{*}$ is given by \cref{eq:control_alpha,eq:control_delta,eq:control_lambda,eq:dx,eq:omegax}, where $\theta$ and $\nu$ are given by \cref{eq:thetat,eq:ct,eq:tauk,eq:yhat,eq:ck1,eq:theta_k,eq:phik,eq:Omega,eq:Pk,eq:yk}.
Assume that $\psi_{d-1}^\prime$ is locally Lipschitz. 
Then, for all $x_0 \in \bar \SC$, the following statements hold: 
\begin{enumerate}
    \item There exists a maximum value $t_{\rm m} (x_0 ) \in (0 ,\infty]$ such that \eqref{eq:dyn} with $u = u_*$ has a unique solution on $[0, t_\rm (x_0))$.
    
    \item For all $t \in [0, t_\rm (x_0))$, $x(t) \in \bar \SC$.

    \item Assume the maximum interval of existence and uniqueness is $t_\rm (x_0) =\infty$. 
    Then, for all $t \ge 0$, $x(t) \in \bar \SC$.

\end{enumerate}
\end{theorem}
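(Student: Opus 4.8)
The plan is to separate the claim into well-posedness (statement~1) and forward invariance (statements~2 and~3, the latter being simply the specialization of the former to $t_{\rm m}(x_0)=\infty$). For well-posedness I treat the closed loop as a time-varying ODE and invoke standard existence--uniqueness theory; for forward invariance I use that the realized constraint value is nonnegative, push it through \Cref{prop:prop3} to bound $\psi_*$ from below, and then propagate nonnegativity down the higher-order chain $\psi_{d-1},\ldots,\psi_0$.

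First I would write the closed loop as $\dot x=F(t,x)\triangleq f(x)+\phi(x)\theta_*+g(x)u_*(x,\theta(t),\nu(t))$. Since $\theta(t)$ and $\nu(t)$ are continuously differentiable by construction, $F$ is continuous in $t$, so the only issue is local Lipschitzness of $F$ in $x$, which reduces to that of $u_*$. Using $u_*=u_{\rm d}(x,\theta)+\lambda_* H^{-1}L_g\psi_{d-1}^\rmT$ with $\lambda_*=\max\{0,-\omega/q\}$, I would verify: $L_f\psi_{d-1},L_g\psi_{d-1},L_\phi\psi_{d-1}$ are locally Lipschitz because $\psi_{d-1}'$ is (the standing hypothesis) and $f,g,\phi$ are; $H^{-1}$ is locally Lipschitz since $H$ is locally Lipschitz and positive definite; and $q$ is locally Lipschitz and strictly positive, since $q\le 0$ would force $L_g\psi_{d-1}=0$ and $\psi_{d-1}=0$, i.e. $x\in\mathrm{bd}\,C_{d-1}$, contradicting Assumption~\ref{assum:assump2}. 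Hence $-\omega/q$ is locally Lipschitz, and because $s\mapsto\max\{0,s\}$ is $1$-Lipschitz so is $\lambda_*$, even across the switching surface $\omega=0$ where both branches vanish. Thus $F$ is locally Lipschitz in $x$ uniformly on compact $t$-intervals, and Picard--Lindel\"of yields a unique maximal solution on some $[0,t_{\rm m}(x_0))$.

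For statement~2, observe that the minimizer $(u_*,\delta_*)$ is feasible, so $\psi(x,\theta(t),\nu(t),u_*,\delta_*)\ge 0$ along the trajectory (\Cref{th:cf_optimality}); combined with \ref{prop:prop3:b} of \Cref{prop:prop3} this gives $\psi_*(x,u_*,\delta_*)\ge 0$. Next I would use \eqref{eq:dyn} and the relative-degree structure implied by Assumption~\ref{assum:input} --- which makes $L_g\psi_i=L_\phi\psi_i=0$ for $i\le d-2$, so that $\dot\psi_i=L_f\psi_i$ and $\psi_{i+1}=\dot\psi_i+\alpha_i(\psi_i)$ hold along solutions --- to rewrite $\psi_*(x,u_*,\delta_*)=\dot\psi_{d-1}+\alpha_{d-1}(\psi_{d-1})+\delta_*\psi_{d-1}$. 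Hence $\dot\psi_{d-1}\ge-\alpha_{d-1}(\psi_{d-1})-\delta_*\psi_{d-1}$; since $\alpha_{d-1}(0)=0$ and the slack term vanishes at $\psi_{d-1}=0$, a comparison/Nagumo argument with $\psi_{d-1}(x_0)\ge 0$ yields $\psi_{d-1}(x(t))\ge 0$ on $[0,t_{\rm m}(x_0))$. Descending the chain, $\psi_{d-1}(x(t))\ge 0$ reads $\dot\psi_{d-2}+\alpha_{d-2}(\psi_{d-2})\ge 0$, and the same argument with $\psi_{d-2}(x_0)\ge 0$ gives $\psi_{d-2}(x(t))\ge 0$; iterating down to $\psi_0$ shows $x(t)\in C_i$ for all $i\in\{0,\ldots,d-1\}$, i.e. $x(t)\in\bar\SC$. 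Statement~3 is then the special case $t_{\rm m}(x_0)=\infty$.

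The hard part is the well-posedness step --- specifically, verifying that $\lambda_*$, and hence $u_*$, is locally Lipschitz through the switching surface $\omega=0$; this is exactly where the hypothesis that $\psi_{d-1}'$ be locally Lipschitz and the strict positivity of $q$ guaranteed by Assumption~\ref{assum:assump2} are needed. By contrast, the forward-invariance step is routine once \Cref{prop:prop3} is in hand, the only care being the comparison argument for the state- and time-dependent term $\delta_*\psi_{d-1}$, which is handled by noting that it vanishes on $\mathrm{bd}\,C_{d-1}$.
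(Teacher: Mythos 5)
The paper does not actually supply a proof of this theorem---it states only that the result ``follows from standard CBF analysis techniques (e.g., \cite{ames2016control}) and is omitted for space''---and your proposal is a correct, complete instantiation of exactly that standard argument: well-posedness via local Lipschitzness of the closed-form $u_*$ (resting on $q>0$, which is the content of the paper's own commented-out feasibility proposition and uses Assumption~\ref{assum:assump2} in the same way), followed by feasibility of $(u_*,\delta_*)$, \Cref{prop:prop3}, and a descending comparison argument through the chain $\psi_{d-1},\ldots,\psi_0$. The only point where a fully rigorous version needs slightly more care than you give is that $\theta(t)$ and $\nu(t)$ are built from past trajectory data, so the closed loop is a sampled-data system rather than a pure ODE and the existence--uniqueness step should formally be run interval-by-interval on each $\mathcal{T}_k$; this does not change the substance of your argument.
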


\Cref{th:cf_optimality,th:main} demonstrate that the control \cref{eq:control_alpha,eq:control_delta,eq:control_lambda,eq:dx,eq:omegax} satisfies the state constraint that for all $t \ge 0$, $x(t) \in \SC_0$, and yields a control that is optimal with respect to a constraint optimization that aims to obtain a control as close as possible to $u_\rmd$ subject to the state constraint.

    \section{Inverted pendulum}
    Consider the pendulum modeled by \eqref{eq:dyn}, where
    \begin{equation}
    f(x) =
    \left[
    \begin{array}{c}
        \dot{\gamma} \\
        f_{2}(x)
    \end{array}
    \right],
    \hspace{1mm}
    g(x)=
    \left[
    \begin{array}{c}
        0\\
        \frac{1}{mL^{2}} 
    \end{array}
    \right],
    \hspace{1mm}
    \phi(x)=
    \left[
    \begin{array}{c}
        0 \\
        \phi_{2}(x)
    \end{array}
    \right],
    \nonumber
    \end{equation}
    where 
    \begin{equation}
        \displaystyle
        \phi_{2}(x) \triangleq \frac{1}{mL^{2}}\bigg[-\gamma \hspace{2mm} -\gamma^{3} \hspace{2mm} -\tanh{\frac{\dot{\gamma}}{\epsilon_{1}}} \hspace{2mm} -\dot{\gamma} \hspace{2mm} -\dot{\gamma}^{2}\tanh{\frac{\dot{\gamma}}{\epsilon_{2}}} \bigg],
        \nonumber
    \end{equation}
    \begin{equation}
        f_{2}(x) \triangleq \frac{g}{L}\sin{\gamma}, \hspace{5mm}
         x = \left[
        \begin{array}{c}
            \gamma\\
            \dot{\gamma}
        \end{array}
        \right],
        \nonumber
    \end{equation}
    and $\gamma$ is the position, $\dot{\gamma}$ is the velocity, $g$ is the gravity, $\epsilon_{1}=\epsilon_{2} =2$, $m=0.01$ kg and $L=0.15$ m. In $\phi_{2}(x)$ the first two entries represent linear and cubic restitution forces, while the last three entries represent coulomb friction, viscous friction and drag force. Furthermore, $\theta_{*}=[0.5  \hspace{2mm} 0.35 \hspace{2mm} 0.15 \hspace{2mm} 0.5 \hspace{2mm} 0.25]^{\rm{T}} \in \Theta$, where
    \begin{equation}
        \Theta = \left\{\hat{\theta} \in \mathbb{R}^{5} \mid \text{for all } j \in \{1, 2, \ldots, 5\}, \, e_{j}^{\rm{T}}\hat{\theta} \in [0 \hspace{1mm} , \hspace{1mm} 2.5] \right\},
    \end{equation}
    and  $e_j$ is the $jth$ column of the $p \times p$ identity matrix.

    We implement \cref{eq:thetat,eq:ct,eq:tauk,eq:yhat,eq:ck1,eq:theta_k,eq:phik,eq:Omega,eq:Pk,eq:yk}, where $t_{k+1}-t_{k}= 0.25$ s, $k_{\rm{n}}=30$, $\theta_{0}=0$, $\nu_{0}=\sup_{\hat{\theta}\in \Theta}||\theta_{0}-\hat{\theta}||$, $\sigma_{k}=0.1$ and  $\eta=2$.
    
    Let $\displaystyle \gamma_{\rm{d}}(t)\triangleq-0.99(\pi/4)\cos{t}$ be the desired angular position. Define the error $e \triangleq \gamma-\gamma_{\rm{d}}$ and the desired control
    \begin{equation}
        u_{\rm{d}}(\hat{\theta}) \triangleq mL^{2}\bigg[-f_{2}-\phi_{2}\hat{\theta}+\dot{\omega}_{d}-K_{1}e-K_{2}\dot{e}\bigg],
        \nonumber
    \end{equation}
    where $K_{1}=50$ and $K_{2}=100$.

    If $u(t)=u_{\rm{d}}(\theta_{*})$, then $\ddot{e}+K_{1}\dot{e}+K_{2}e=0$, which implies that $\lim_{t \rightarrow \infty} e(t)=0$ and $\lim_{t\rightarrow \infty} \dot{e}(t)=0$.
    
    The safe set is given by \eqref{eq:psi0}, where $d=2$ and
    \begin{equation}
        \psi_{0}(x) = (\frac{\pi}{4})^{2}-\gamma^{2}.
        \label{eq:pend_psi0}
    \end{equation}
    
    We implement the control \cref{eq:control_alpha,eq:control_delta,eq:control_lambda,eq:dx,eq:omegax}, where $H=2$, $\beta=200$ and $\alpha_{0}=\alpha_{1}=200$. The control is updated at 1000 Hz using a zero-order hold.
    
    To examine the impact of the adaptive estimate $\theta$ and the adaptive bound $\nu$, we consider 3 cases: 
    
    \begin{enumerate}[label=\alph*)]
        \item Adaptive estimate $\theta$ and adaptive bound $\nu$ are used in CBF constraint and desired control (i.e., $\psi=\psi(x,\theta,\nu,\hat{u},\hat{\delta})$, $u_{\rm{d}}=u_{\rm{d}}(\theta)$).
        \item Adaptive estimate $\theta$ and adaptive bound $\nu$ are used in CBF constraint (i.e., $\psi=\psi(x,\theta,\nu,\hat{u},\hat{\delta})$), but the desired control uses initial estimate (i.e., $u_{\rm{d}}=u_{\rm{d}}(\theta_{0})$).
        \item Initial estimate $\theta_{0}$ and initial bound $\nu_{0}$ are used in CBF constraint (i.e., $\psi=\psi(x,\theta_{0},\nu_{0},\hat{u},\hat{\delta})$), but the desired control uses the adaptive estimate (i.e., $u_{\rm{d}}=u_{\rm{d}}(\theta)$).
    \end{enumerate}
    \vspace{-0mm}
    Proposition \ref{prop:prop3} implies that all 3 cases satisfy the state constraint. However, Cases 2 and 3 use the initial estimates, which can be conservative and may lead to worse performance, that is, worse tracking of $\gamma_{\rm{d}}$.

    \vspace{-2mm}
    \begin{figure}[H]
    \centering
    \begin{subfigure}[t]{0.32\linewidth} 
        \includegraphics[trim={2.22cm 2.1cm 18.8cm 4.1cm},clip, width=1.2\linewidth]{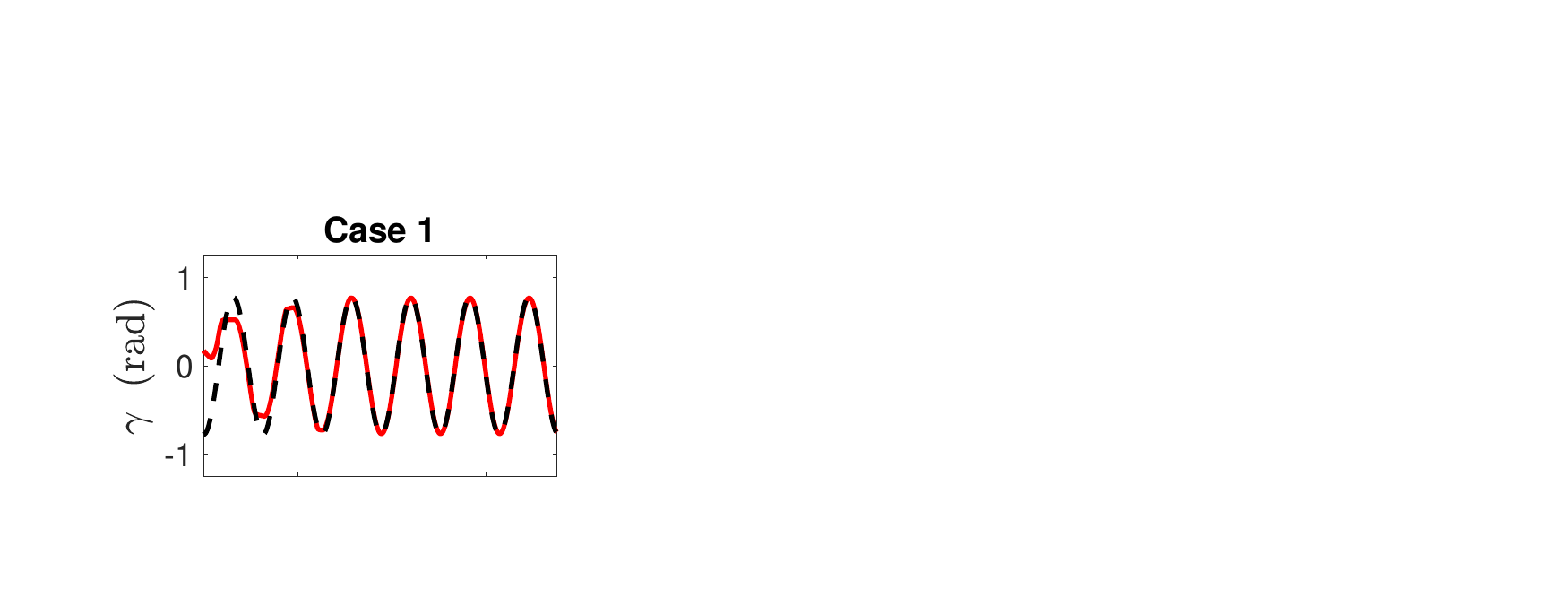}
    \end{subfigure}
    \hspace{0.22cm}
    \begin{subfigure}[t]{0.32\linewidth} 
        \includegraphics[trim={10.56cm 2.1cm 12.2cm 4.1cm},clip, width=0.96\linewidth]{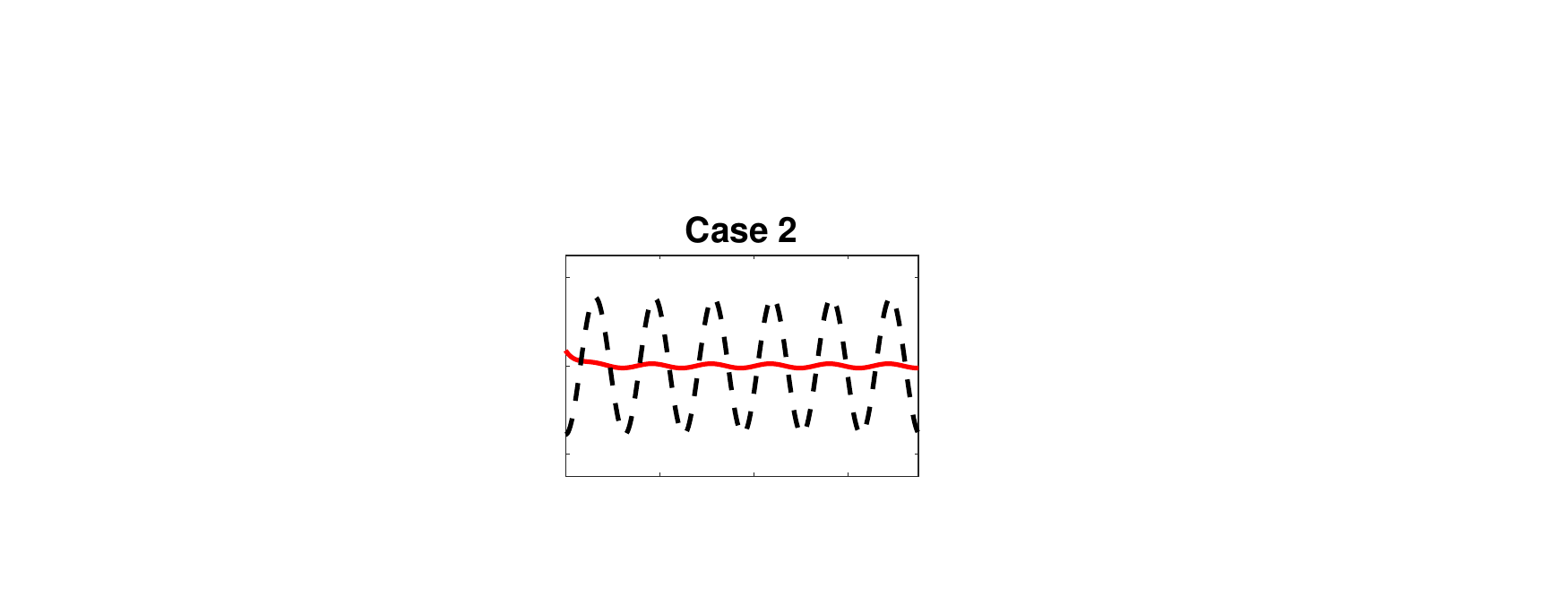}
    \end{subfigure}
    \hspace{-0.35cm}
    \begin{subfigure}[t]{0.32\linewidth} 
        \includegraphics[trim={17.41cm 2.1cm 5.4cm 4.1cm},clip, width=0.95\linewidth]{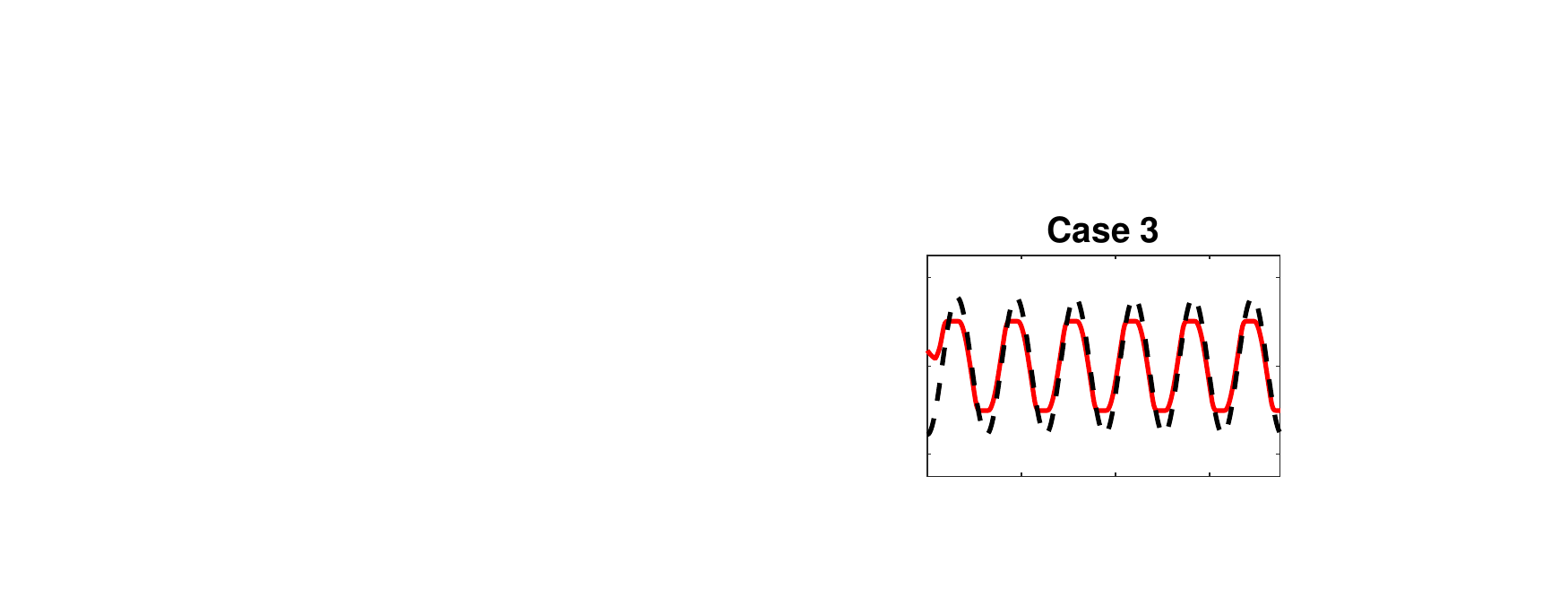}
    \end{subfigure}\\
    \begin{subfigure}[t]{0.32\linewidth} 
        \includegraphics[trim={2.22cm 2.1cm 18.8cm 4.8cm},clip, width=1.2\linewidth]{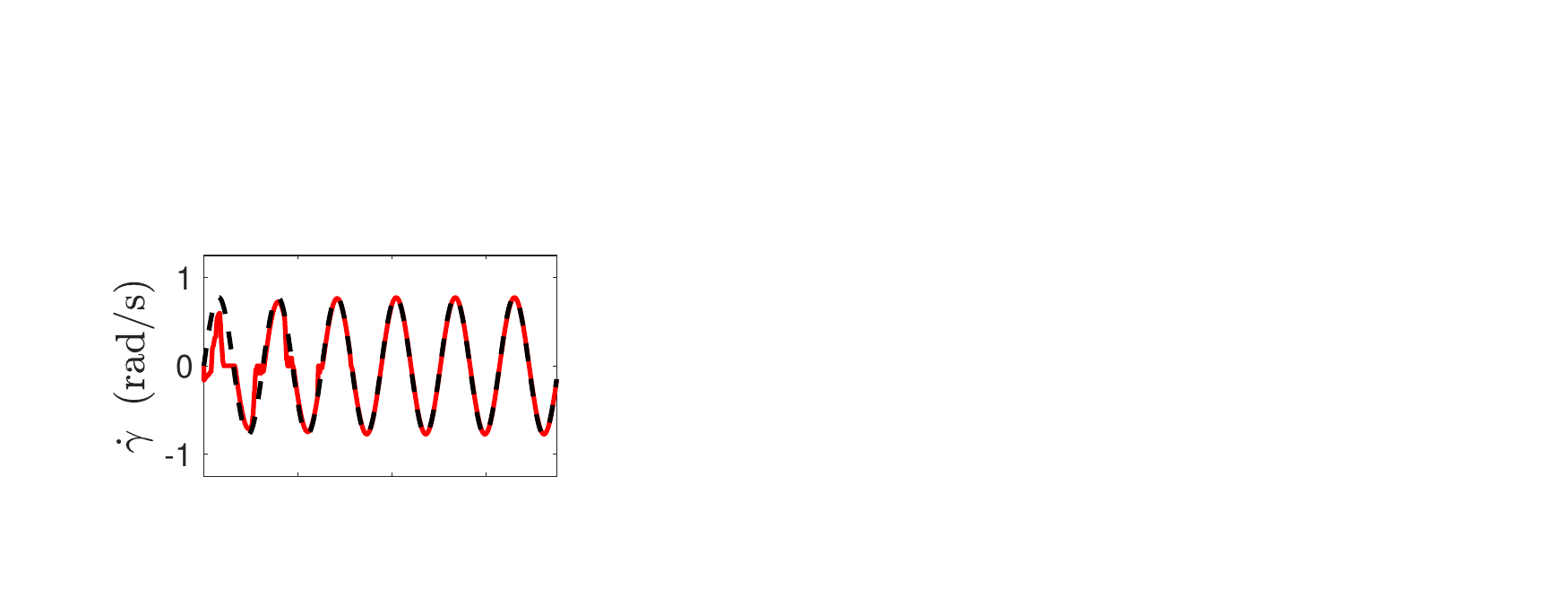}
    \end{subfigure}
    \hspace{0.22cm}
    \begin{subfigure}[t]{0.32\linewidth} 
        \includegraphics[trim={10.56cm 2.1cm 12.2cm 4.8cm},clip, width=0.96\linewidth]{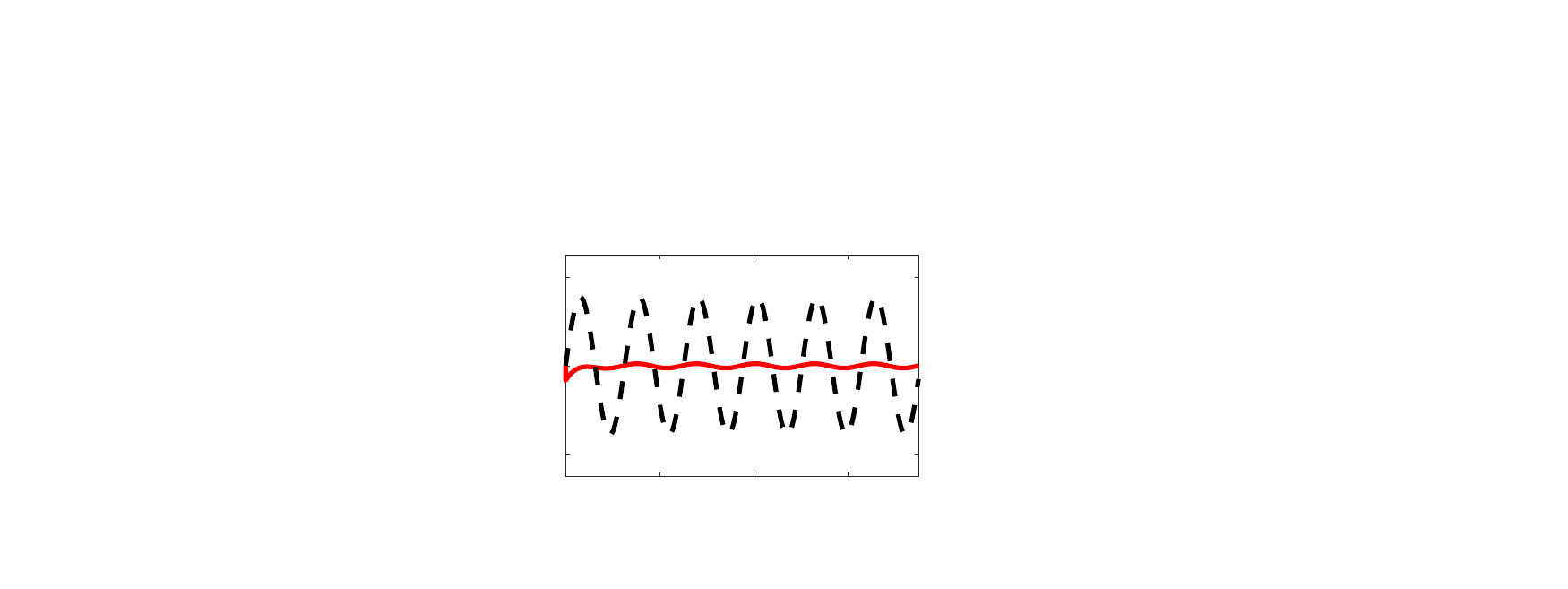}
    \end{subfigure}
    \hspace{-0.35cm}
    \begin{subfigure}[t]{0.32\linewidth} 
        \includegraphics[trim={17.41cm 2.1cm 5.4cm 4.8cm},clip, width=0.95\linewidth]{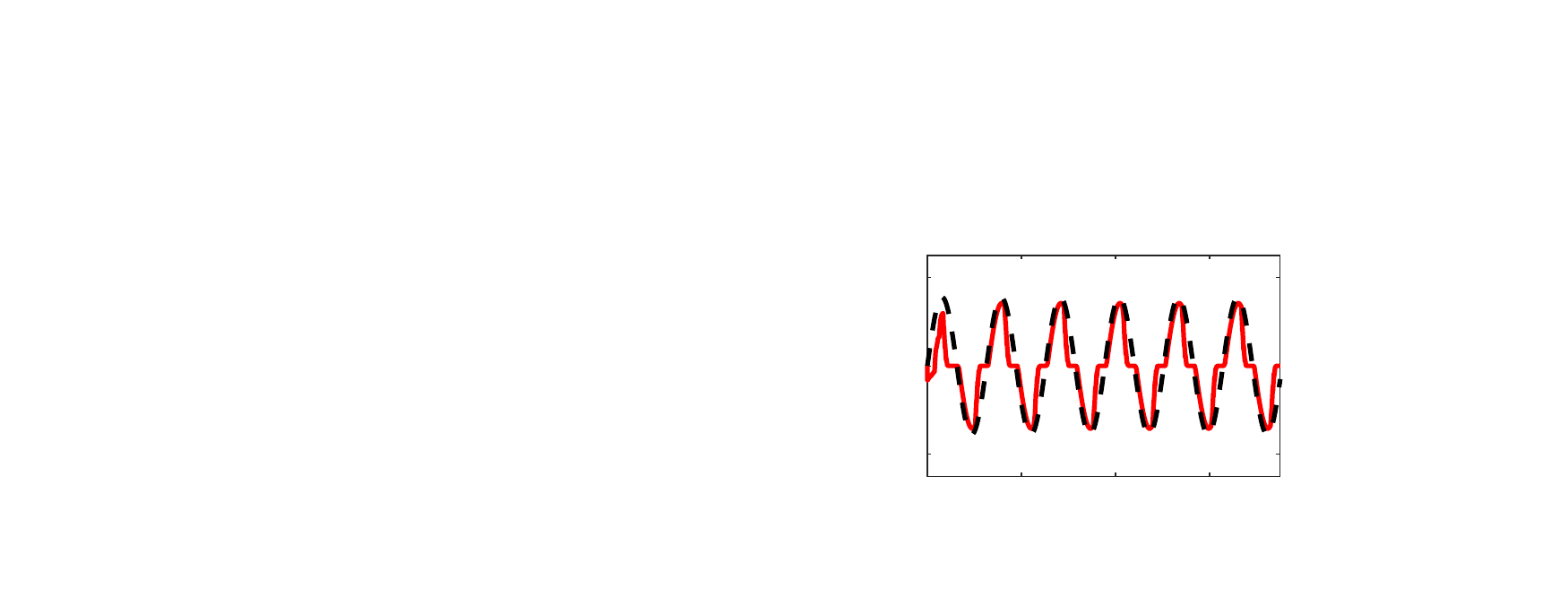}
    \end{subfigure}\\
    \begin{subfigure}[t]{0.32\linewidth} 
        \includegraphics[trim={2.22cm 0.4cm 18.8cm 4.8cm},clip, width=1.2\linewidth]{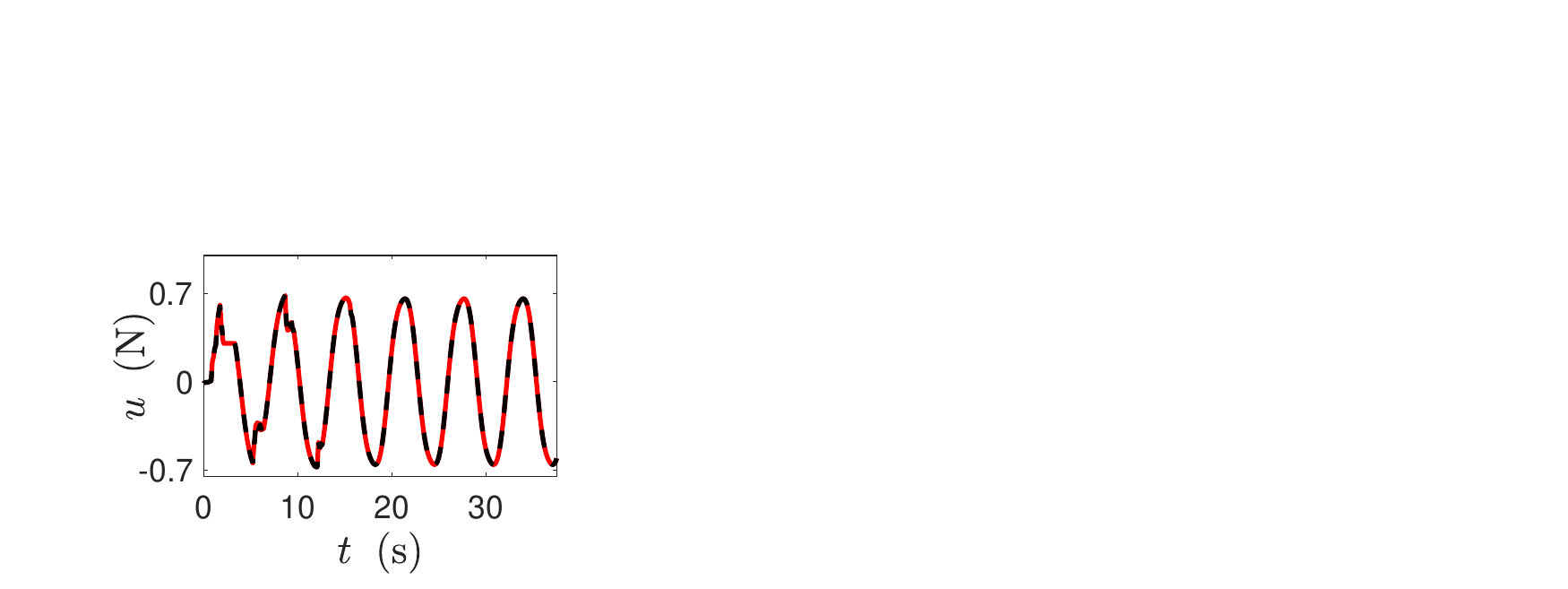}
    \end{subfigure}
    \hspace{0.22cm}
    \begin{subfigure}[t]{0.32\linewidth} 
        \includegraphics[trim={10.56cm 0.4cm 12.2cm 4.8cm},clip, width=0.96\linewidth]{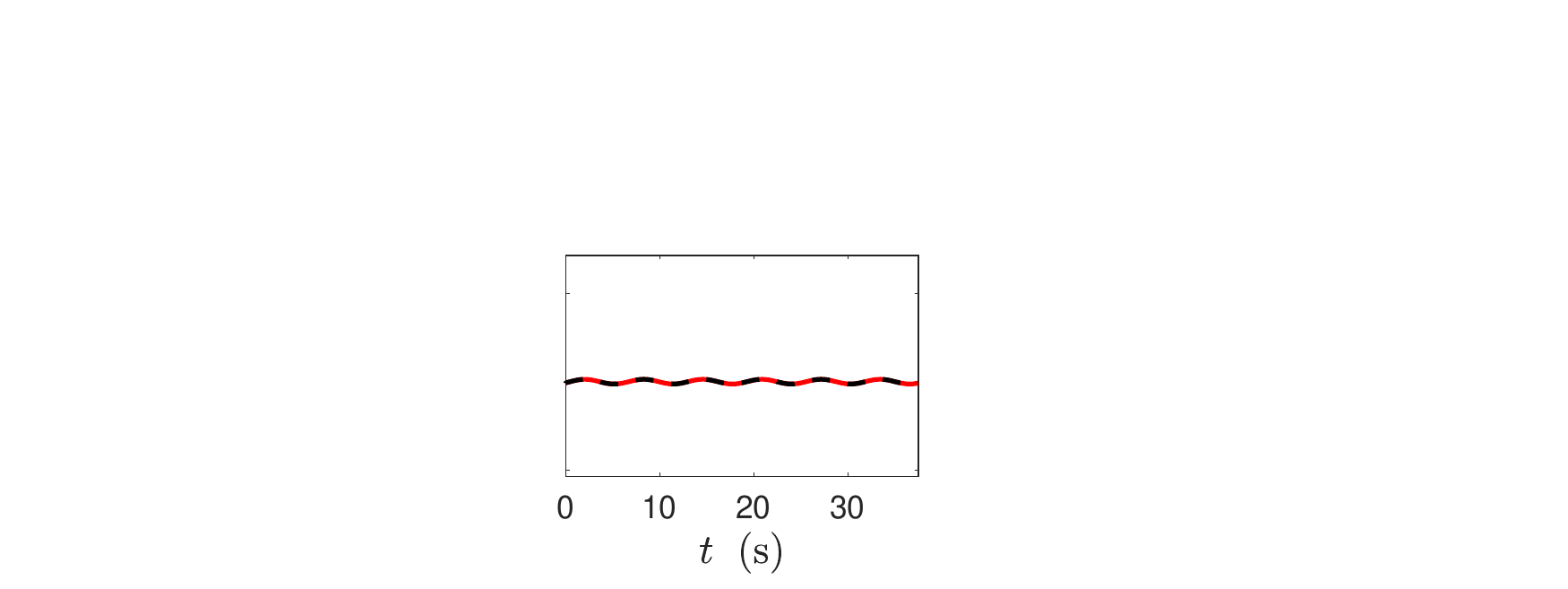}
    \end{subfigure}
    \hspace{-0.35cm}
    \begin{subfigure}[t]{0.32\linewidth} 
        \includegraphics[trim={17.41cm 0.4cm 5.4cm 4.8cm},clip, width=0.95\linewidth]{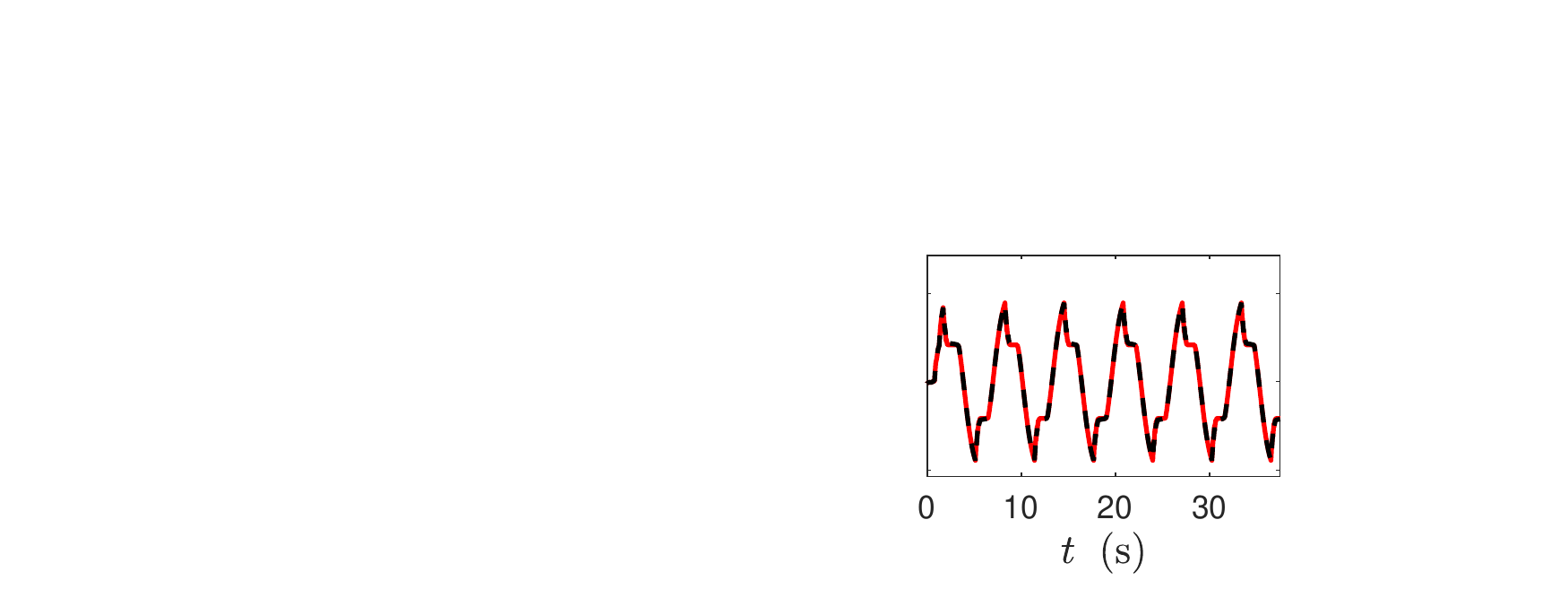}
    \end{subfigure}
    \caption{$\gamma$, $\dot{\gamma}$ and $u$ for Cases 1, 2 and 3. Note that $\gamma_{\rm{d}}$, $\dot{\gamma}_{\rm{d}}$ and $u_{\rm{d}}$ are shown with dashed lines.}
    \label{fig:pend:state+control}
    \end{figure}
    \vspace{-2mm}

    Figure \ref{fig:pend:state+control} shows $\gamma$, $\dot{\gamma}$ and $u$, where $x(0)=[0.1745 \hspace{2mm} 0]^{\rm{T}}$. In Case 1 it is shown that $\lim_{t \rightarrow \infty} \gamma(t) = \gamma_{d}$ and $\lim_{t \rightarrow \infty} \dot{\gamma}(t) = \dot{\gamma}_{d}$, unlike in Case 2. In Case 3, $\gamma$ and $\dot{\gamma}$ diverges from $\gamma_{d}$ and $\omega_{d}$ near to $\gamma = \pm \frac{\pi}{4}$ rad.

    Figure \ref{fig:pend:safety} shows that, for all  $t\geq0$, $\psi_{0}$, $\psi_{1}$ and $\psi$ remain nonnegative in all cases. In Case 1, the safety constraint is not active in steady-state.  In Case 2, for all $t \geq 0$, $\psi(x(t)) > 0$. Finally, in Case 3, the safety constraint is periodically activated due to its inherent conservativeness, explaining why $\gamma$ and $\dot{\gamma}$ diverge from $\gamma_{d}$ and $\omega_{d}$ near to $\gamma = \pm \frac{\pi}{4}$ rad.
    

    \begin{figure}[H]
    \centering
    \begin{subfigure}[t]{0.32\linewidth} 
        \includegraphics[trim={1.6cm 2.1cm 18.8cm 4.1cm},clip, width=1.245\linewidth]{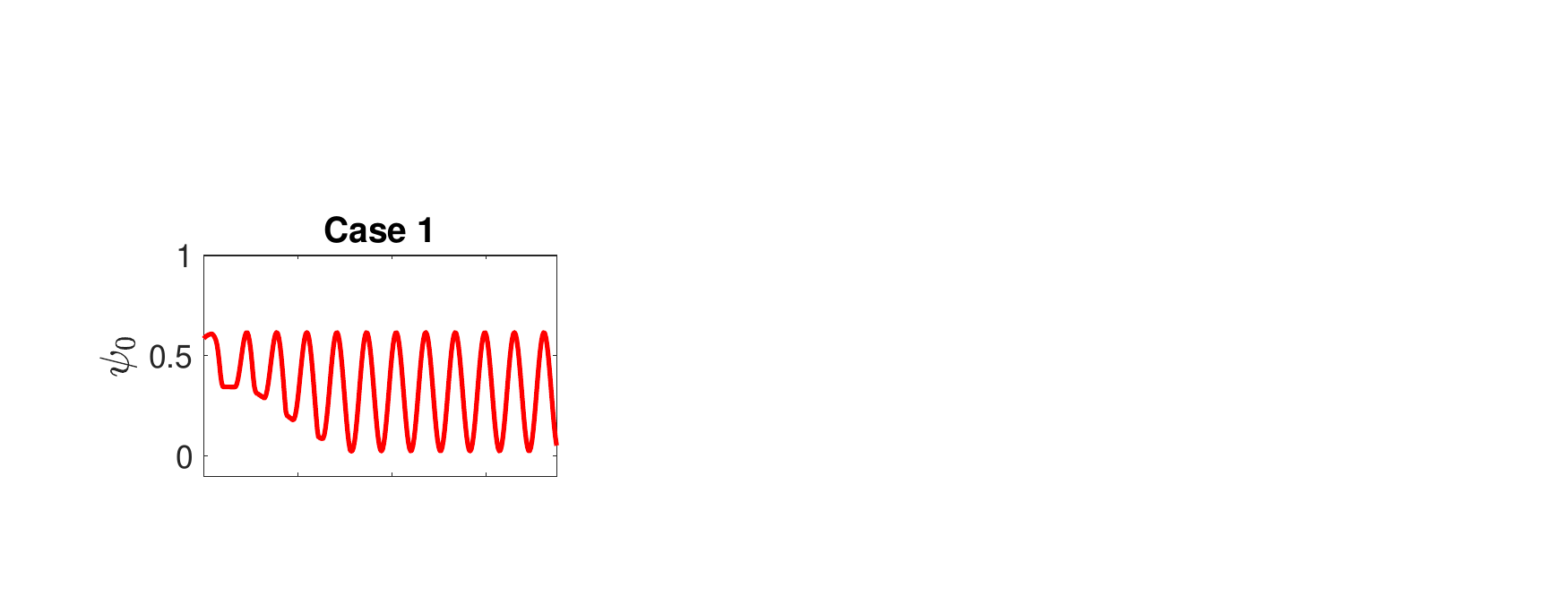}
    \end{subfigure}
    \hspace{0.34cm}
    \begin{subfigure}[t]{0.32\linewidth} 
        \includegraphics[trim={10.56cm 2.1cm 12.2cm 4.1cm},clip, width=0.925\linewidth]{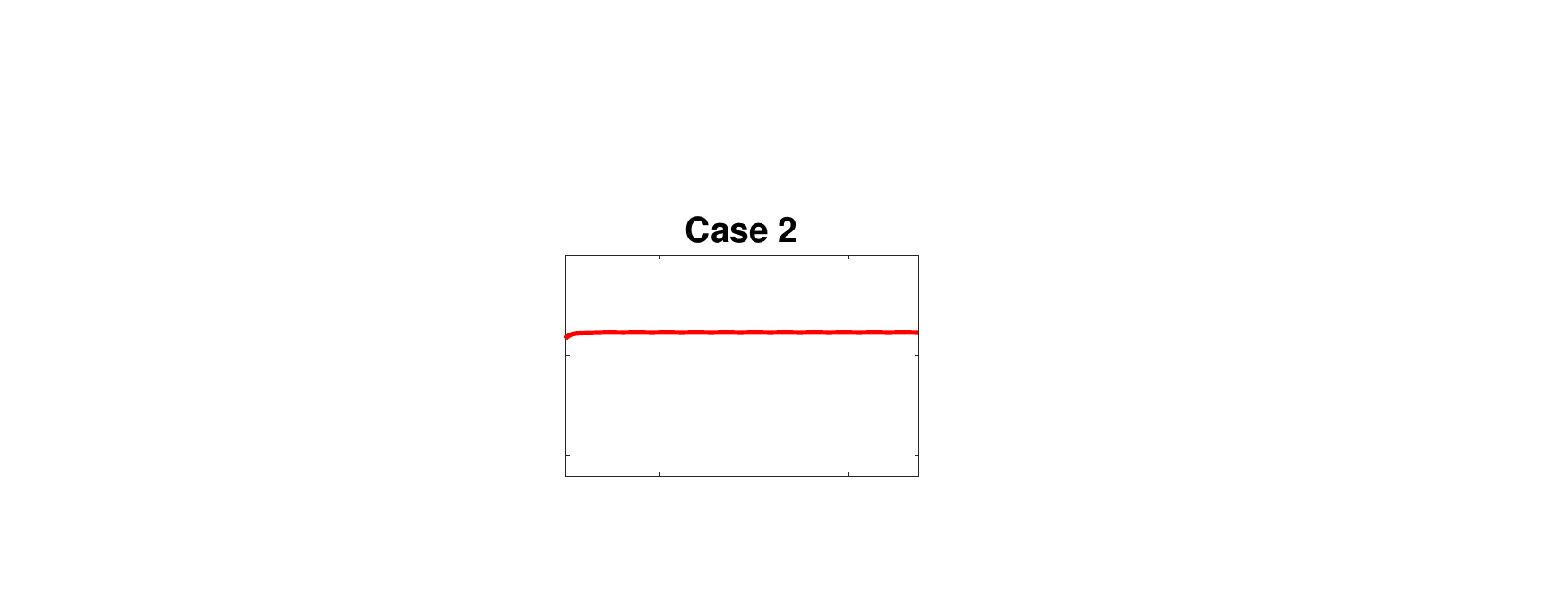}
    \end{subfigure}
    \hspace{-0.44cm}
    \begin{subfigure}[t]{0.32\linewidth} 
        \includegraphics[trim={17.41cm 2.1cm 5.4cm 4.1cm},clip, width=0.925\linewidth]{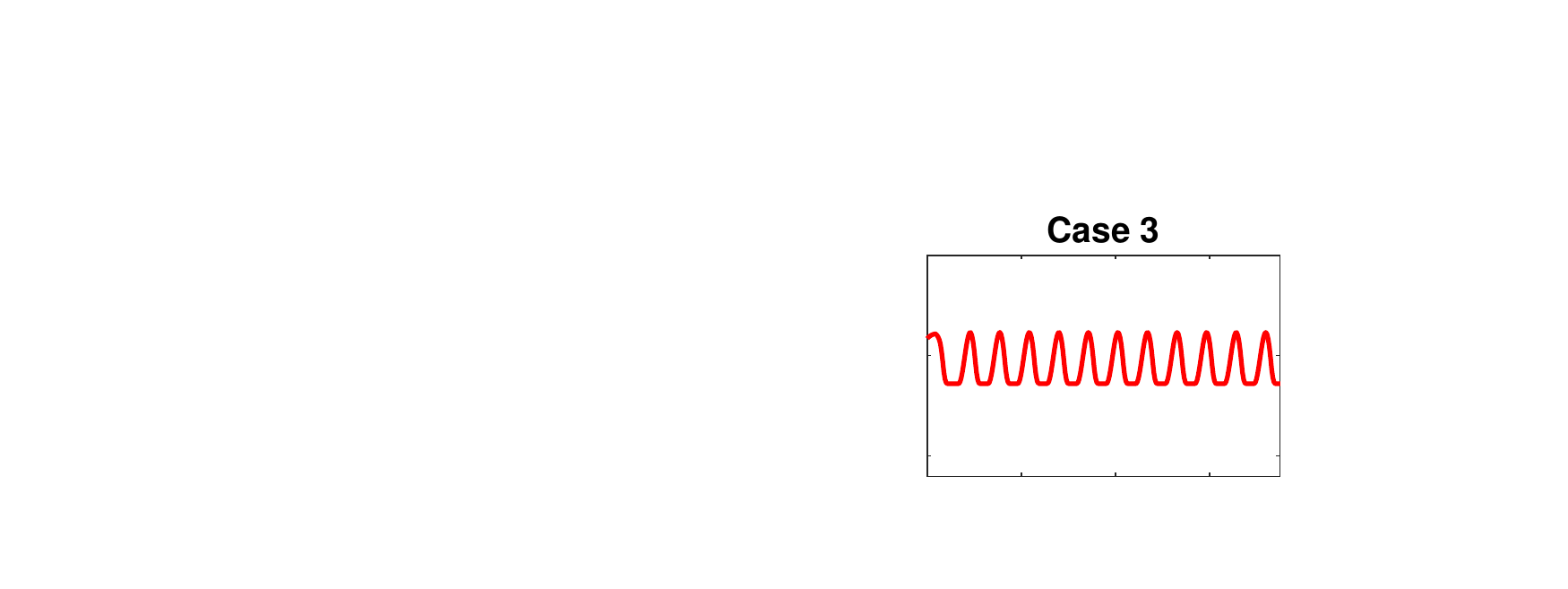}
    \end{subfigure}\\
    \begin{subfigure}[t]{0.32\linewidth} 
        \includegraphics[trim={1.6cm 2.1cm 18.8cm 4.8cm},clip, width=1.245\linewidth]{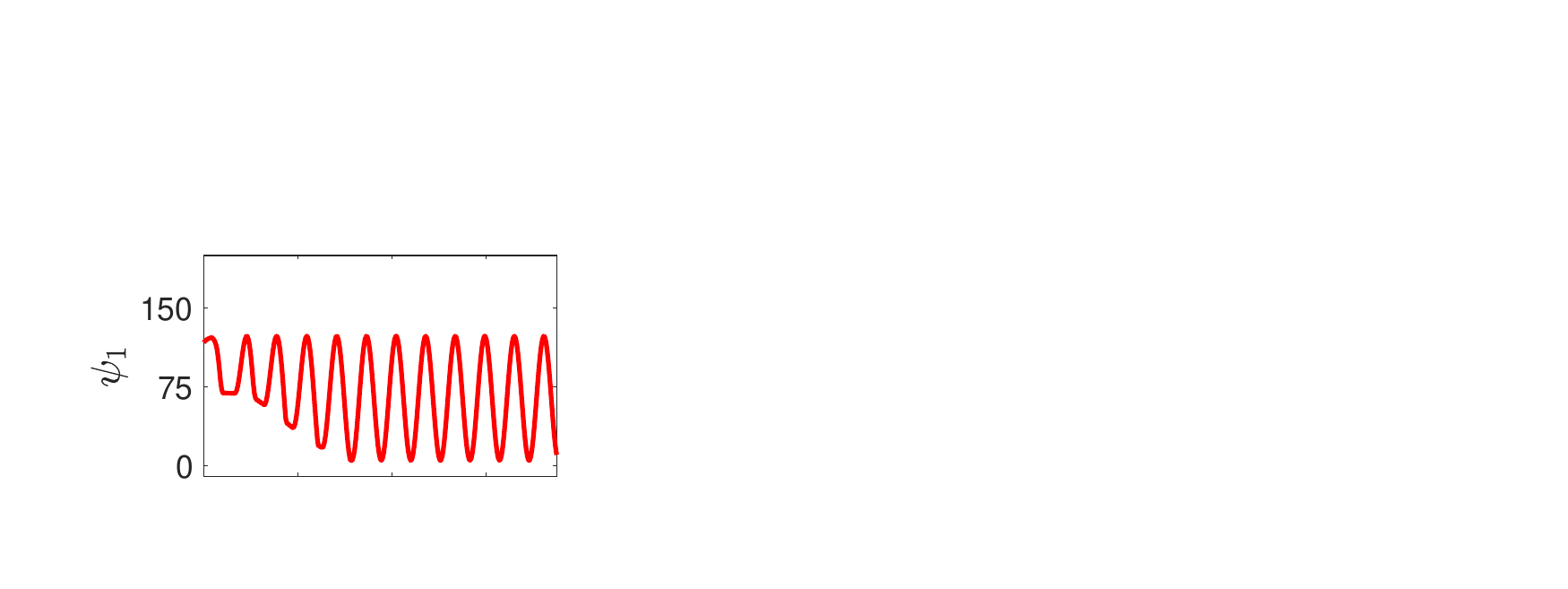}
    \end{subfigure}
    \hspace{0.34cm}
    \begin{subfigure}[t]{0.32\linewidth} 
        \includegraphics[trim={10.56cm 2.1cm 12.2cm 4.8cm},clip, width=0.925\linewidth]{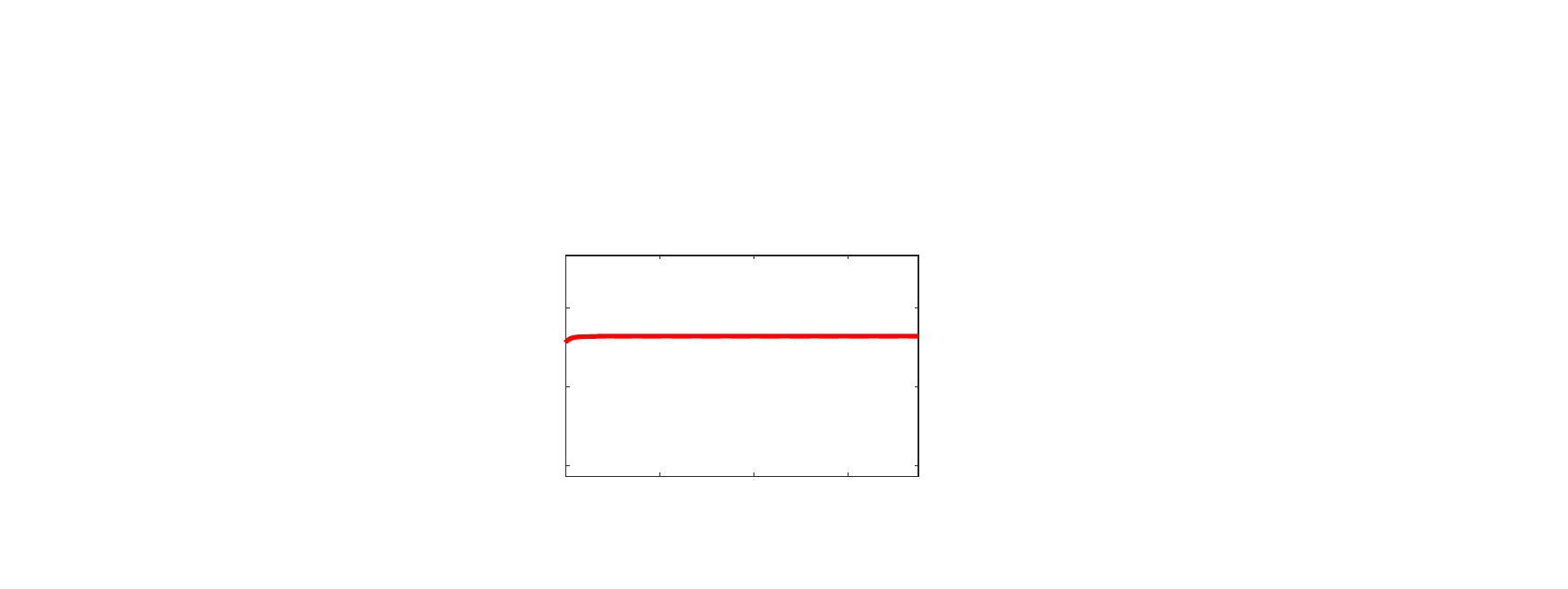}
    \end{subfigure}
    \hspace{-0.44cm}
    \begin{subfigure}[t]{0.32\linewidth} 
        \includegraphics[trim={17.41cm 2.1cm 5.4cm 4.8cm},clip, width=0.925\linewidth]{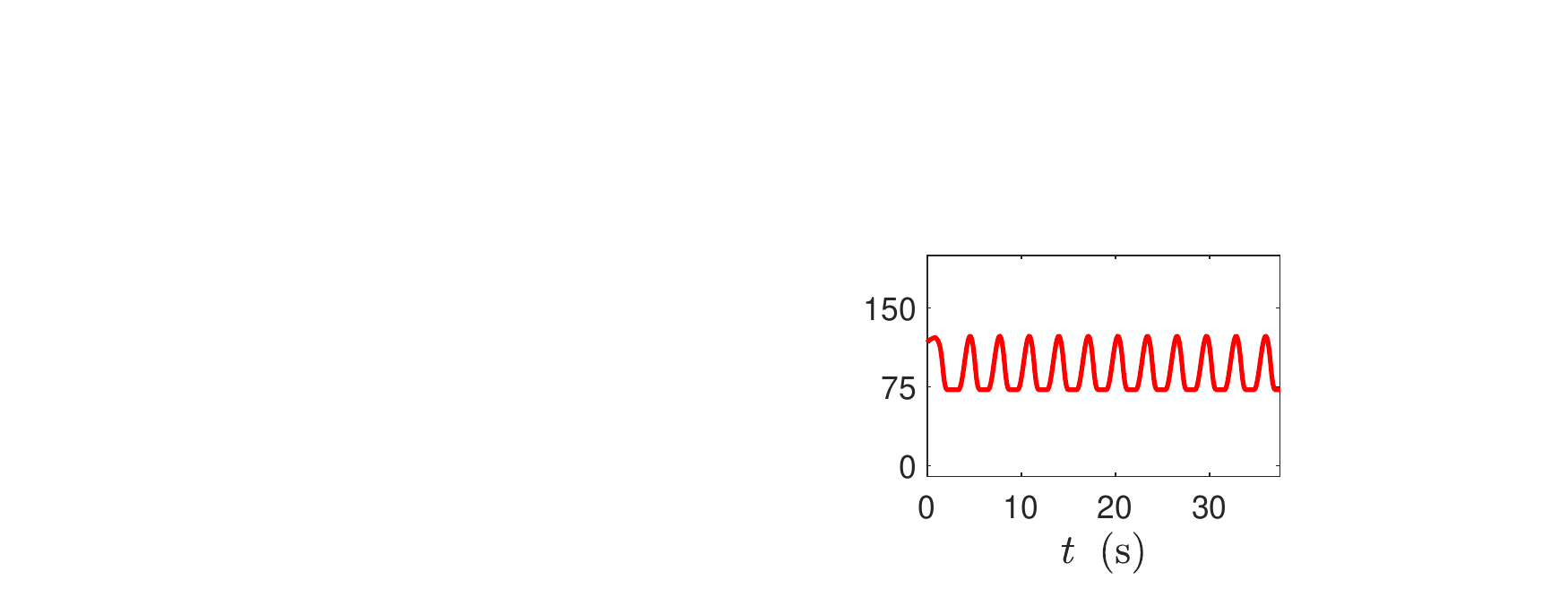}
    \end{subfigure}\\
    \begin{subfigure}[t]{0.32\linewidth} 
        \includegraphics[trim={1.6cm 0.4cm 18.8cm 4.8cm},clip, width=1.245\linewidth]{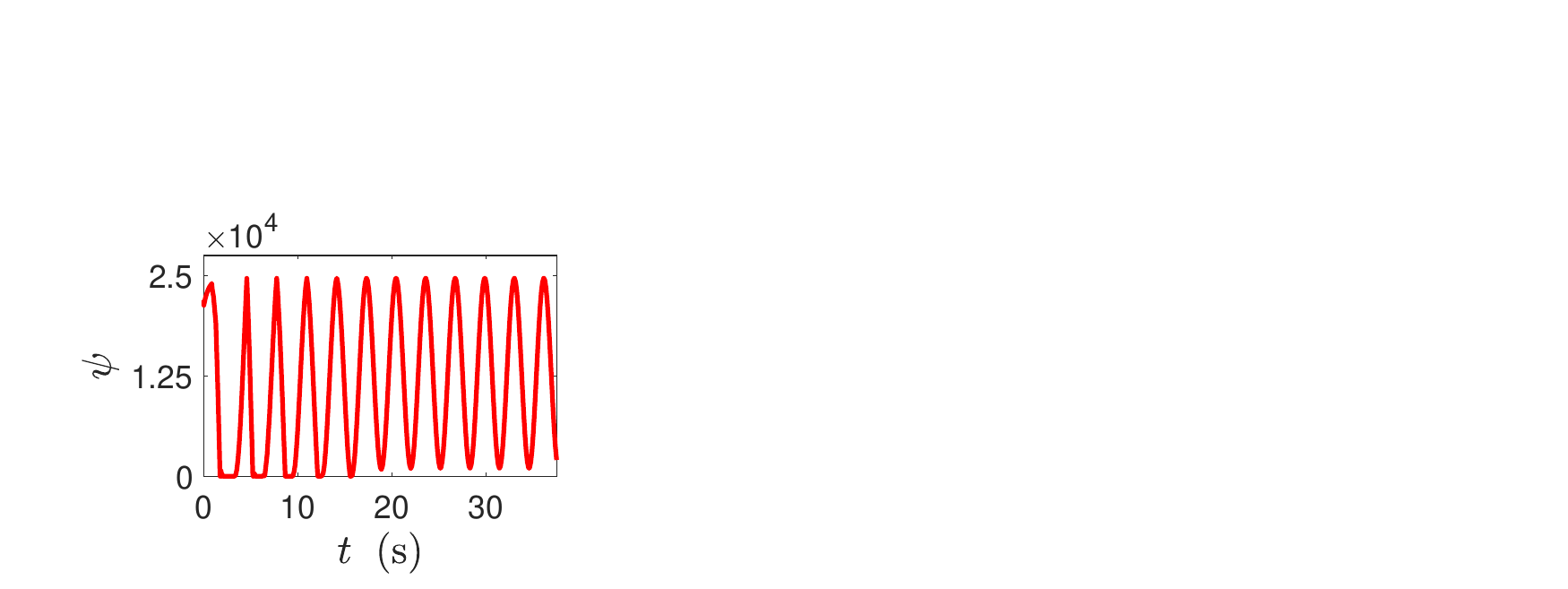}
    \end{subfigure}
    \hspace{0.34cm}
    \begin{subfigure}[t]{0.32\linewidth} 
        \includegraphics[trim={10.56cm 0.4cm 12.2cm 4.8cm},clip, width=0.925\linewidth]{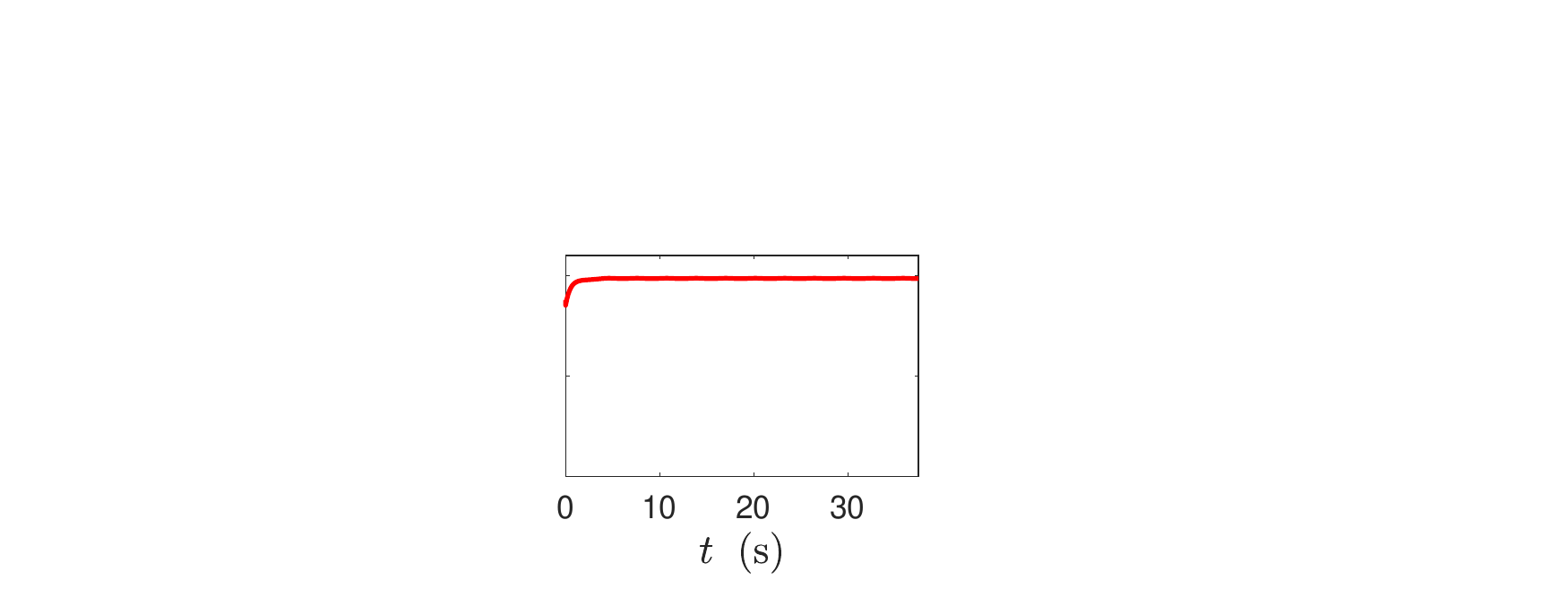}
    \end{subfigure}
    \hspace{-0.44cm}
    \begin{subfigure}[t]{0.32\linewidth} 
        \includegraphics[trim={17.41cm 0.4cm 5.4cm 4.71cm},clip, width=0.925\linewidth]{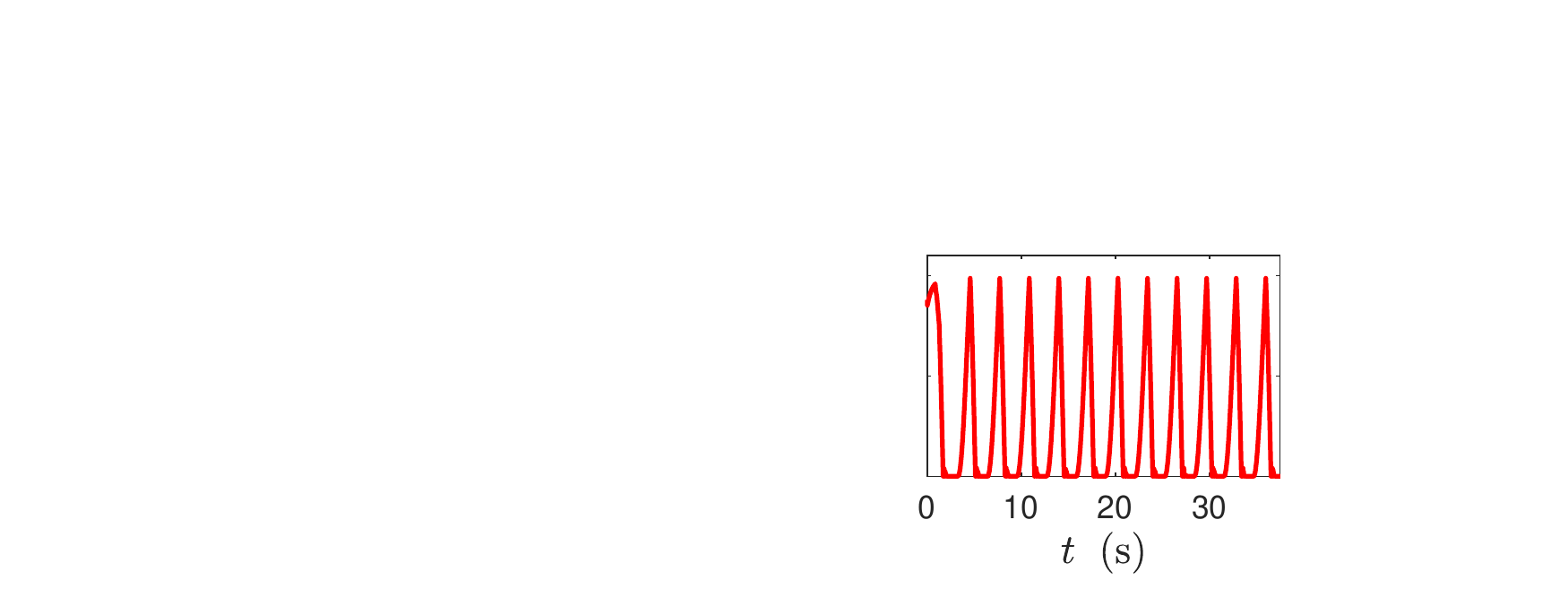}
    \end{subfigure}
    \caption{$\psi_{0}$, $\psi_{1}$ and $\psi$ for Cases 1, 2 and 3. }
    \label{fig:pend:safety}
    \end{figure}

    Figures \ref{fig:pend:est} and \ref{fig:pend:est_error} show that for Cases 1 and 3, $\lim_{t\rightarrow \infty} \theta(t) = \theta_{*}$ and $\lim_{t\rightarrow \infty} \nu (t) = 0$, unlike in Case 2.

\section{Nonholonomic mobile robot}

    
    Consider the nonholonomic differential drive mobile robot modeled by \eqref{eq:dyn}, where
    \begin{equation}
        f(x)=
        \left[\hspace{-1mm}
        \begin{array}{c}
            v\cos{\gamma}-l_{d}\omega\sin{\gamma} \\
            v\sin{\gamma}+l_{d}\omega\cos{\gamma} \\
            \omega \\
            0 \\
            0
        \end{array}
        \hspace{-1mm}\right],  \hspace{1mm}
        g(x)=
        \left[\hspace{-1mm}
        \begin{array}{cc}
            0 & 0 \\
            0 & 0 \\
            0 & 0\\
            g_{1}(x) & g_{2}(x)
        \end{array}
        \hspace{-1mm}\right],
        \nonumber
    \end{equation} 
    \begin{equation}
    \phi(x)=
    \left[\hspace{-1mm}
    \begin{array}{c}
        0 \\
        0 \\
        0 \\
        \phi_{4}(x) \\
        \phi_{5}(x)
    \end{array}
    \hspace{-1mm}\right], \hspace{0mm}
    x = \left[
        \begin{array}{c}
            q_{\rm{x}}\\
            q_{\rm{y}}\\
            \gamma\\
             v \\
             \omega
        \end{array}
        \right], \hspace{0mm}
        u = \left[\hspace{-1mm}
        \begin{array}{c}
            u_{\rm{r}}\\
            u_{\rm{l}}
        \end{array}
        \hspace{-1mm}\right], \hspace{0mm}
        \theta_{*}=\left[\hspace{-1mm}
        \begin{array}{c}
            k_{\rm{b1}}\\
            k_{\rm{b2}}\\
            \epsilon \\
            \kappa
        \end{array}
        \hspace{-1mm}\right],
    \nonumber
    \end{equation}
    where 
    \begin{equation}
    \phi_{4}(x) \triangleq \bigg[\displaystyle -\frac{k_{\rm{m}}\omega_{\rm{r}}}{mrR_{\rm{a}}} \hspace{2mm} \displaystyle -\frac{k_{\rm{m}}\omega_{\rm{l}}}{mrR_{\rm{a}}}  \hspace{2mm} \displaystyle -\frac{\omega_{\rm{r}}}{mr}-\frac{\omega_{\rm{l}}}{mr}  \hspace{2mm} \displaystyle  -g\sin{\gamma}\bigg],
    \nonumber
    \end{equation}
    \begin{equation}
    \phi_{5}(x) \triangleq
    \bigg[\displaystyle -\frac{k_{\rm{m}}l\omega_{\rm{r}}}{IrR_{\rm{a}}} \hspace{2mm} \displaystyle \frac{k_{\rm{m}}l\omega_{\rm{l}}}{IrR_{\rm{a}}} \hspace{2mm} \displaystyle -\frac{\omega_{\rm{r}}l}{Ir}+\frac{\omega_{\rm{l}}l}{Ir} \hspace{2mm} \displaystyle  0\bigg],
    \nonumber
    \end{equation} 
    \begin{equation}
        g_{1}(x)\triangleq
        \left[\hspace{-2mm}
        \begin{array}{cc}
            \displaystyle \frac{k_{\rm{m}}}{mrR_{\rm{a}}} & \displaystyle \frac{k_{\rm{m}}l}{IrR_{\rm{a}}}
        \end{array}
        \hspace{-2mm}\right]^{T}\hspace{-2mm}, \hspace{2mm}
        g_{2}(x)\triangleq
        \left[\hspace{-2mm}
        \begin{array}{cc}
            \displaystyle \frac{k_{\rm{m}}}{mrR_{\rm{a}}} & \displaystyle -\frac{k_{\rm{m}}l}{IrR_{\rm{a}}}
        \end{array}
        \hspace{-2mm}\right]^{T},
        \nonumber
    \end{equation} 
    \begin{equation}
         \omega_{\rm{r}}\triangleq \frac{2v+l\omega}{2r} \hspace{1cm},  \hspace{2mm} \hspace{1cm} \omega_{\rm{l}}\triangleq\frac{2v-l\omega}{2r},
        \nonumber
    \end{equation}
     and $[q_{\rm{x}} \hspace{2mm} q_{\rm{y}}]^{\rm{T}}$ denote the position of the tip of the robot in an orthogonal coordinate frame, $\gamma$ is the direction of the velocity vector, $v$ and $\omega$ are the velocity and angular velocity, $u_{\rm{r}}$ and $u_{\rm{l}}$ are the voltage of each motor, $k_{\rm{m}}=0.1$ N.m/Amp is the torque constant, $r=0.1$ m is the wheel radius, $l=0.5$ m is the distance between wheels, $l_{d}=0.25$ m is the distance from the center of mass to the tip of the vehicle, $R_{\rm{a}}=0.27$ ohms is the armature resistance, $m=10$ kg and $I=0.83$ $\rm{kg.m^{2}}$ are the vehicle mass and inertia, $\omega_{\rm{r}}$ and $\omega_{\rm{l}}$ are the angular velocity of the each wheel and $g$ is the gravity \cite{anvari2013non}. Moreover, $k_{\rm{b1}} = k_{\rm{b2}} = 0.0487$ V/(rad/sec) are the back-EMF constants of each motor, $\epsilon = 0.025 $ is a friction coefficient, and $\kappa = 0.5$ corresponds to an angle of inclination of the ground of $30 ^{\circ}$.
    
    \vspace{-2mm}
    \begin{figure}[H]
    \centering
    \begin{subfigure}[t]{0.32\linewidth} 
        \includegraphics[trim={1.85cm 0.2cm 19.25cm 0cm},clip, width=1.2\linewidth]{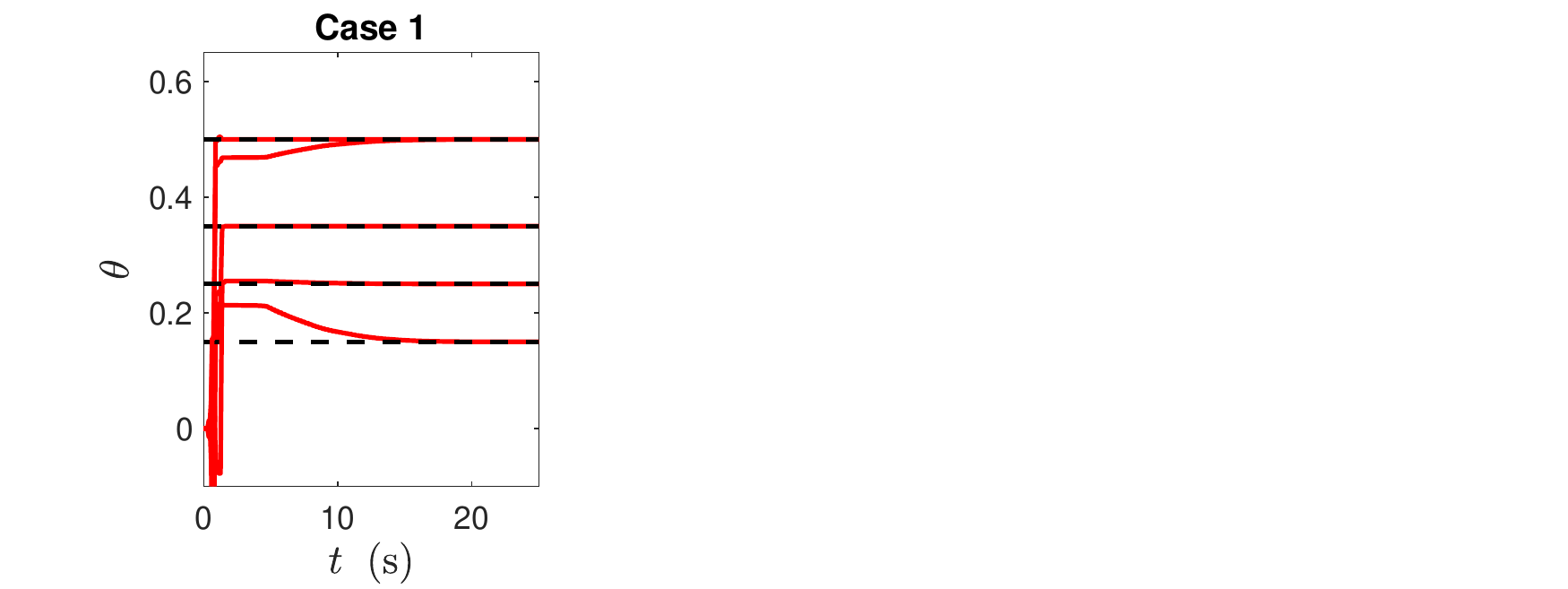}
    \end{subfigure}
    \hspace{0.3cm}
    \begin{subfigure}[t]{0.32\linewidth} 
        \includegraphics[trim={12cm 0.2cm 11.1cm 0cm},clip, width=0.92\linewidth]{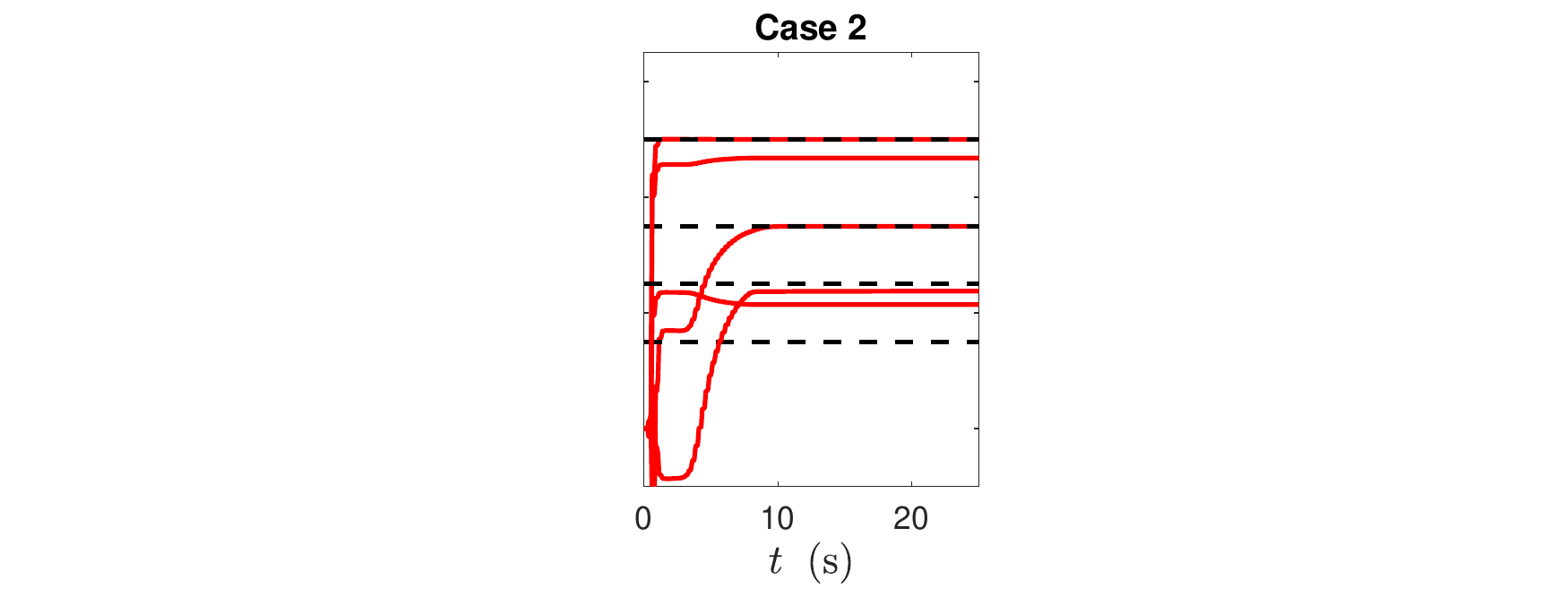}
    \end{subfigure}
    \hspace{-0.4cm}
    \begin{subfigure}[t]{0.32\linewidth} 
        \includegraphics[trim={20.3cm 0.2cm 2.8cm 0cm},clip, width=0.92\linewidth]{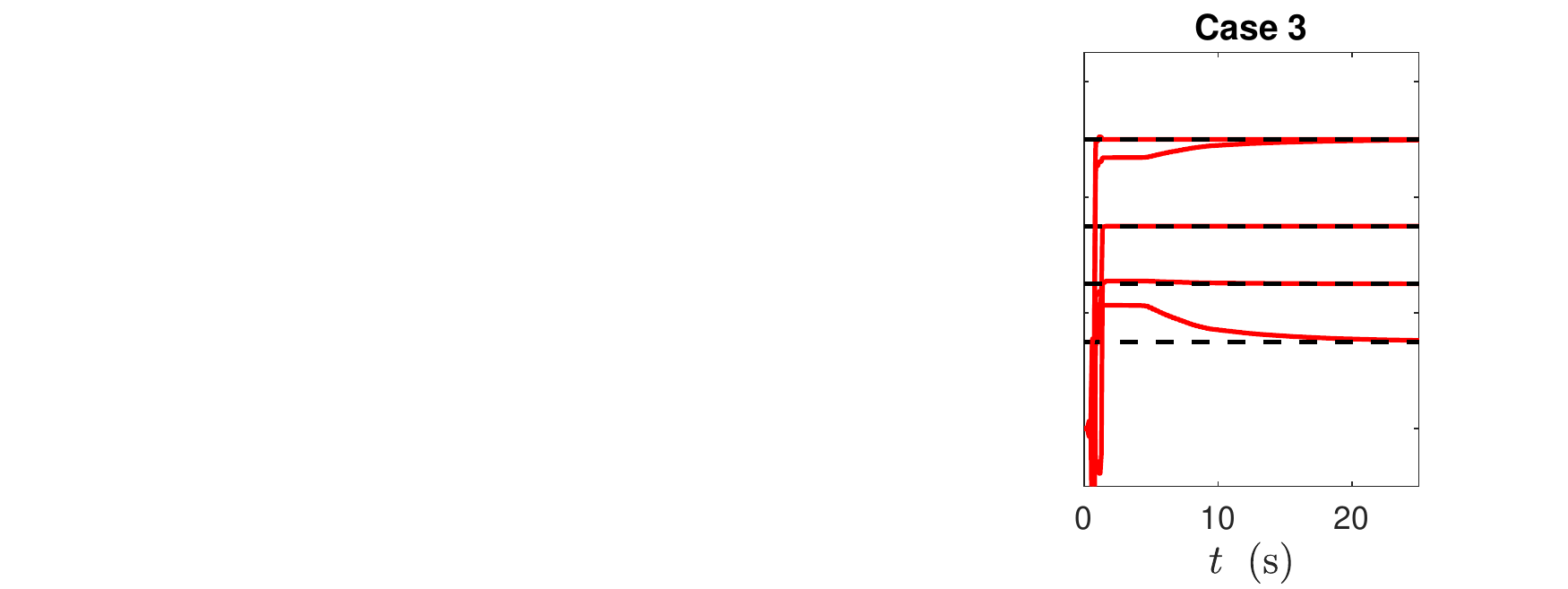}
    \end{subfigure}
    \caption{$\theta$ for Cases 1, 2 and 3. Note that $\theta_{*}$ is shown using dashed line.}
    \label{fig:pend:est}
    \end{figure}
    \vspace{-2mm}

    \vspace{-2mm}
    \begin{figure}[H]
    \centering
    \begin{subfigure}[t]{0.32\linewidth} 
        \includegraphics[trim={2.6cm 0.2cm 18.8cm 4.1cm},clip, width=1.17\linewidth]{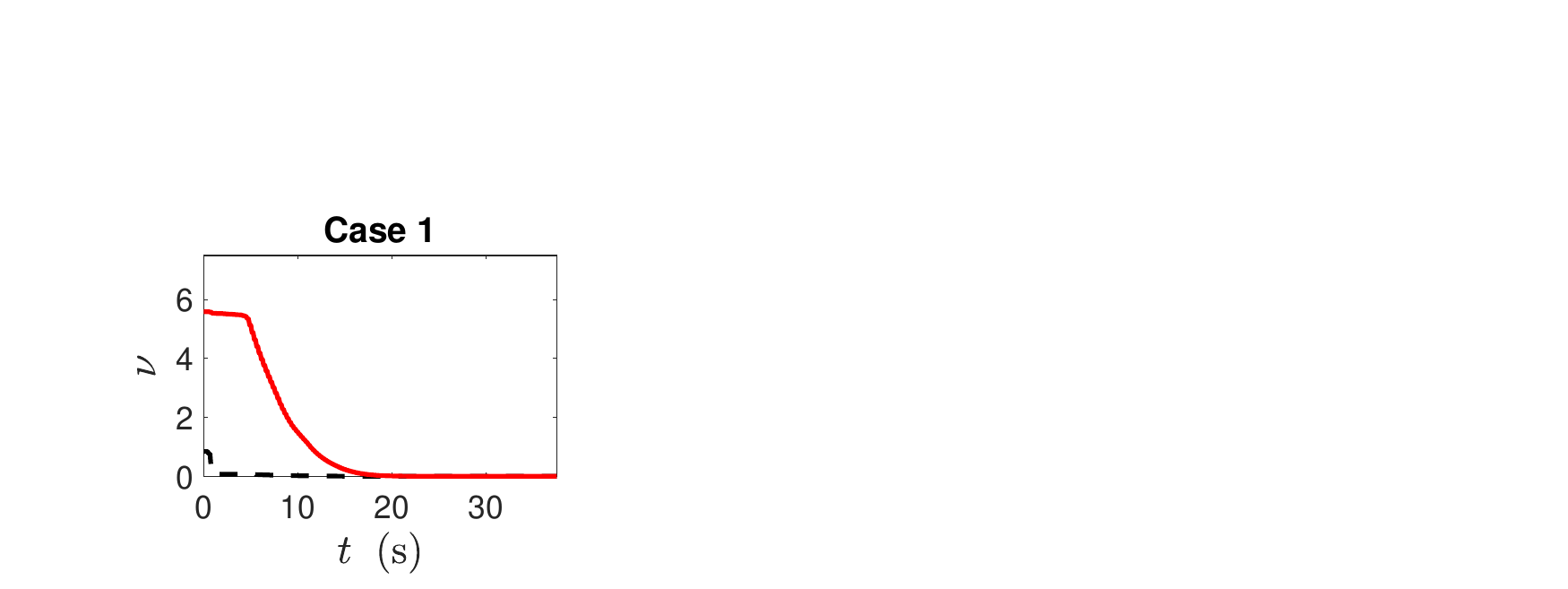}
    \end{subfigure}
    \hspace{0.15cm}
    \begin{subfigure}[t]{0.32\linewidth} 
        \includegraphics[trim={10.57cm 0.2cm 12.2cm 4.1cm},clip, width=0.975\linewidth]{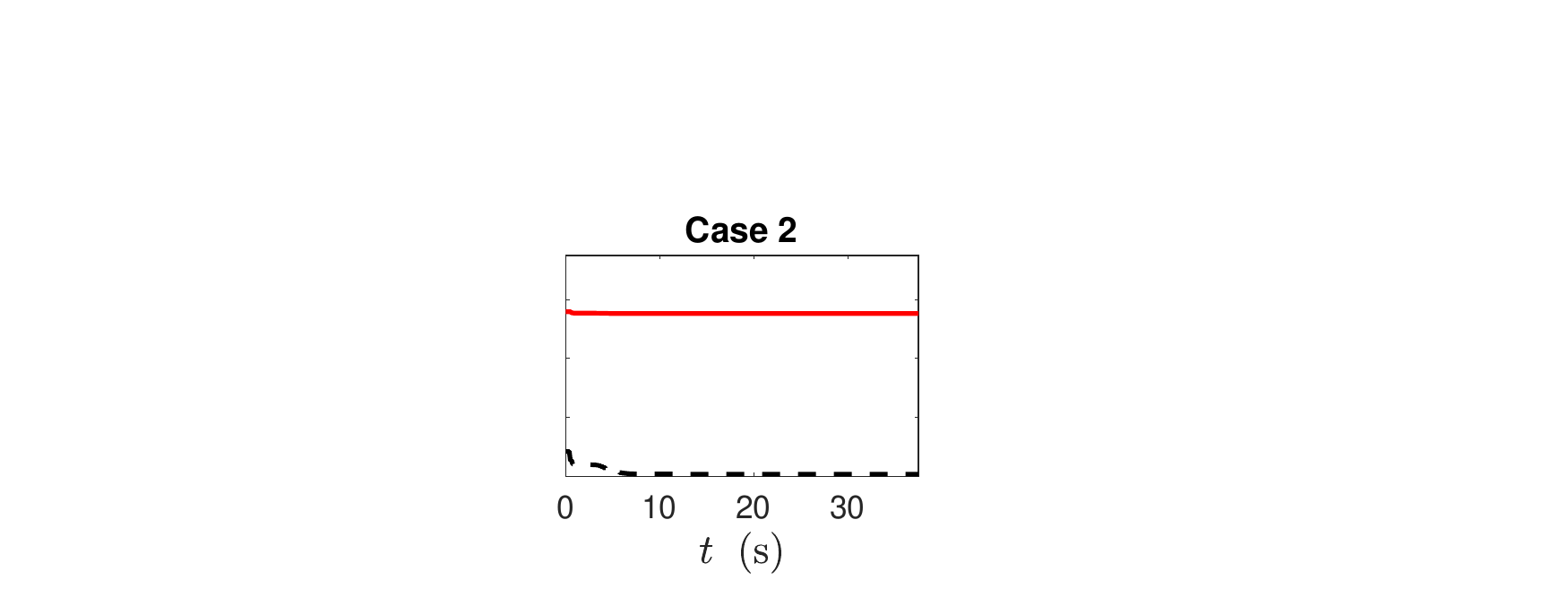}
    \end{subfigure}
    \hspace{-0.33cm}
    \begin{subfigure}[t]{0.32\linewidth} 
        \includegraphics[trim={17.4cm 0.2cm 5.4cm 4.1cm},clip, width=0.972\linewidth]{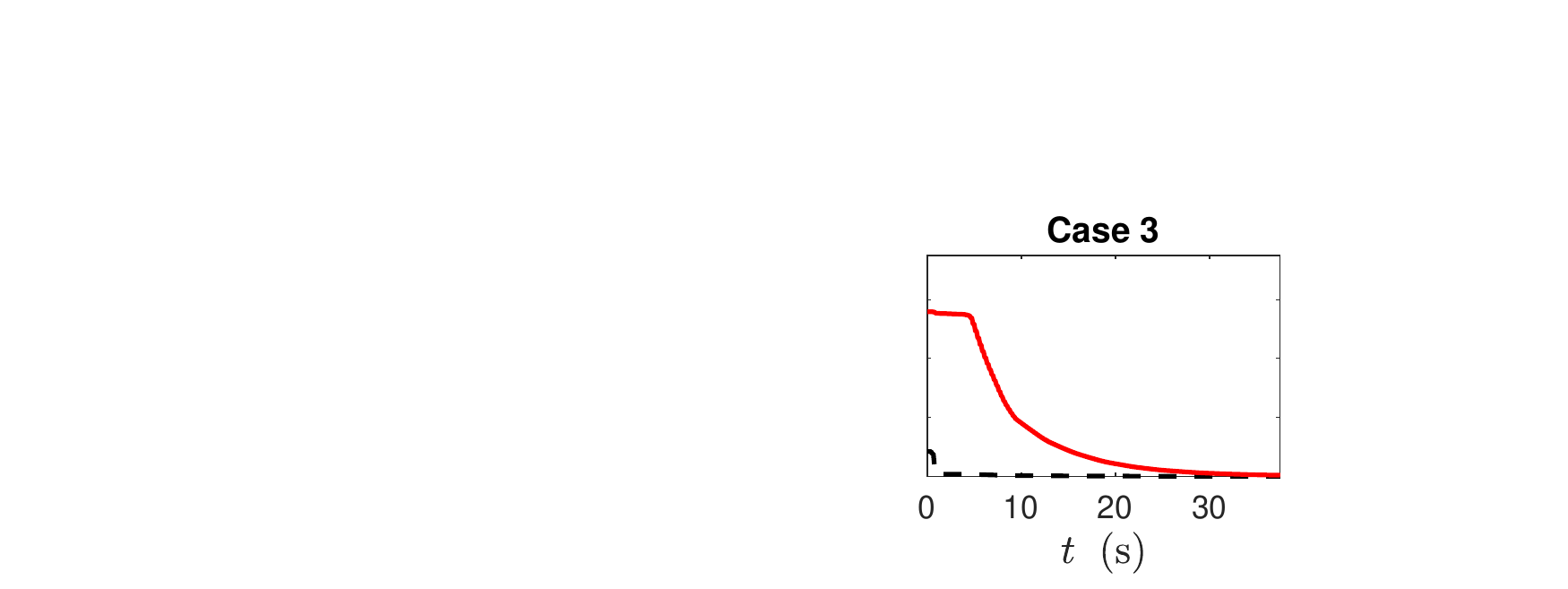}
    \end{subfigure}
    \caption{$\nu$ for Cases 1, 2 and 3. Note that $||\tilde{\theta}_{k}||$ is shown with dashed line. }
    \label{fig:pend:est_error}
    \end{figure}
    \vspace{-2mm}

    \vspace{-2mm}
    Let $\theta_{*}$ be in the set
    \begin{equation}
        \begin{aligned}
            \Theta = \left\{ \hat{\theta} \in \mathbb{R}^{4} \mid e_{1}^{\rm{T}}\hat{\theta} \in [0, 5], e_{2}^{\rm{T}}\hat{\theta} \in [0, 5], \right. \\
            \left. e_{3}^{\rm{T}}\hat{\theta} \in [0, 5], e_{4}^{\rm{T}}\hat{\theta} \in [0, 1] \right\}.
        \end{aligned}
        \nonumber
    \end{equation}    

    We implement \cref{eq:thetat,eq:ct,eq:tauk,eq:yhat,eq:ck1,eq:theta_k,eq:phik,eq:Omega,eq:Pk,eq:yk}, where $t_{k+1}-t_{k}= 0.1$ s, $k_{\rm{n}}=10$, $\theta_{0}=[0.1 \hspace{2mm} 0.1 \hspace{2mm} 0.1 \hspace{2mm} 0.1]^{\rm{T}}$, $\nu_{0}=\sup_{\hat{\theta}\in \Theta}||\theta_{0}-\hat{\theta}||$, $\sigma_{k}=0.001$ and $\eta=2$.
    
    The desired linear and angular velocity  \cite{rabiee2023automatica} are
    \begin{equation}
        v_{\rm{d}}\triangleq -\big(\mu_{1}+\mu_{2}\big)v-\big(1+\mu_{1}\mu_{2}\big)e_{1}+\frac{\mu_{1}^{2}}{l_{d}}e_{2}^{2}, \hspace{5mm} \omega_{\rm{d}}\triangleq -\frac{\mu_{1}}{l_{d}}e_{2},
        \nonumber
    \end{equation}
    where
    \begin{equation}
        e_{1}\triangleq (q_{\rm{x}}-q_{\rm{dx}})\cos{\gamma}+(q_{\rm{y}}-q_{\rm{dy}})\sin{\gamma},
        \nonumber
    \end{equation}
    \begin{equation}
        e_{2}\triangleq -(q_{\rm{x}}-q_{\rm{dx}})\sin{\gamma}+(q_{\rm{y}}-q_{\rm{dy}})\cos{\gamma},
        \nonumber
    \end{equation}
     $\mu_{1}=\mu_{2} =0.08$ and $q_{\rm{d}}=[q_{\rm{dx}} \hspace{2mm} q_{\rm{dy}}]^{\rm{T}}$ is the desired position of the tip the robot, with $q_{\rm{dx}} = 2.56$ m and $q_{\rm{dy}}=1.8$ m.
    
    Define $e_{\rm{a}} \triangleq v-v_{\rm{d}}$ and $e_{\rm{b}} \triangleq  \omega-\omega_{\rm{d}}$. Also, define
    \begin{multline}
        u_{\rm{d1}}(\hat{\theta})  \triangleq -\phi_{4}\hat{\theta} \hspace{2mm} +\\
        \frac{1}{1+\mu_{1}+\mu_{2}}\bigg(\frac{2\mu_{1}^{2}}{l_{d}}e_{2}\dot{e}_{2}-(1+\mu_{1}+\mu_{2})\dot{e}_{1}-K_{1}e_{\rm{a}} \bigg),
        \nonumber
    \end{multline}
    and
    \begin{equation}
        u_{\rm{d2}}(\hat{\theta})  \triangleq -\phi_{5}\hat{\theta} - \frac{\mu_{1}}{l_{d}}\dot{e}_{2}-K_{2}e_{\rm{b}},
        \nonumber
    \end{equation}
    where $K_{1}=10$ and $K_{2}=10$.
    
    Hence, the desired control is defined by $u_{\rm{d}}(\hat{\theta}) \triangleq [u_{\rm{dr}}(\hat{\theta}) \hspace{2mm} u_{\rm{dl}}(\hat{\theta})]^{\rm{T}}$, where 
    \begin{equation}
        u_{\rm{dr}}(\hat{\theta}) \triangleq
        \displaystyle \frac{mrR_{\rm{a}}}{2k_{\rm{m}}}u_{\rm{d1}}(\hat{\theta})+\frac{IrR_{\rm{a}}}{2k_{\rm{m}}l}u_{\rm{d2}}(\hat{\theta}),
        \nonumber
    \end{equation}
    \begin{equation}
        u_{\rm{dl}}(\hat{\theta}) \triangleq
        \displaystyle \frac{mrR_{\rm{a}}}{2k_{\rm{m}}}u_{\rm{d1}}(\hat{\theta})-\frac{IrR_{\rm{a}}}{2k_{\rm{m}}l}u_{\rm{d2}}(\hat{\theta}).
        \nonumber
    \end{equation}

    If $u(t)=u_{\rm{d}}(\theta_{*})$, then $\dot{v}=\dot{v}_{\rm{d}}-K_{1}(v-v_{\rm{d}})$ and $\dot{\omega}=\dot{\omega}_{\rm{d}}-K_{2}(\omega-\omega_{\rm{d}})$, which implies that $\lim_{t \rightarrow \infty} v(t) = v_{d}$ and $\lim_{t \rightarrow \infty} \omega(t) = \omega_{d}$.

    For $i\in\{1,2\}$, define
    \begin{equation}
        \displaystyle
        h_{i}(x) \triangleq 0.5\big(q_{\rm{x}}-c_{i,1}\big)^{2}+0.5\big(q_{\rm{y}}-c_{i,2}\big)^2-0.5\big(R_{i}+l_{d})^2
        \nonumber
    \end{equation}
    where $c_{1,1}=0.65$ m, $c_{1,2}=0.8$ m, $R_{1}=0.5$, $c_{2,1}=1.95$ m, $c_{2,2}=1.75$ m and $R_{2}=0.35$ m describe the physical obstacles, while $l_{d}$ accounts for the robot's dimension.

    The safe set is given by \eqref{eq:psi0}, where 
    \begin{equation}
        \psi_{0}(x) =-\frac{1}{\rho}\mathrm{log}\bigg(\sum_{i=1}^{2}\mathrm{\exp}\{-\rho h_{i}(x)\}\bigg),
        \nonumber
    \end{equation}
    and $\rho=3$ and $d=2$ \cite{rabiee2024closed}.

    We implement \cref{eq:control_alpha,eq:control_delta,eq:control_lambda,eq:dx,eq:omegax}, where $H=2I_{2}$, $\beta=20$ and $\alpha_{0}=5$ and $\alpha_{1}=2$. The control is updated at $200$ Hz using a zero-order hold structure.

    The cases of the first examples are revisited to analyze and show the impact of the estimate $\theta$ and adaptive bound $\nu$ in Case 1, and to examine the conservative safety constraint and the degraded performance in Cases 2 and 3.

    \begin{figure}[H]
        \centering
        \includegraphics[trim={4cm 8.4cm 4cm 9cm}, clip, width=\linewidth]{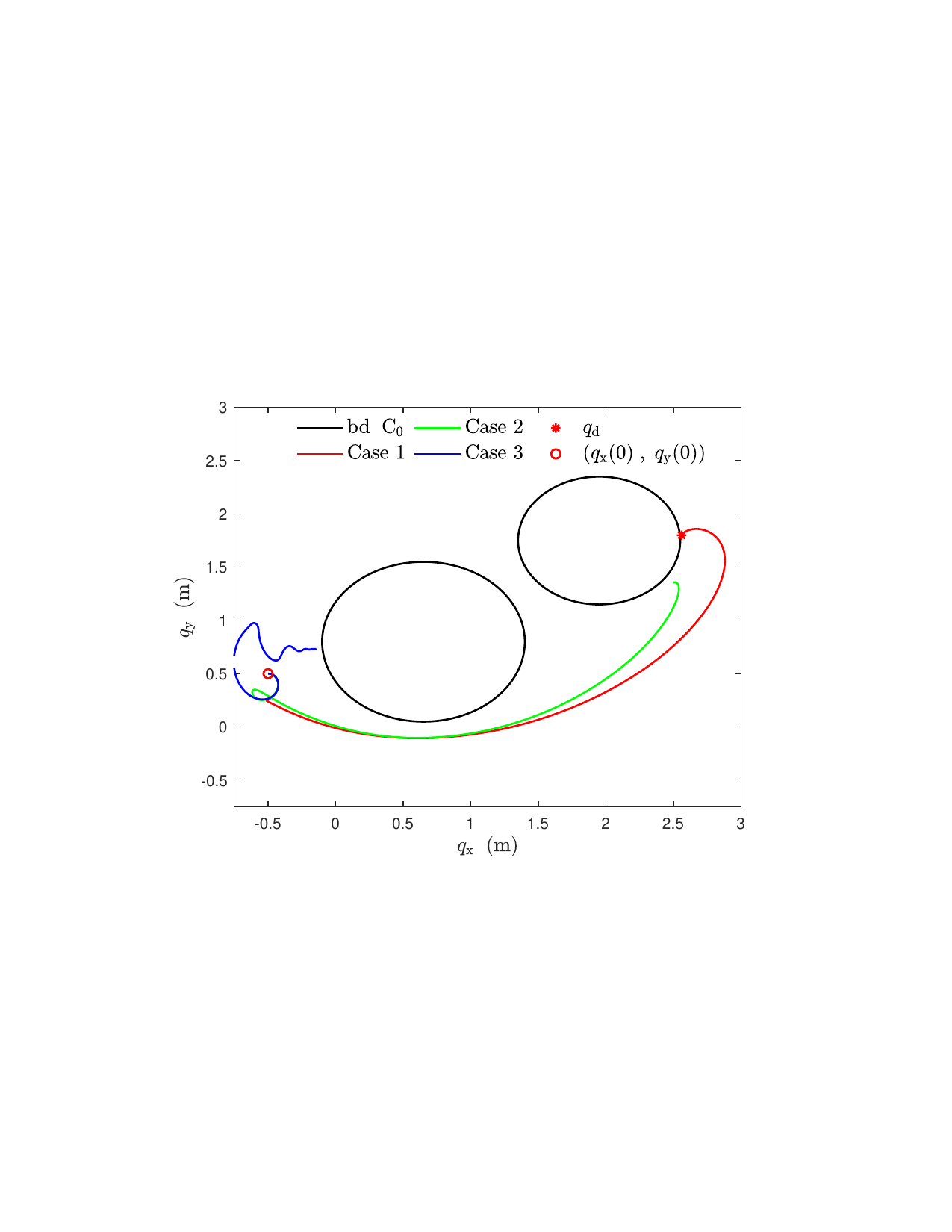}
        \caption{Phase-Portrait of 3 closed-loop trajectories for Cases 1, 2 and 3.}
        \label{fig:GR:phase_port}
    \end{figure}
    
    Figure \ref{fig:GR:phase_port} shows the closed-loop trajectories for $x_{0}=[-0.5 \hspace{2mm} 0.5 \hspace{2mm} 0 \hspace{2mm} 0 \hspace{2mm} 0]^{\rm{T}}$ with a goal at $q_{\rm{d}}=[2.56 \hspace{2mm} 1.8]^{\rm{T}}$ and the cases described above. 

   Figure \ref{fig:GR:states} shows the states for all 3 cases. Note that only for Case 1, $q_{\rm{x}}$, $q_{\rm{y}}$, $v$ and $\omega$ are driven to $q_{\rm{dx}}$, $q_{\rm{dy}}$, $v_{d}$ and $\omega_{d}$. However, in Cases 2 and 3 there is an offset in steady-state between the states and the desired trajectories.
    
   \begin{figure}[H]
   \begin{subfigure}[t]{0.32\linewidth} 
       \includegraphics[trim={1.4cm 7.33cm 18.2cm 9cm},clip, width=1.17\linewidth]{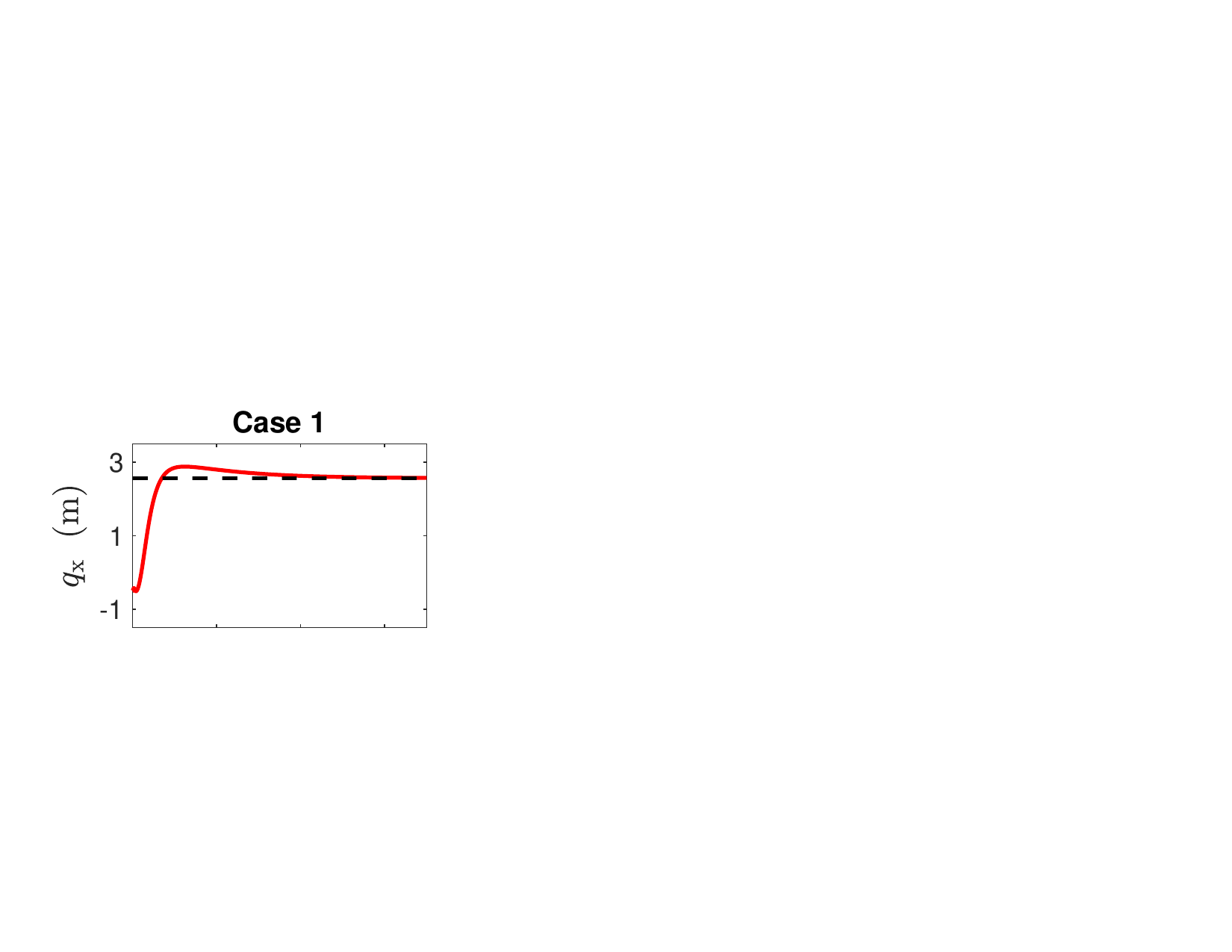}
   \end{subfigure}
    \hspace{0.25cm}
   \begin{subfigure}[t]{0.32\linewidth} 
       \includegraphics[trim={6.55cm 10.5cm 8.25cm 12cm},clip, width=0.95\linewidth]{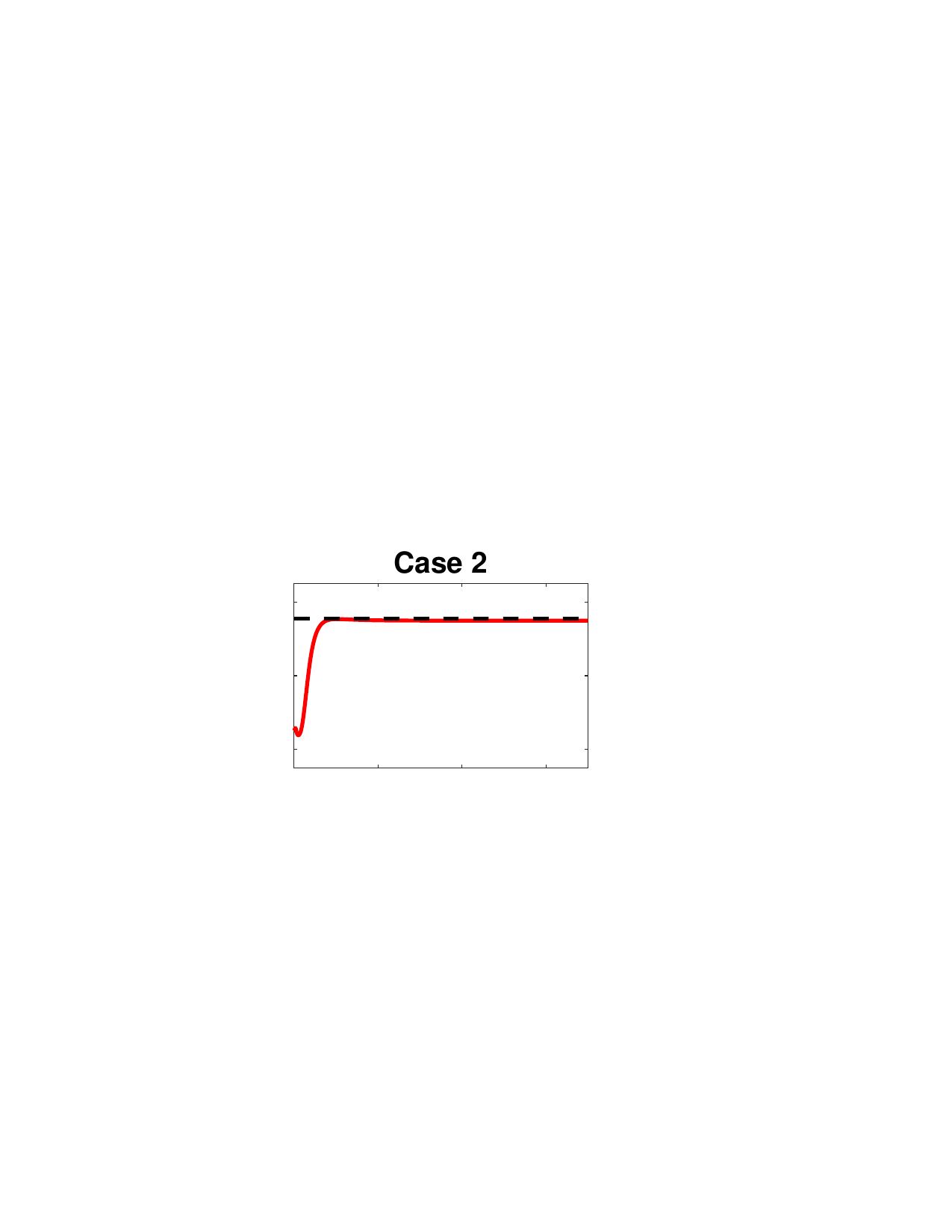}
   \end{subfigure}
   \hspace{-0.37cm}
   \begin{subfigure}[t]{0.32\linewidth} 
       \includegraphics[trim={13.3cm 10.475cm 1.4cm 12.5cm},clip, width=0.96\linewidth]{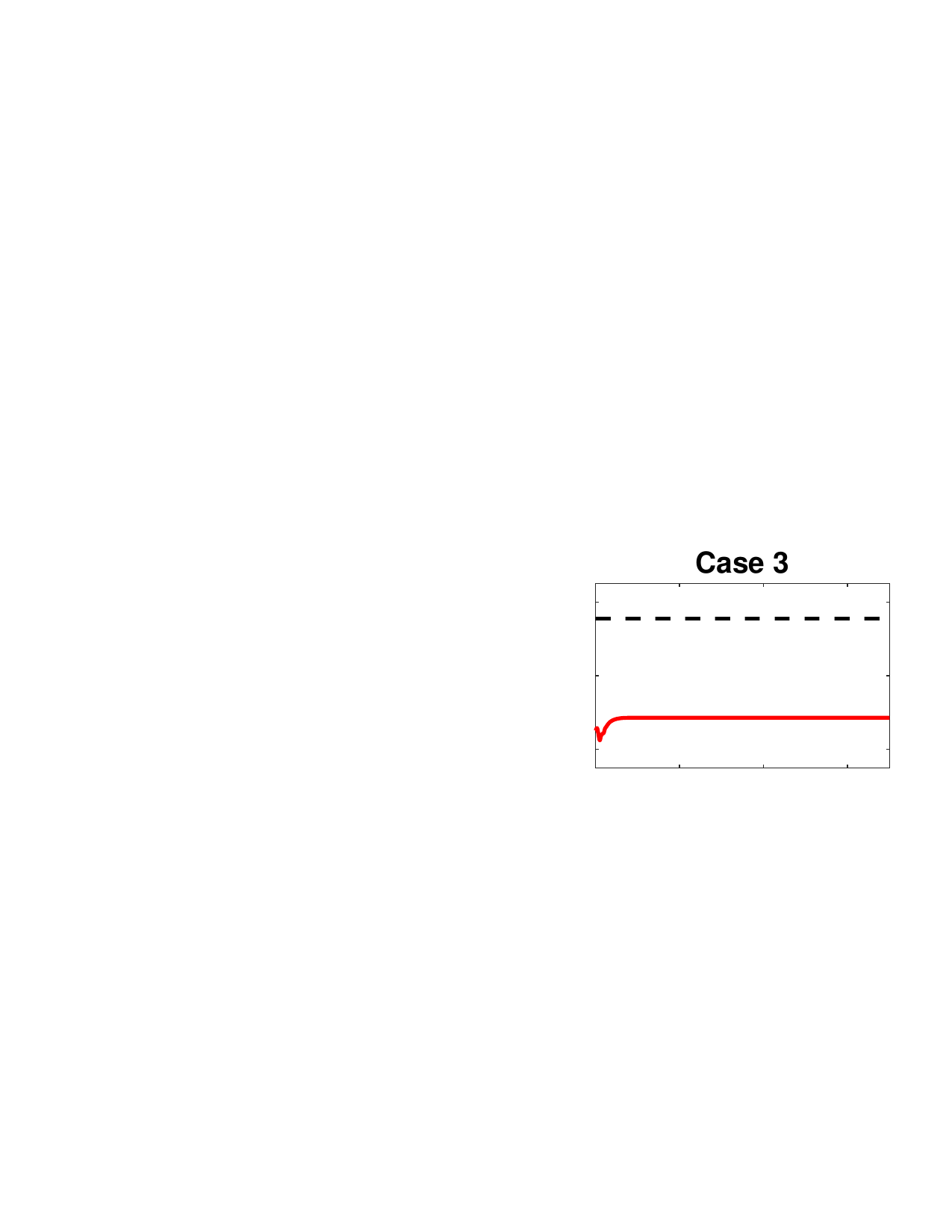}
   \end{subfigure}\\
   \begin{subfigure}[t]{0.32\linewidth} 
       \includegraphics[trim={1.4cm 7.33cm 18.2cm 10cm},clip, width=1.17\linewidth]{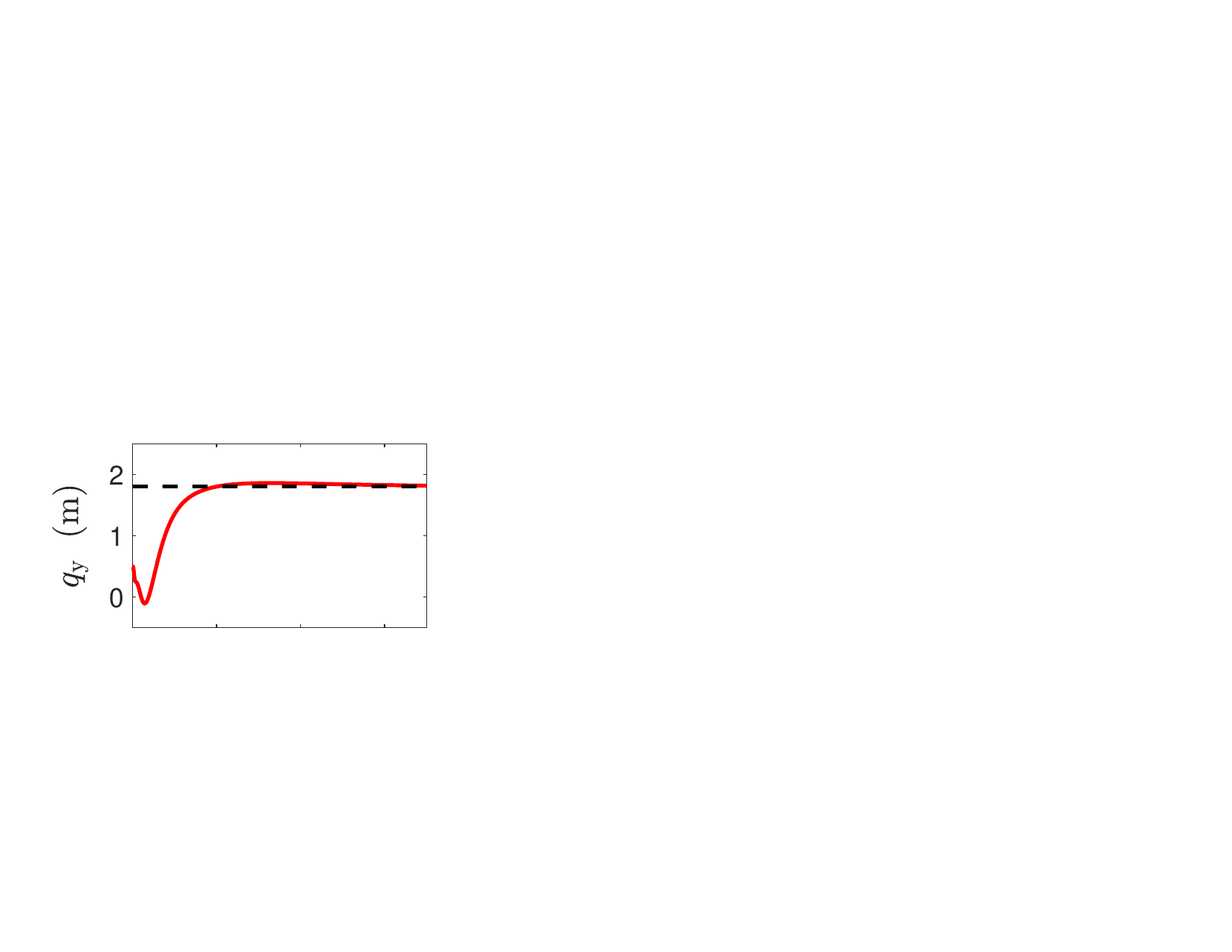}
   \end{subfigure}
   \hspace{0.25cm}
   \begin{subfigure}[t]{0.32\linewidth} 
       \includegraphics[trim={6.55cm 10.5cm 8.25cm 13.2cm},clip, width=0.95\linewidth]{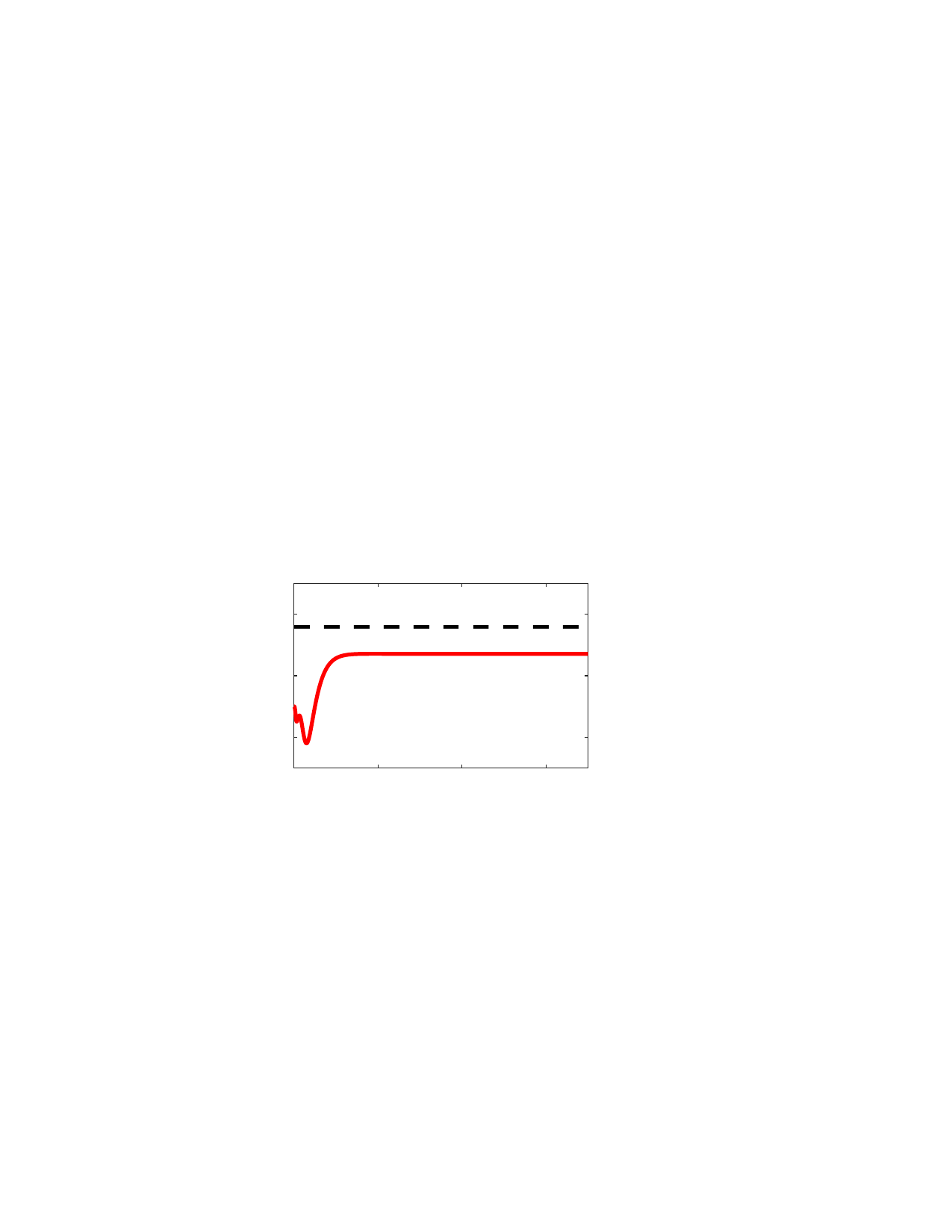}
   \end{subfigure}
   \hspace{-0.37cm}
   \begin{subfigure}[t]{0.32\linewidth} 
       \includegraphics[trim={13.3cm 10.5cm 1.4cm 13.2cm},clip, width=0.96\linewidth]{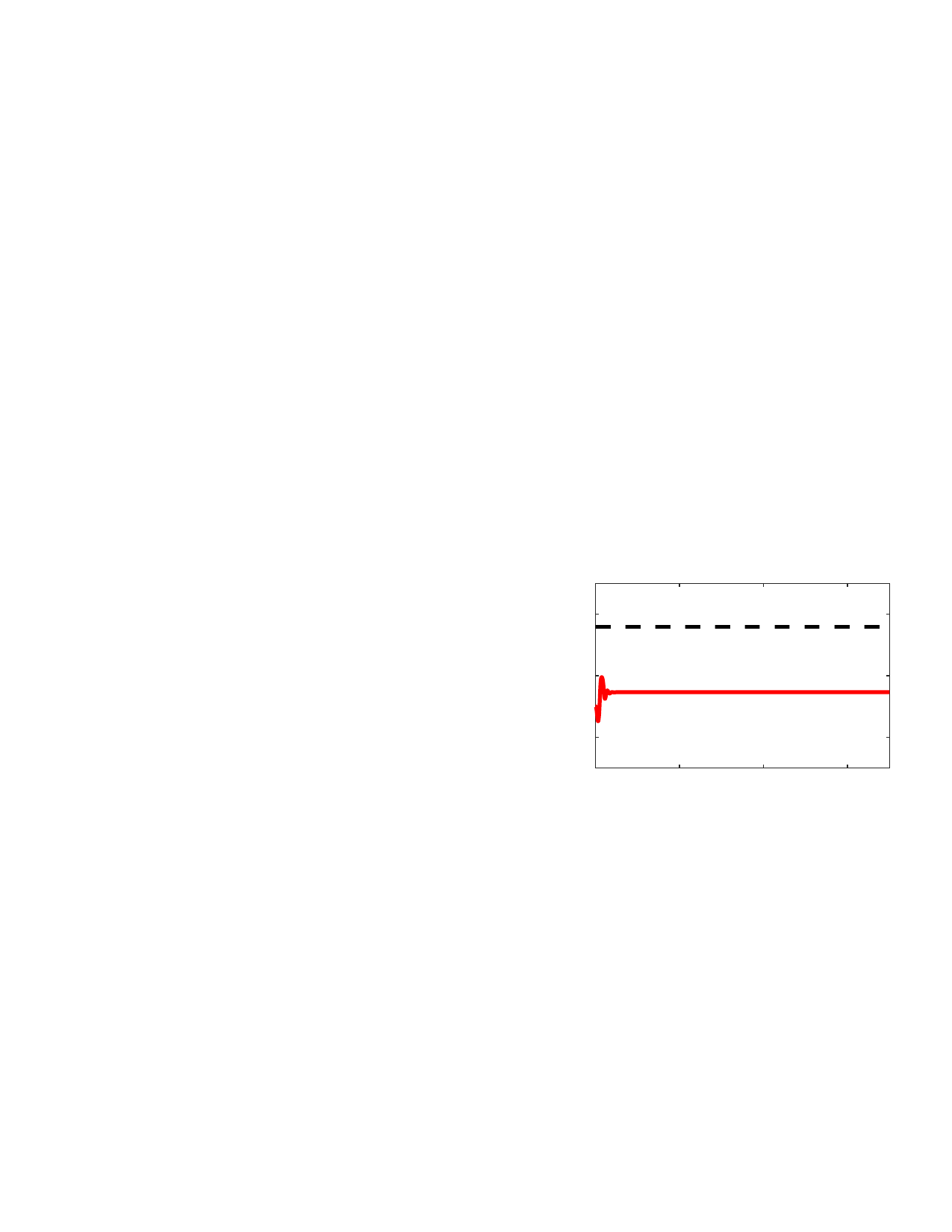}
   \end{subfigure}\\   
   \begin{subfigure}[t]{0.32\linewidth} 
       \includegraphics[trim={1.4cm 7.33cm 18.2cm 10cm},clip, width=1.17\linewidth]{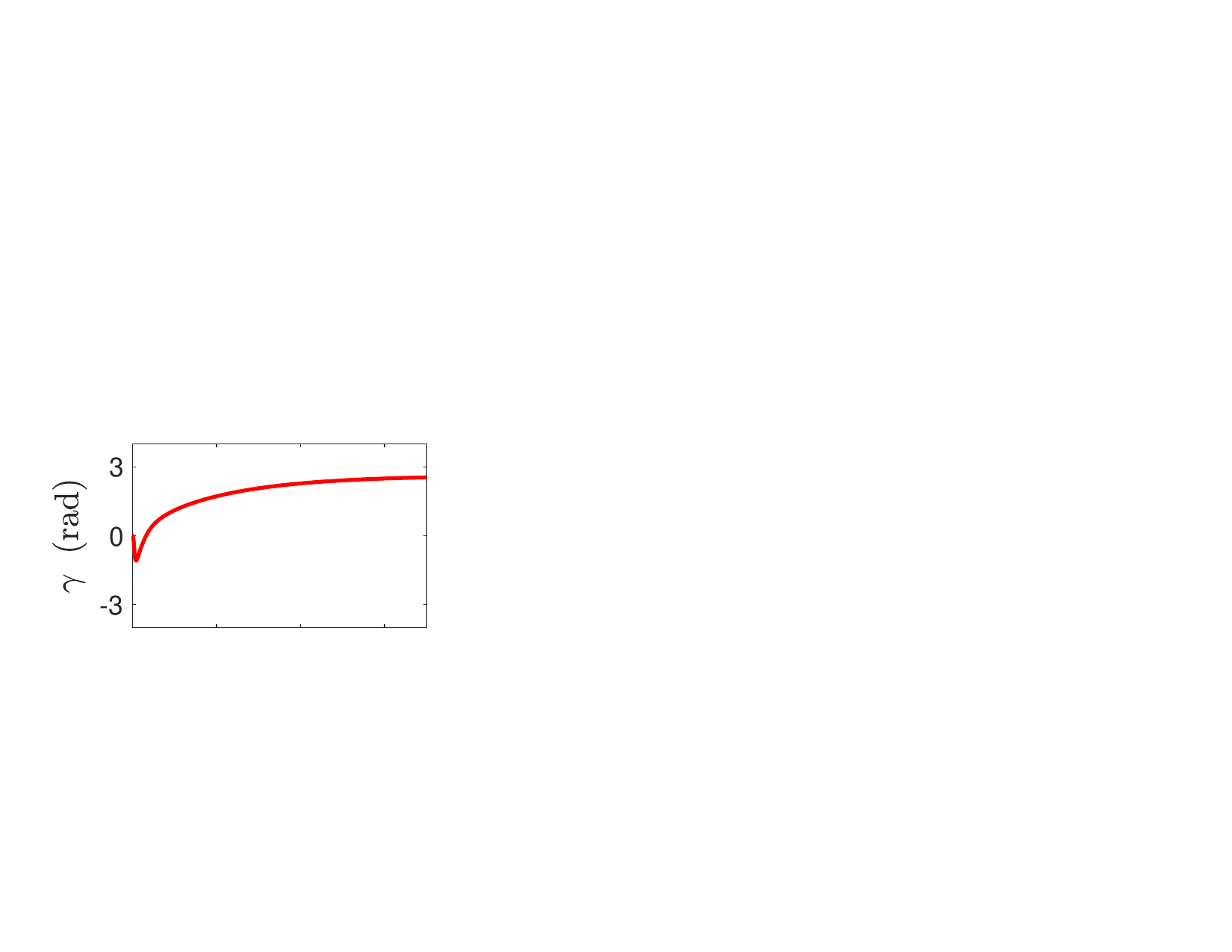}
   \end{subfigure}
   \hspace{0.25cm}
   \begin{subfigure}[t]{0.32\linewidth} 
       \includegraphics[trim={6.55cm 10.5cm 8.25cm 13.2cm},clip, width=0.95\linewidth]{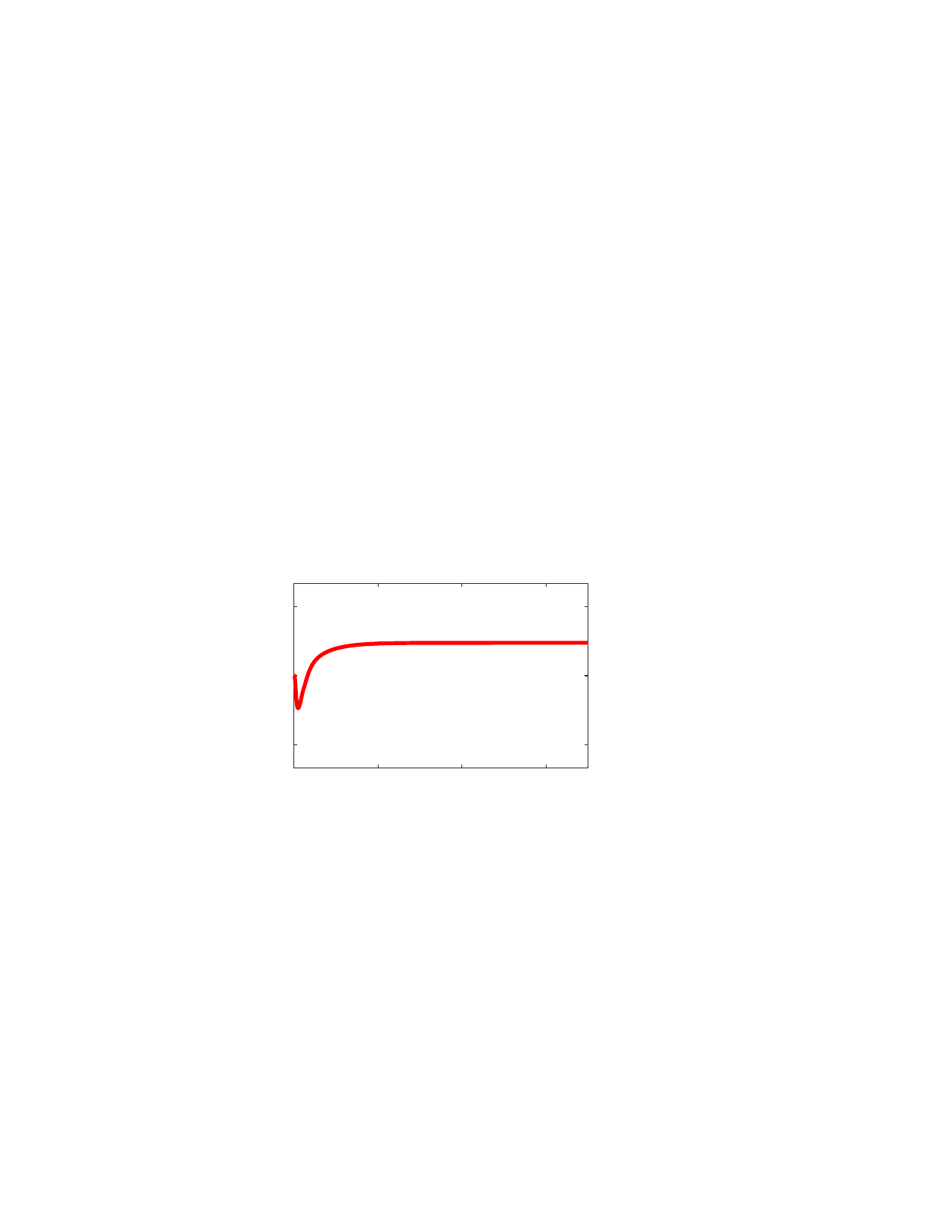}
   \end{subfigure}
   \hspace{-0.37cm}
   \begin{subfigure}[t]{0.32\linewidth} 
       \includegraphics[trim={13.3cm 10.5cm 1.4cm 13.2cm},clip, width=0.96\linewidth]{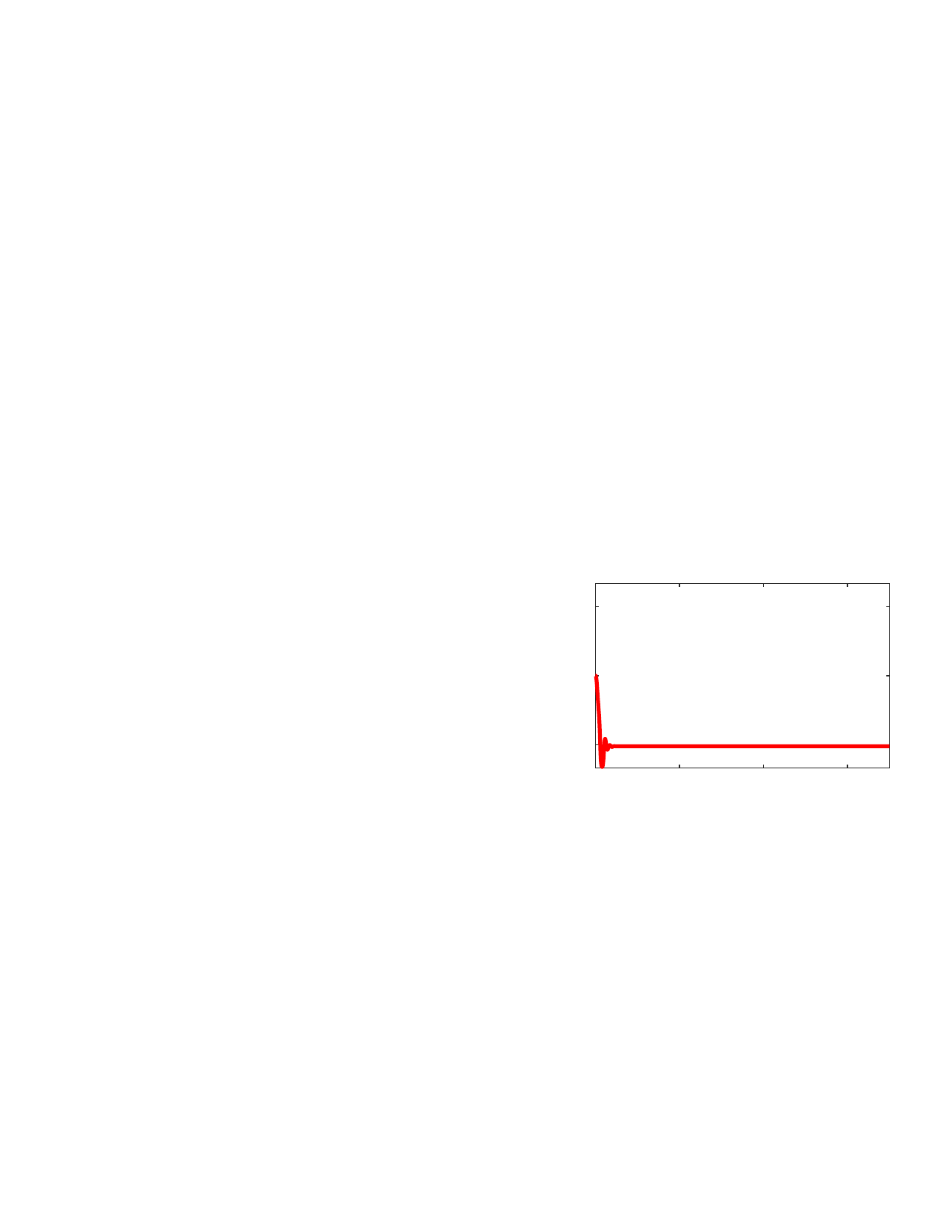}
   \end{subfigure}\\
   \begin{subfigure}[t]{0.32\linewidth} 
       \includegraphics[trim={1.4cm 7.34cm 18.2cm 10cm},clip, width=1.17\linewidth]{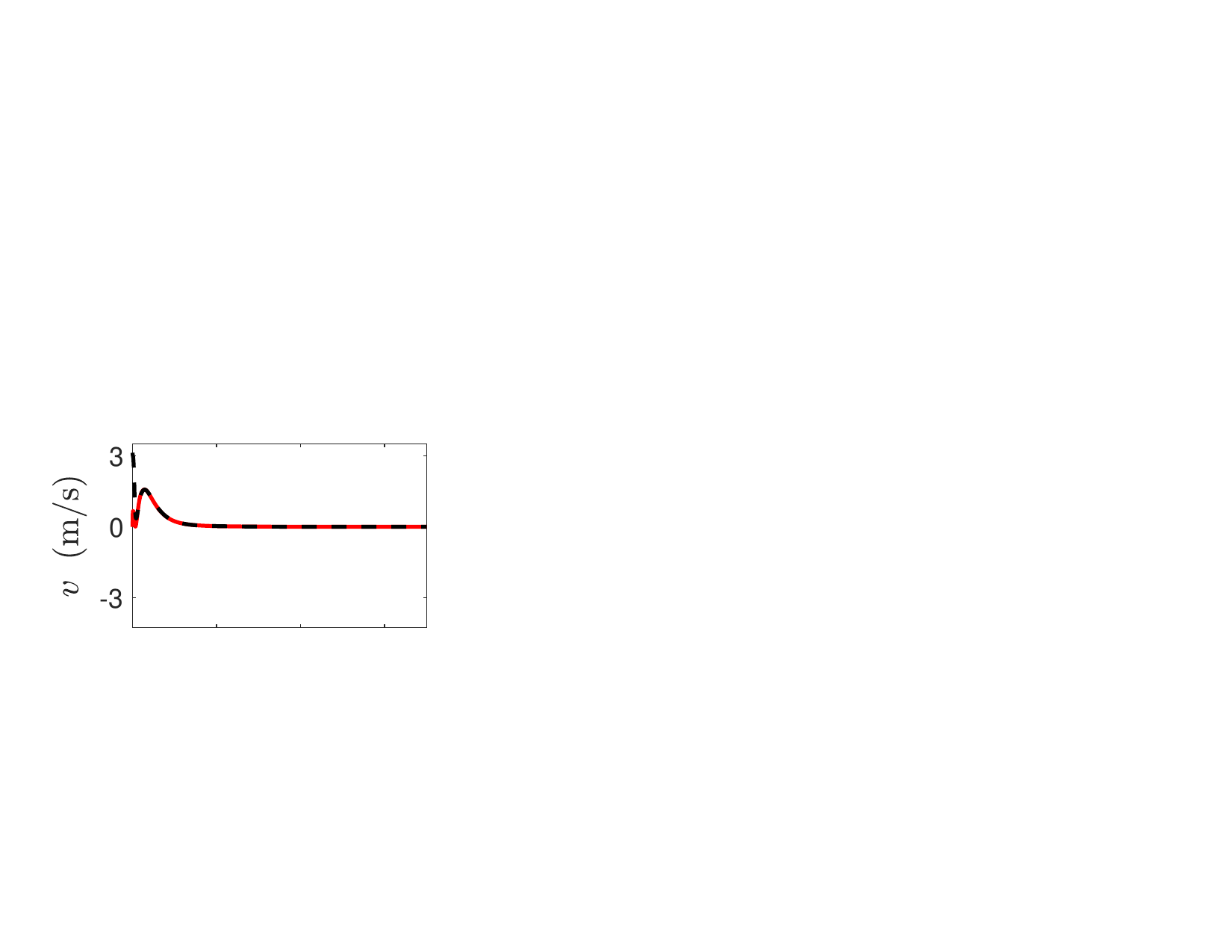}
   \end{subfigure}
   \hspace{0.25cm}
   \begin{subfigure}[t]{0.32\linewidth} 
       \includegraphics[trim={6.55cm 10.5cm 8.25cm 13.2cm},clip, width=0.95\linewidth]{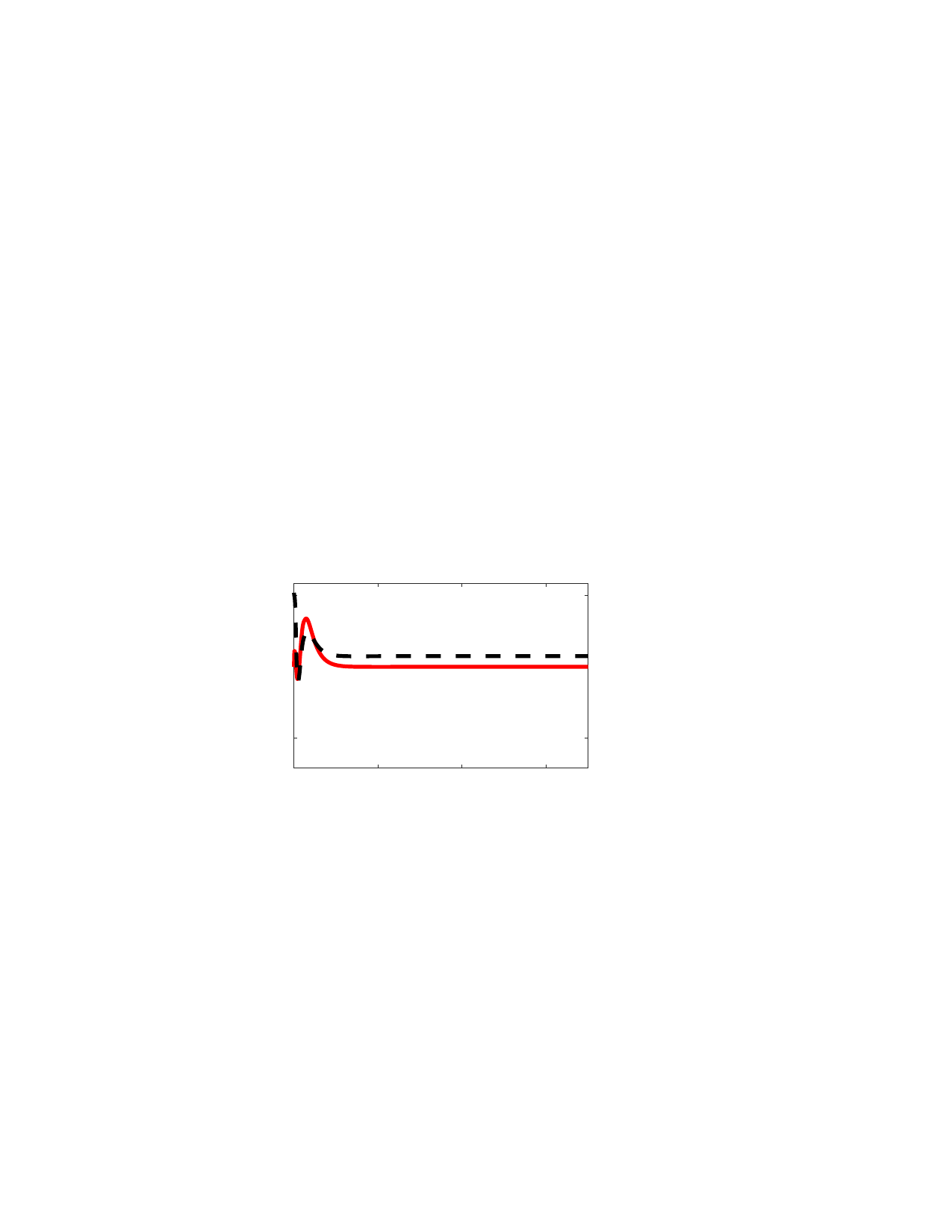}
   \end{subfigure}
   \hspace{-0.37cm}
   \begin{subfigure}[t]{0.32\linewidth} 
       \includegraphics[trim={13.3cm 10.5cm 1.4cm 13.2cm},clip, width=0.96\linewidth]{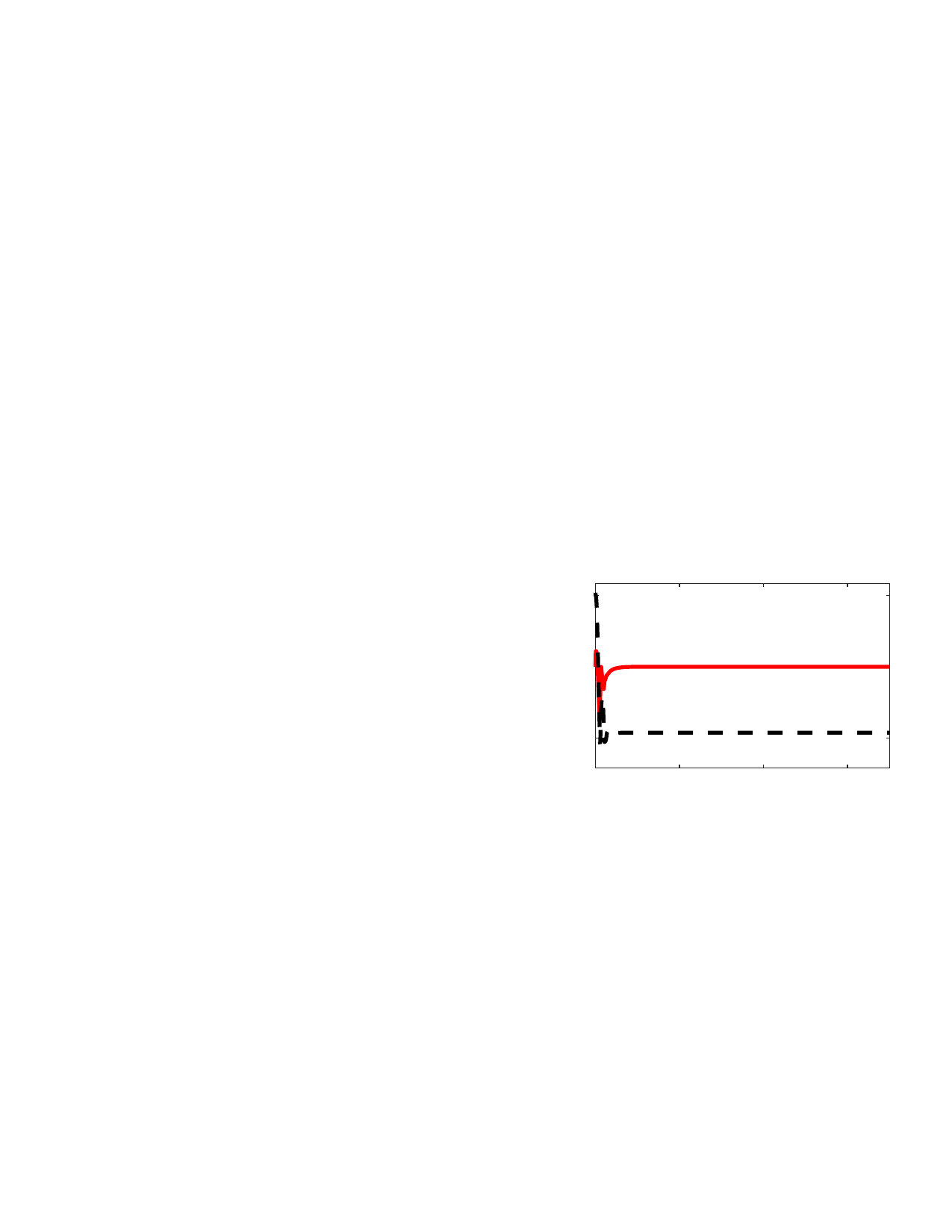}
   \end{subfigure}\\
   \begin{subfigure}[t]{0.32\linewidth} 
       \includegraphics[trim={1.38cm 5.62cm 18.2cm 10cm},clip, width=1.17\linewidth]{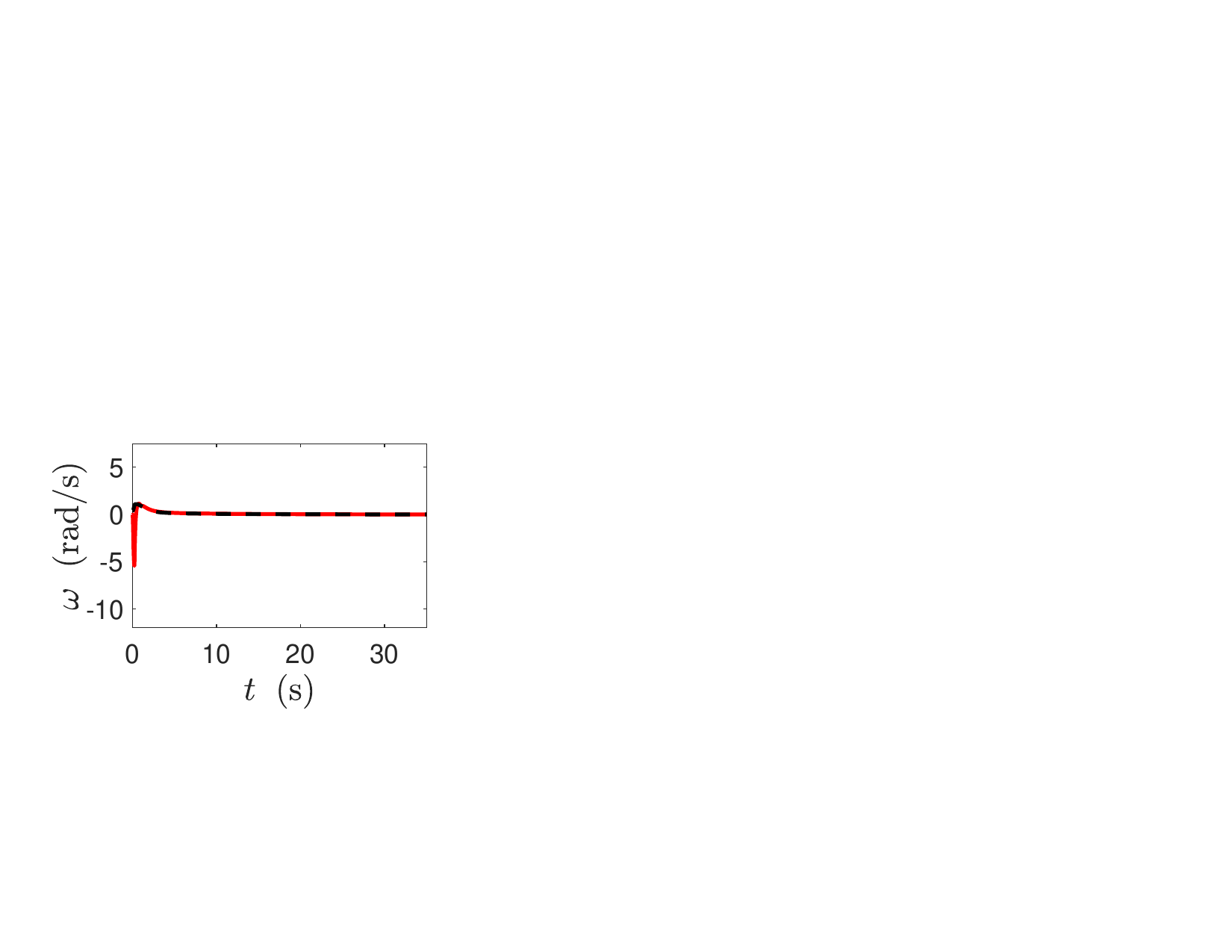}
   \end{subfigure}
   \hspace{0.25cm}
   \begin{subfigure}[t]{0.32\linewidth} 
       \includegraphics[trim={6.55cm 8.78cm 8.25cm 13.2cm},clip, width=0.95\linewidth]{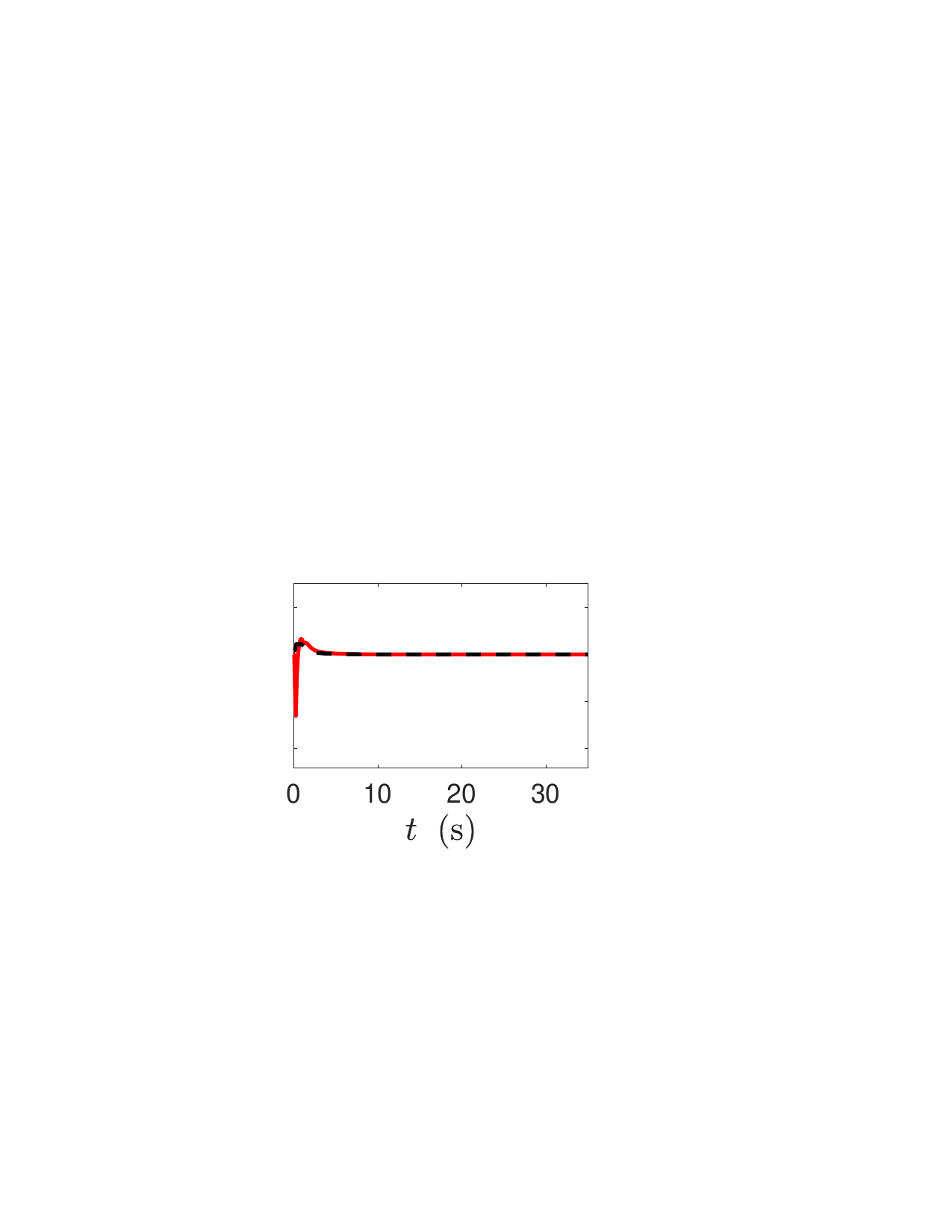}
   \end{subfigure}
   \hspace{-0.37cm}
   \begin{subfigure}[t]{0.32\linewidth} 
       \includegraphics[trim={13.3cm 8.75cm 1.4cm 13.2cm},clip, width=0.96\linewidth]{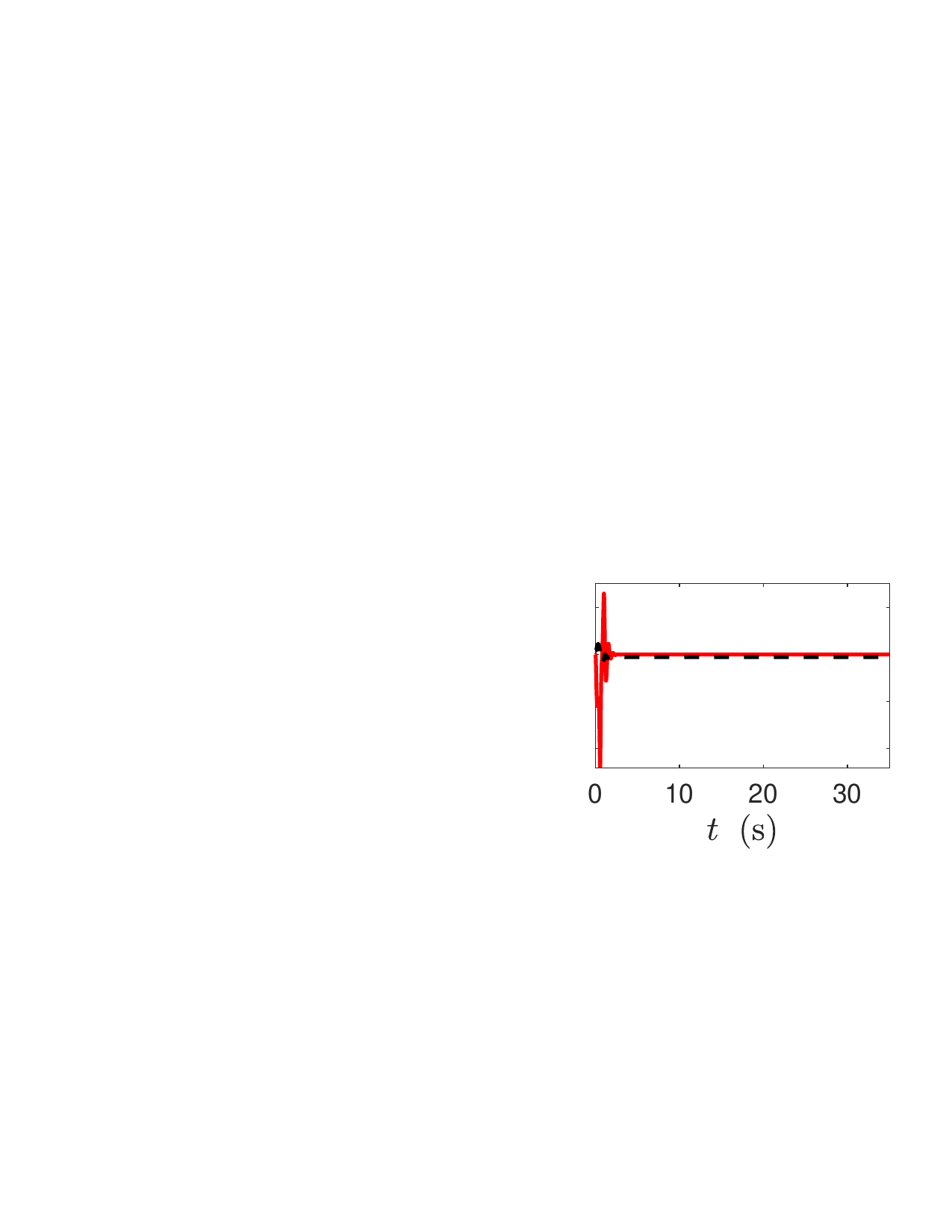}
   \end{subfigure}
   \caption{$q_{\rm{x}}$, $q_{\rm{y}}$, $\gamma$, $v$ and $\omega$ for Cases 1, 2 and 3. Note that $q_{\rm{dx}}$, $q_{\rm{dy}}$, $v_{\rm{d}}$ and $\omega_{\rm{d}}$ are shown with dashed lines.}
   \label{fig:GR:states}
   \end{figure}

    \vspace{-4mm}
   \begin{figure}[H]
   \centering
   \begin{subfigure}[t]{0.32\linewidth} 
       \includegraphics[trim={1.4cm 7.33cm 18.2cm 9.25cm},clip, width=1.17\linewidth]{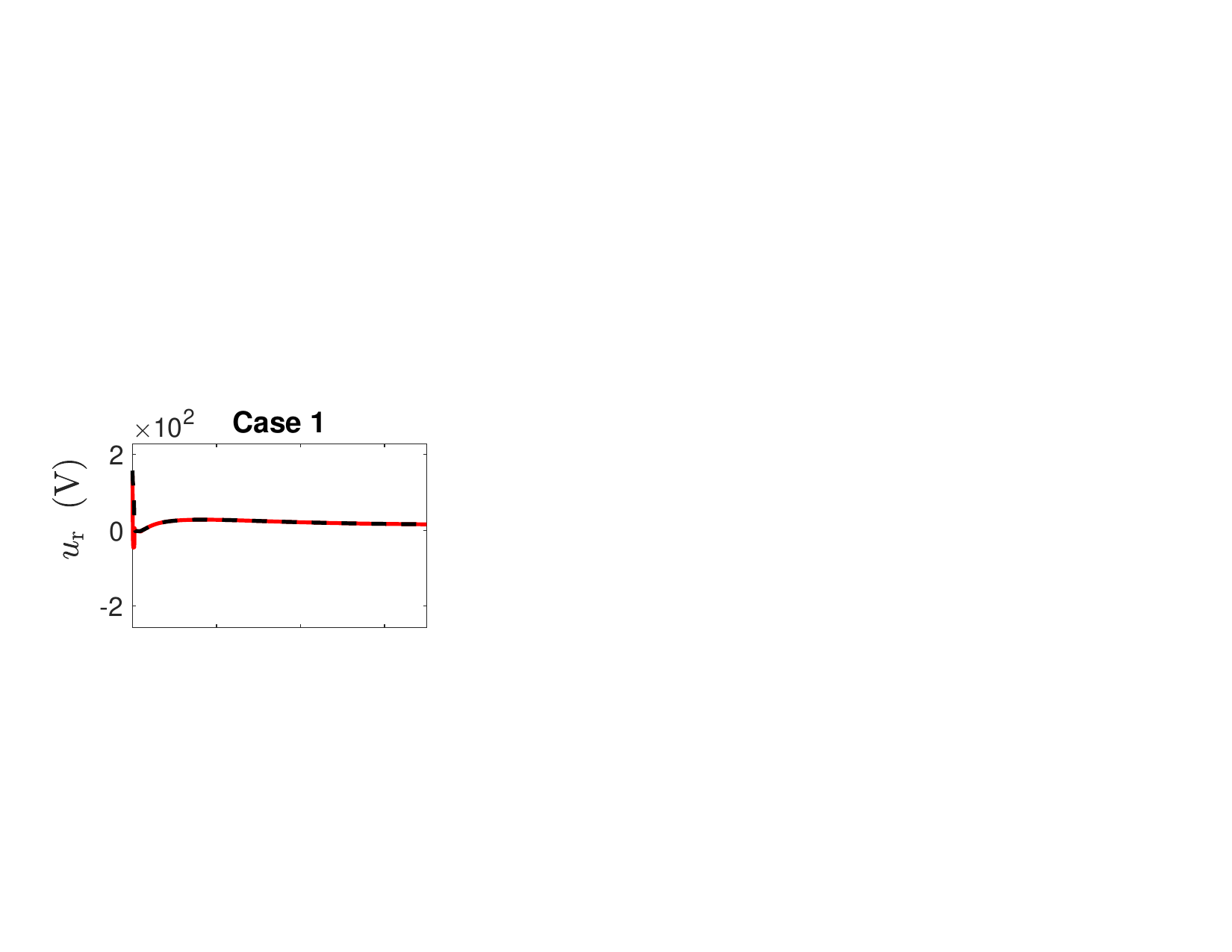}
   \end{subfigure}
   \hspace{0.25cm}
   \begin{subfigure}[t]{0.32\linewidth} 
       \includegraphics[trim={6.55cm 10.5cm 8.25cm 12.5cm},clip, width=0.95\linewidth]{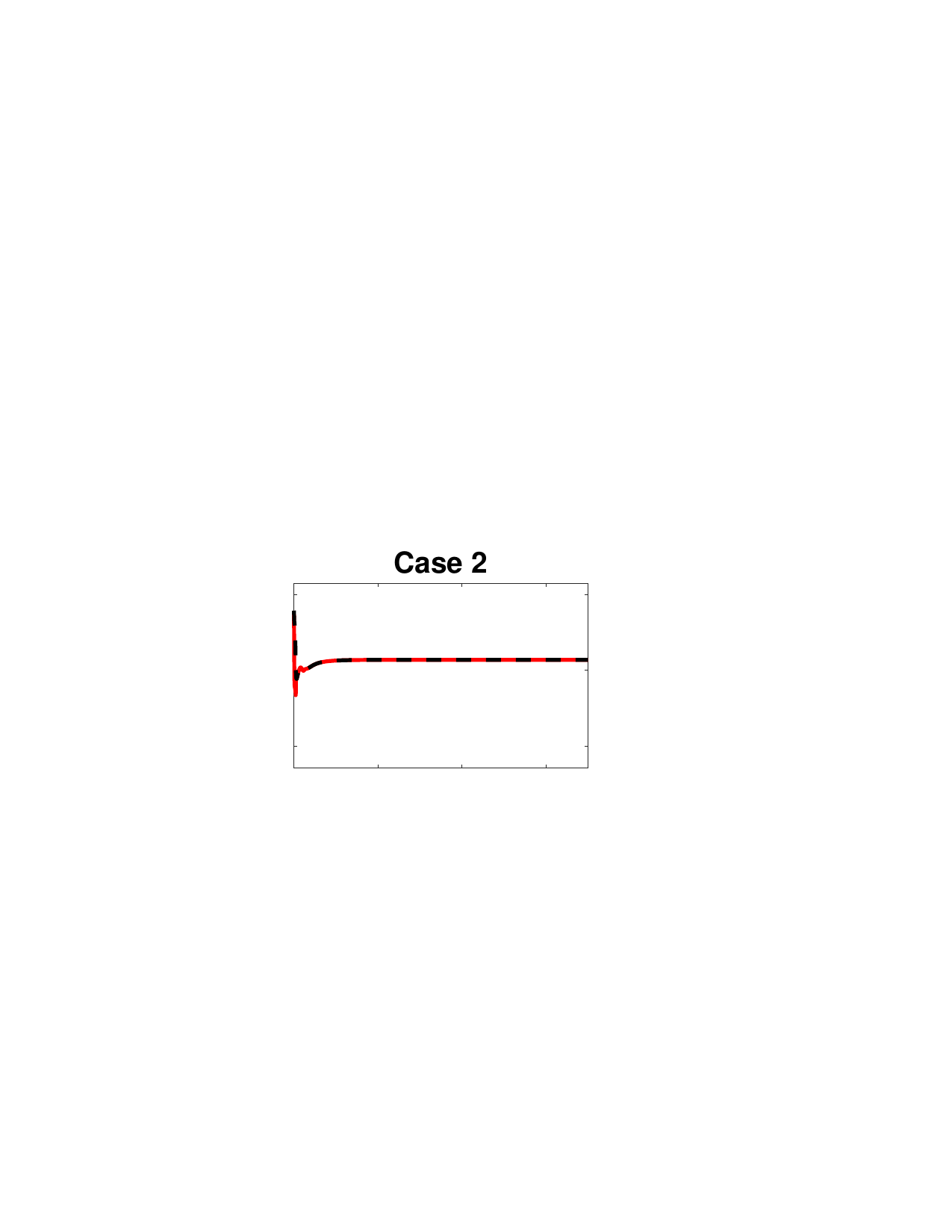}
   \end{subfigure}
   \hspace{-0.37cm}
   \begin{subfigure}[t]{0.32\linewidth} 
       \includegraphics[trim={13.35cm 10.5cm 1.4cm 12.5cm},clip, width=0.96\linewidth]{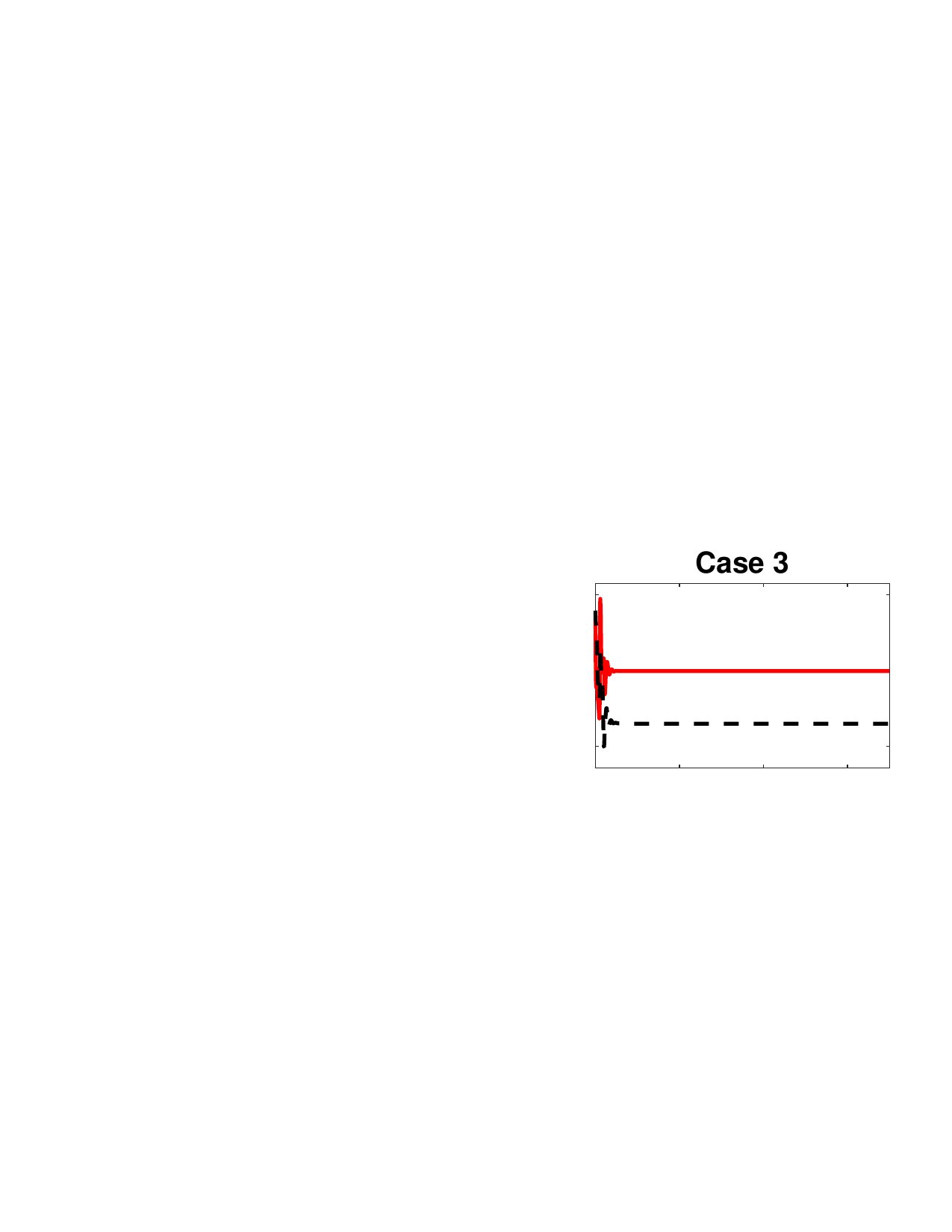}
  \end{subfigure}\\
   \begin{subfigure}[t]{0.32\linewidth} 
       \includegraphics[trim={1.4cm 5.68cm 18.2cm 10cm},clip, width=1.17\linewidth]{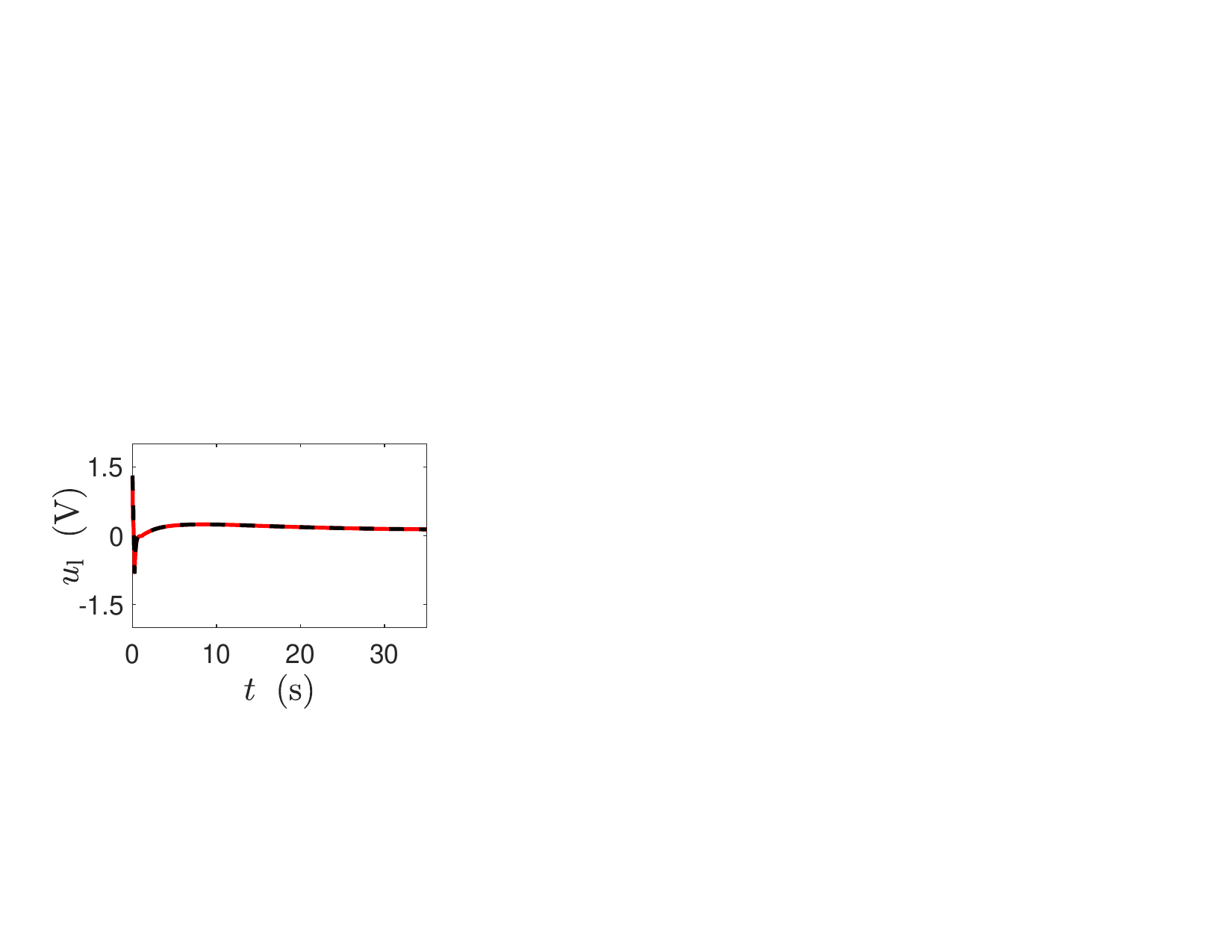}
   \end{subfigure}
   \hspace{0.25cm}
   \begin{subfigure}[t]{0.32\linewidth} 
       \includegraphics[trim={6.55cm 8.83cm 8.25cm 13.2cm},clip, width=0.95\linewidth]{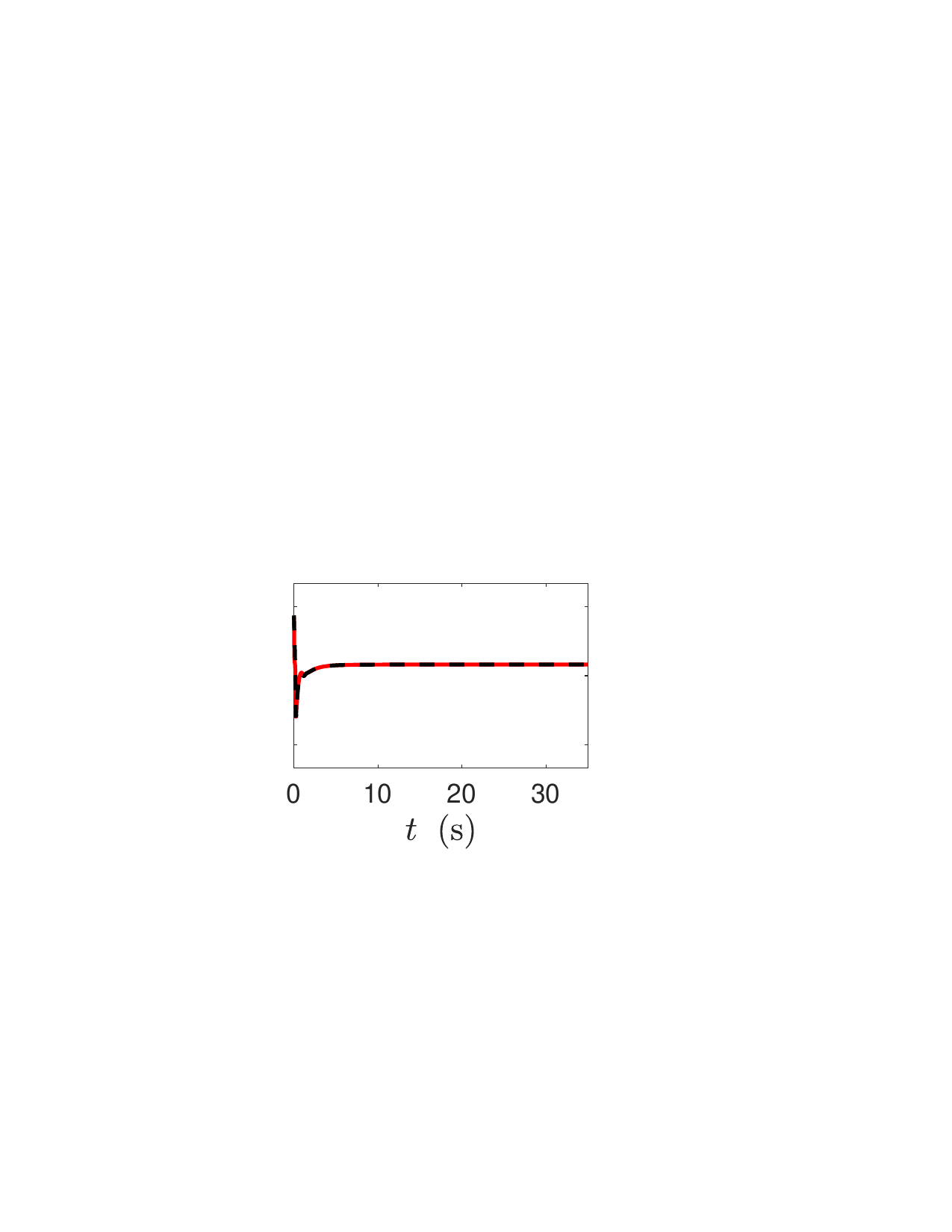}
   \end{subfigure}
   \hspace{-0.37cm}
   \begin{subfigure}[t]{0.32\linewidth} 
       \includegraphics[trim={13.35cm 8.8cm 1.4cm 13.2cm},clip, width=0.95\linewidth]{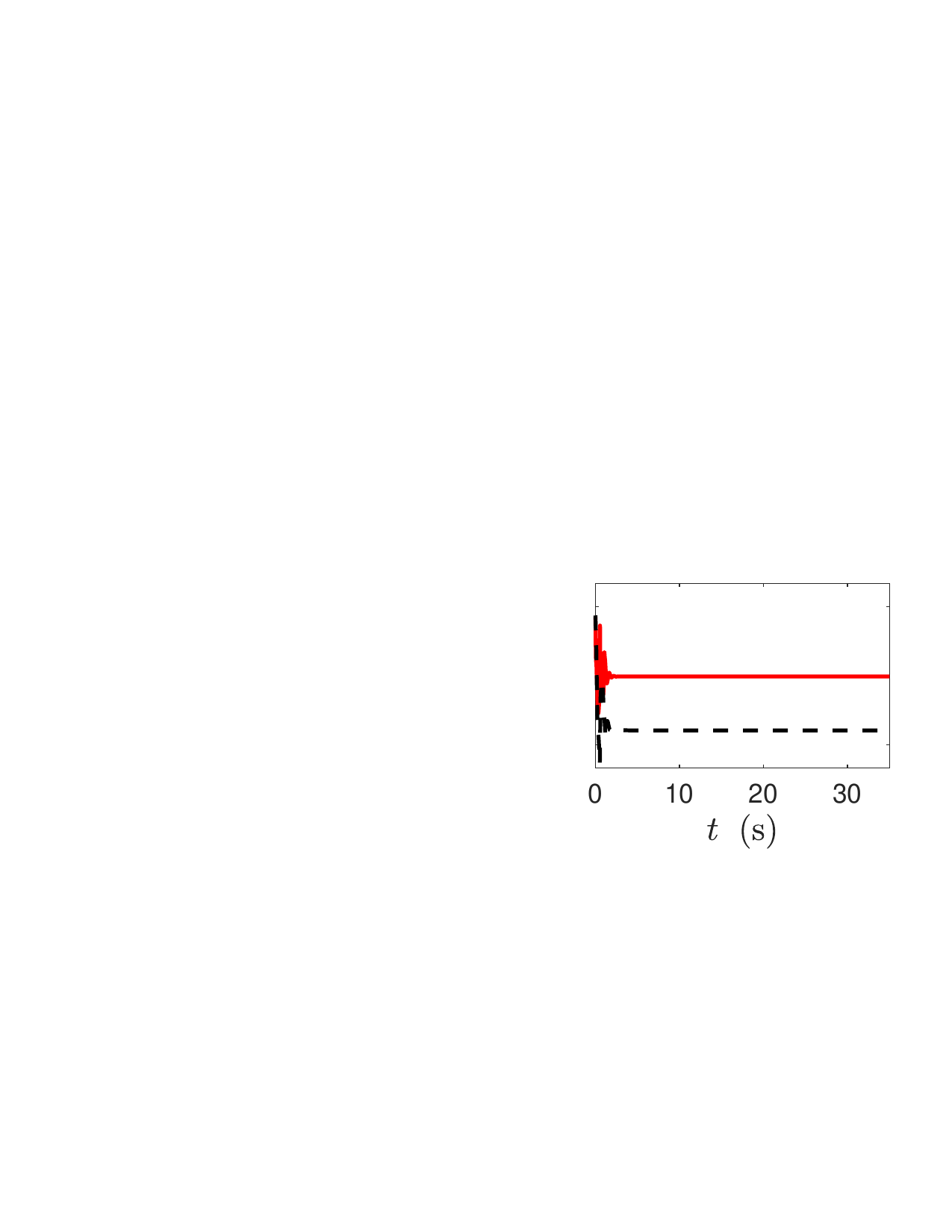}
   \end{subfigure}\\
   \caption{$u_{\rm{r}}$ and $u_{\rm{l}}$ for Cases 1, 2 and 3. Note that $u_{\rm{dr}}$ and $u_{\rm{dl}}$ are shown with dashed lines.}
   \label{fig:GR:control}
   \end{figure}

    \vspace{-4mm}
    \begin{figure}[H]
    \centering
    \begin{subfigure}[t]{0.32\linewidth} 
        \includegraphics[trim={1.4cm 7.33cm 18.2cm 9.25cm},clip, width=1.17\linewidth]{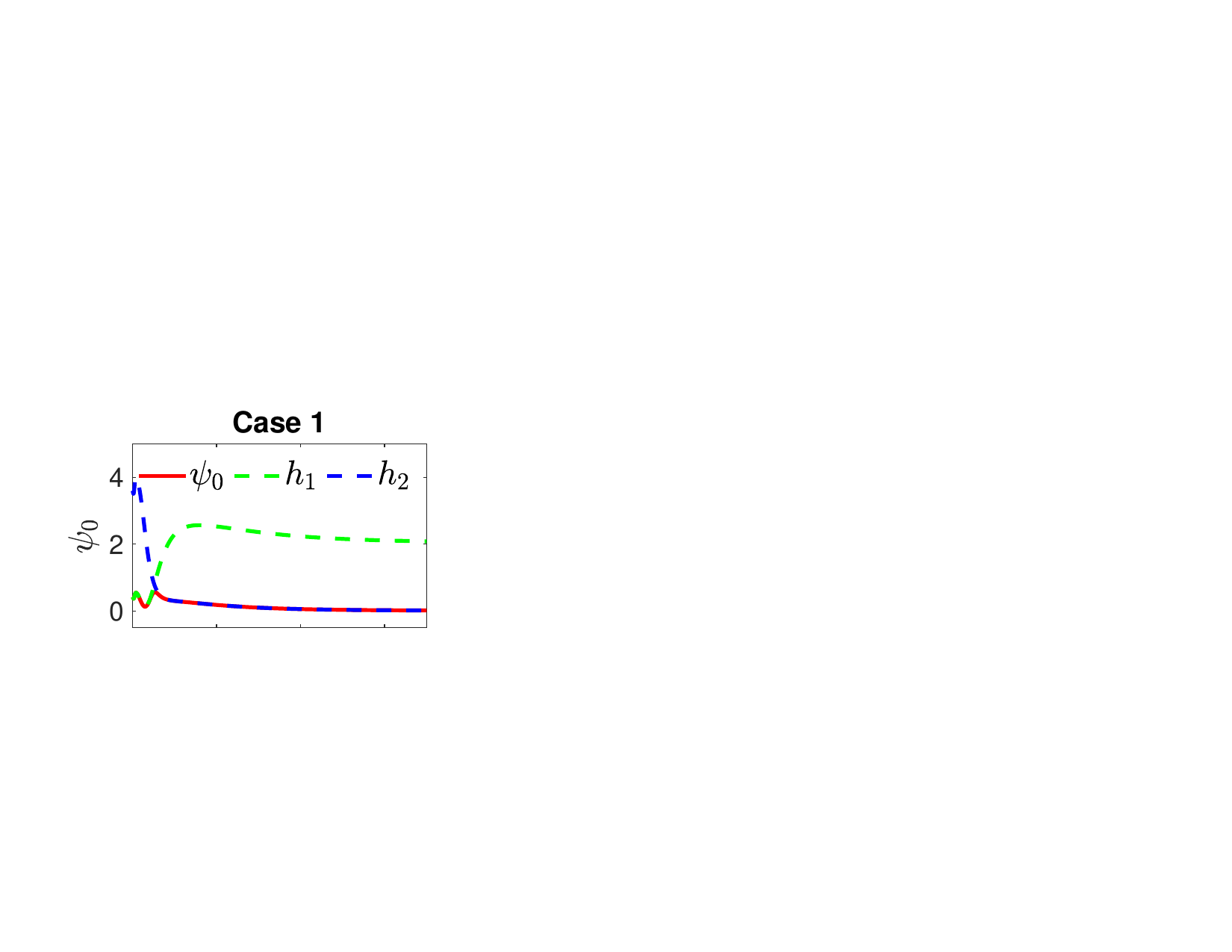}
    \end{subfigure}
    \hspace{0.25cm}
    \begin{subfigure}[t]{0.32\linewidth} 
        \includegraphics[trim={6.55cm 10.5cm 8.25cm 12.5cm},clip, width=0.95\linewidth]{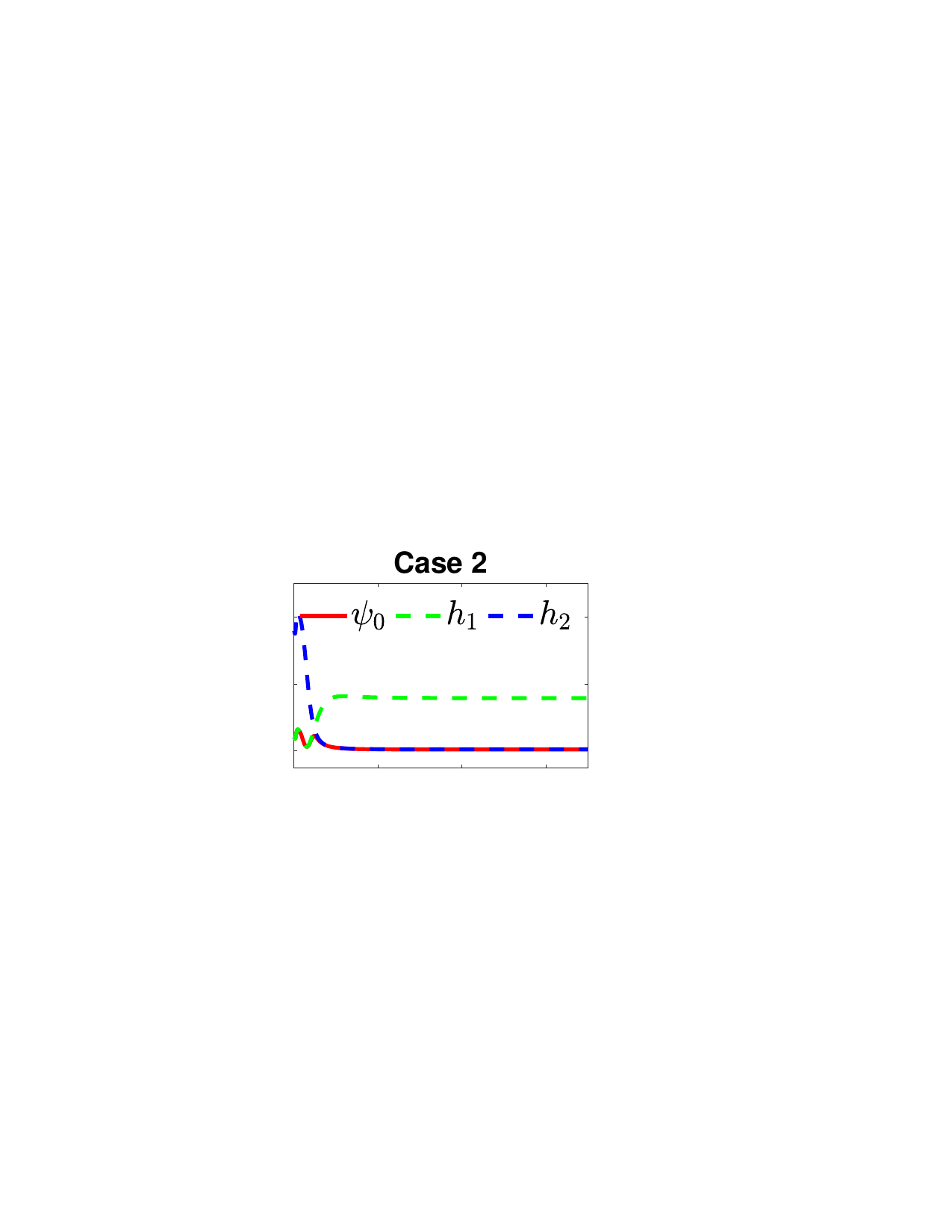}
    \end{subfigure}
    \hspace{-0.37cm}
    \begin{subfigure}[t]{0.32\linewidth} 
        \includegraphics[trim={13.35cm 10.5cm 1.4cm 12.5cm},clip, width=0.96\linewidth]{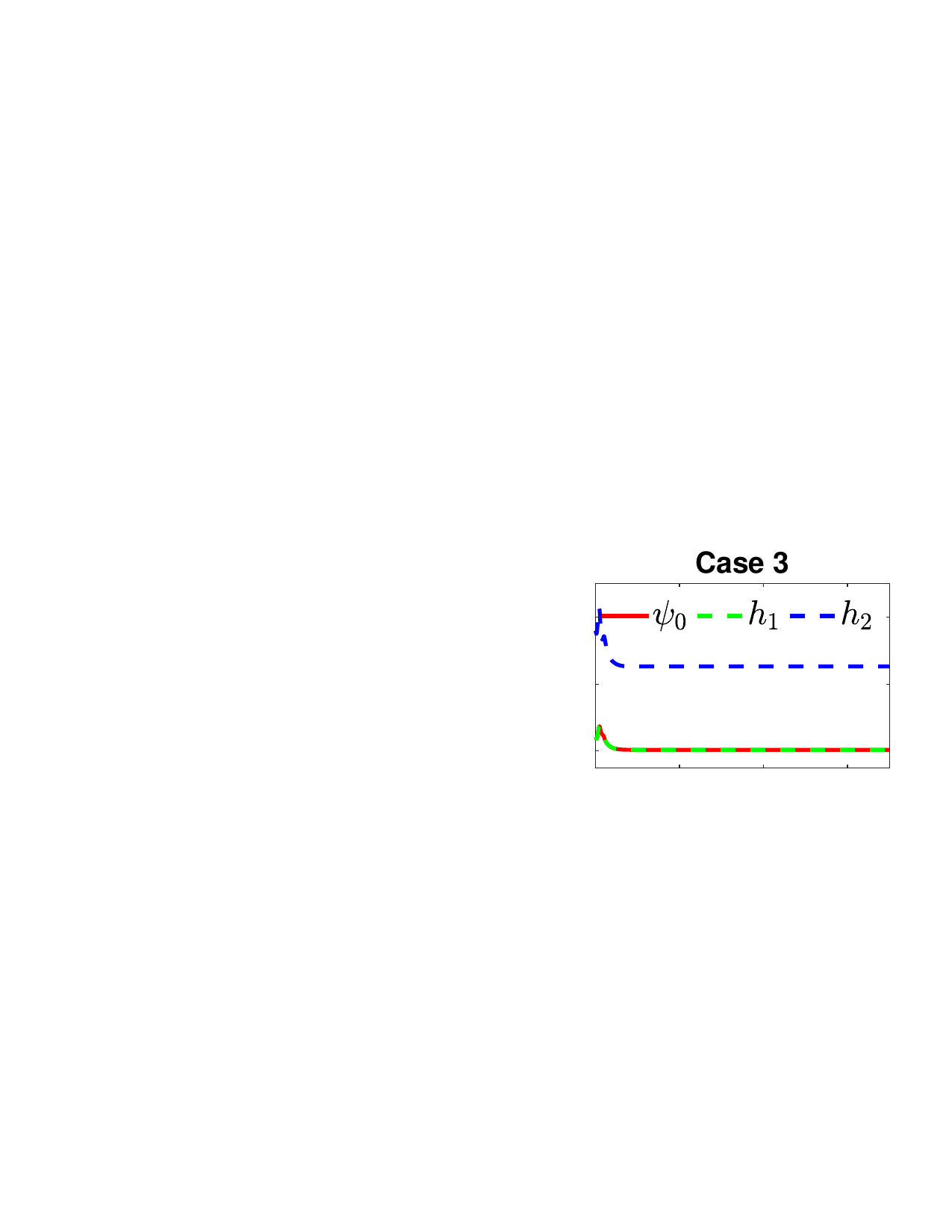}
    \end{subfigure}\\
    \begin{subfigure}[t]{0.32\linewidth} 
        \includegraphics[trim={1.4cm 7.33cm 18.2cm 10cm},clip, width=1.17\linewidth]{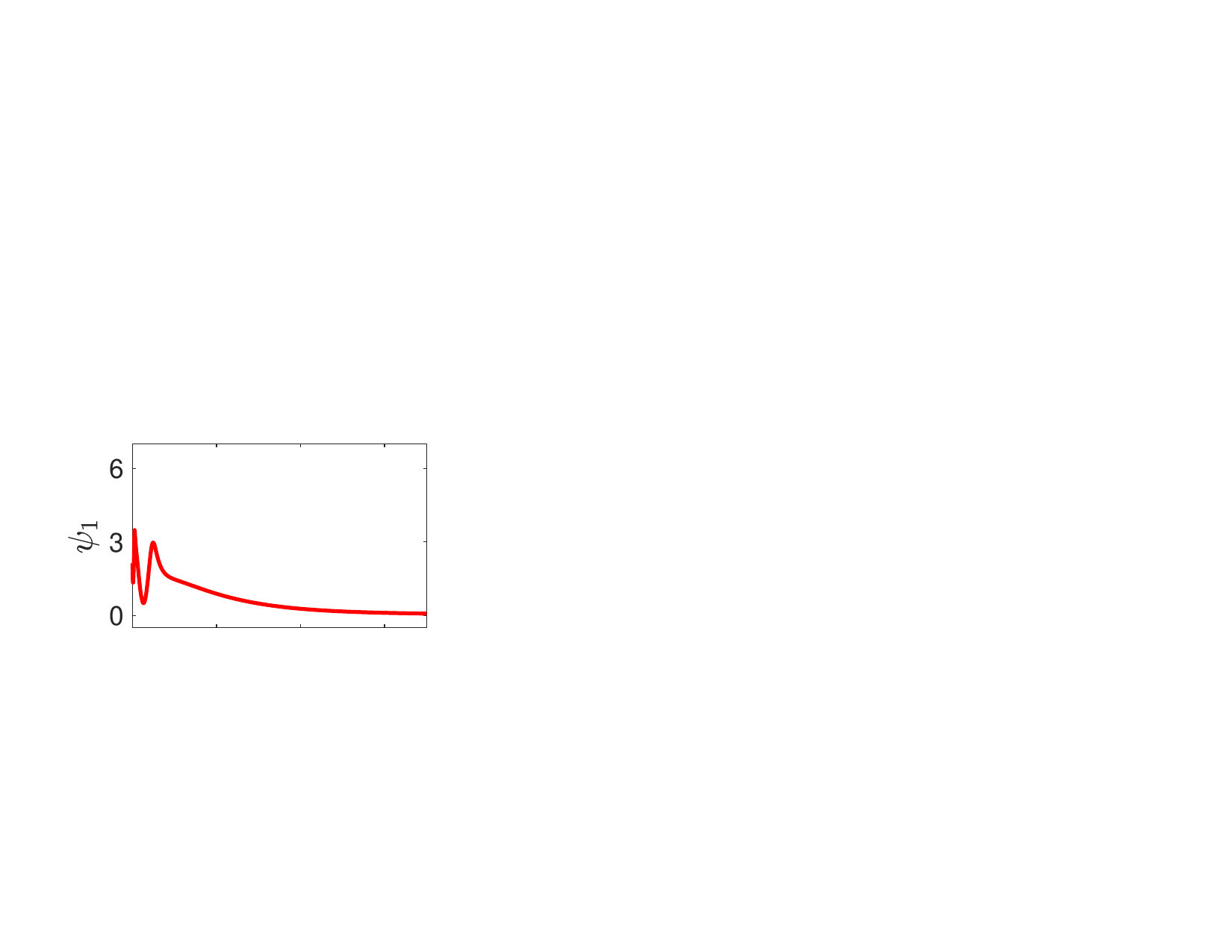}
    \end{subfigure}
    \hspace{0.25cm}
    \begin{subfigure}[t]{0.32\linewidth} 
        \includegraphics[trim={6.55cm 10.5cm 8.25cm 13.2cm},clip, width=0.95\linewidth]{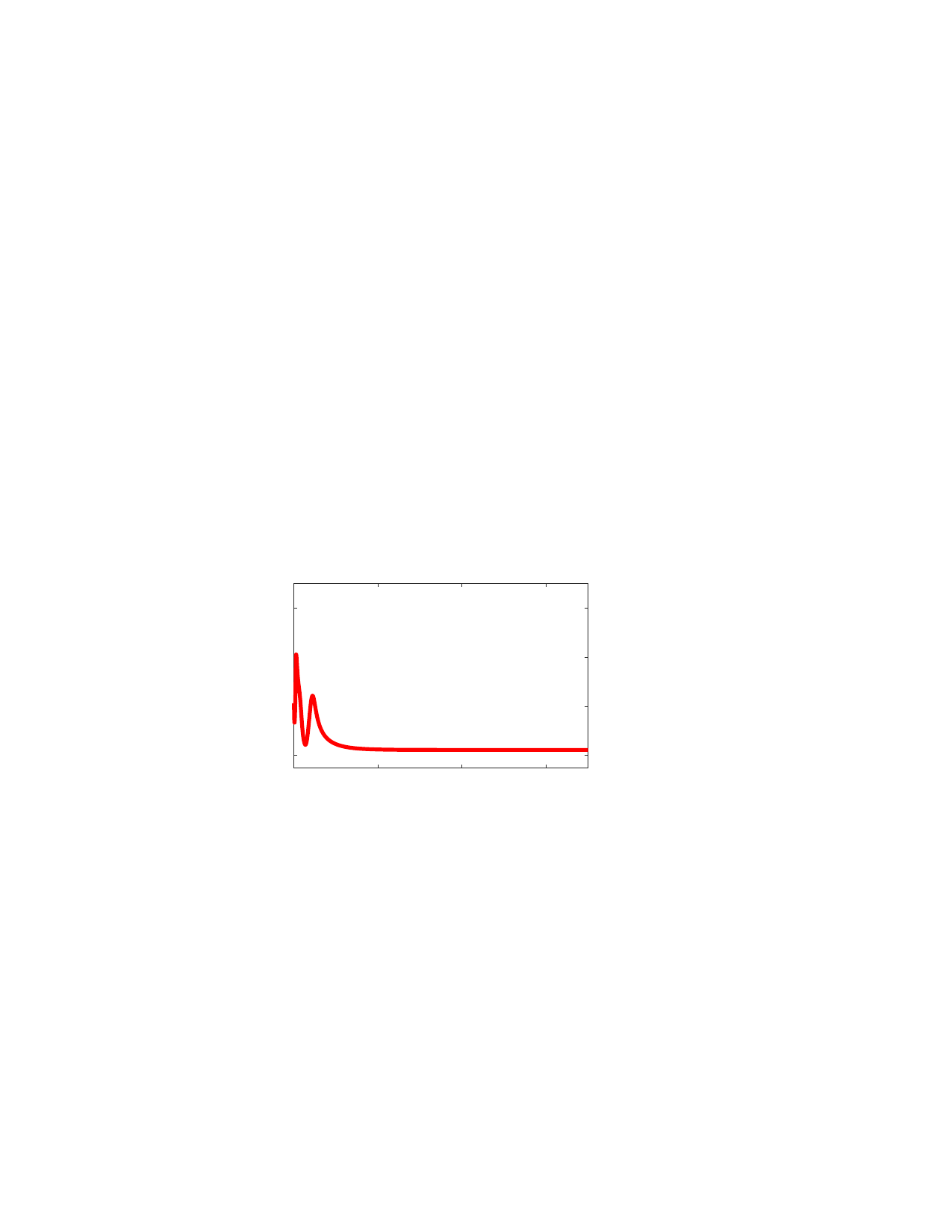}
    \end{subfigure}
    \hspace{-0.37cm}
    \begin{subfigure}[t]{0.32\linewidth} 
        \includegraphics[trim={13.35cm 10.5cm 1.4cm 13.2cm},clip, width=0.95\linewidth]{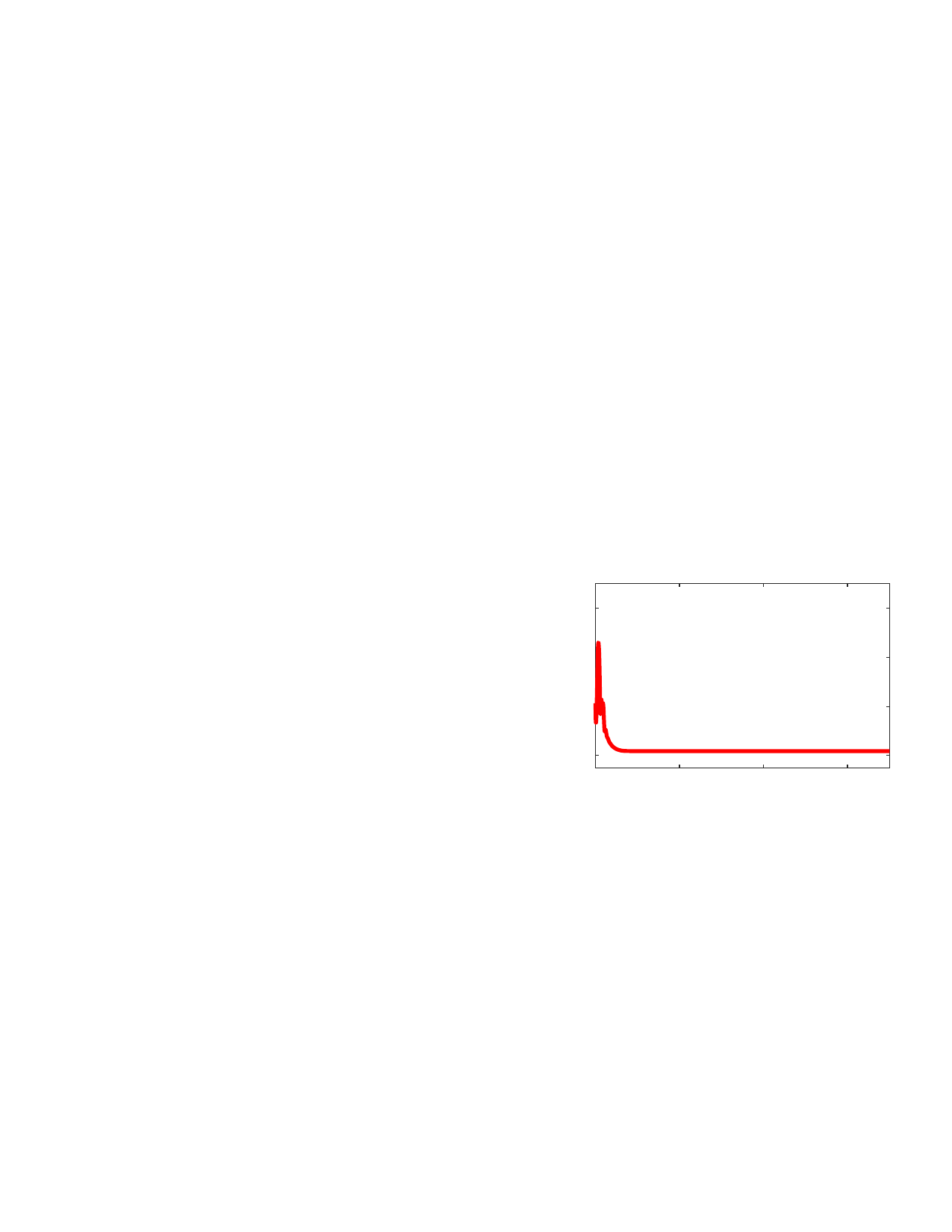}
    \end{subfigure}\\
    \begin{subfigure}[t]{0.32\linewidth} 
        \includegraphics[trim={1.4cm 5.63cm 18.2cm 10cm},clip, width=1.17\linewidth]{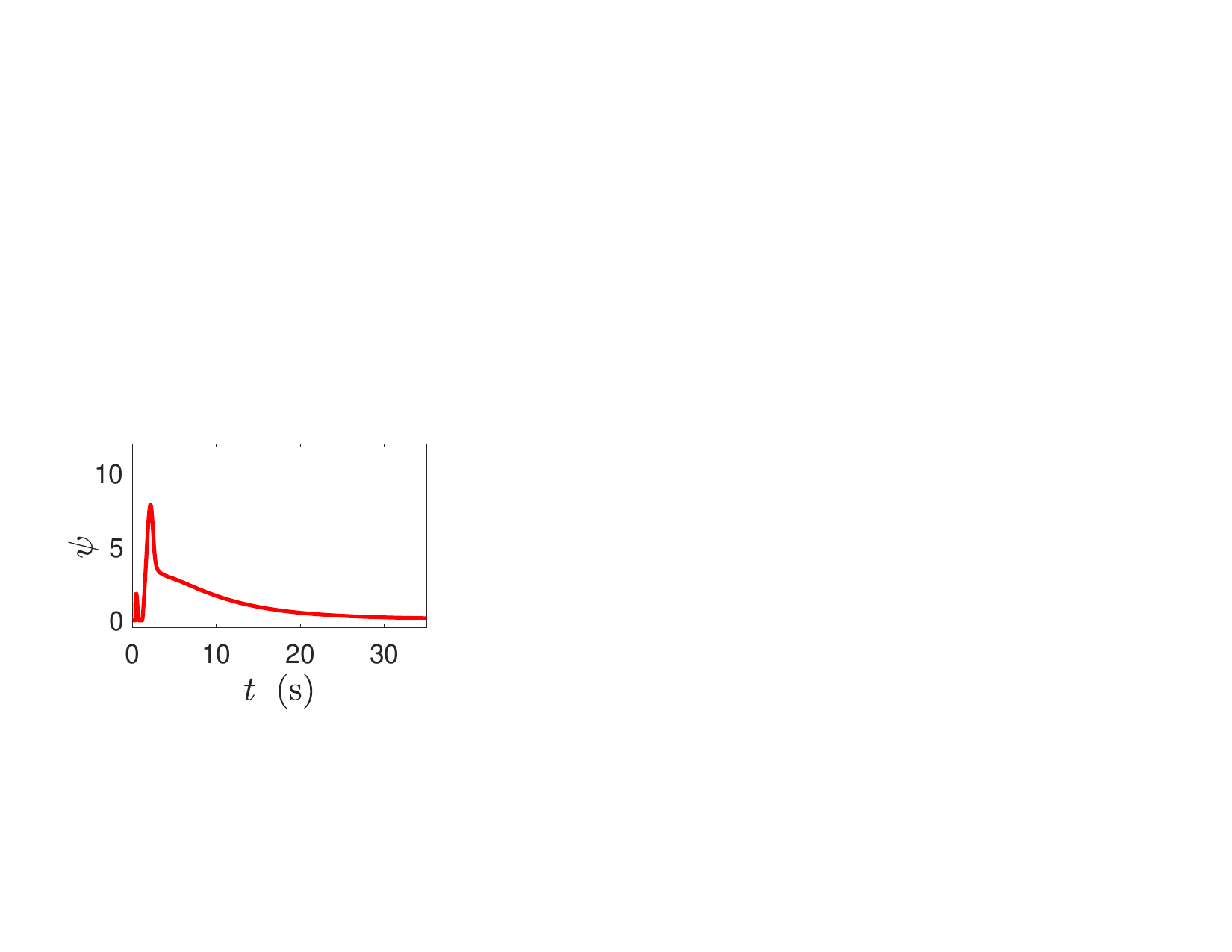}
    \end{subfigure}
    \hspace{0.25cm}
    \begin{subfigure}[t]{0.32\linewidth} 
        \includegraphics[trim={6.55cm 8.8cm 8.25cm 13.2cm},clip, width=0.95\linewidth]{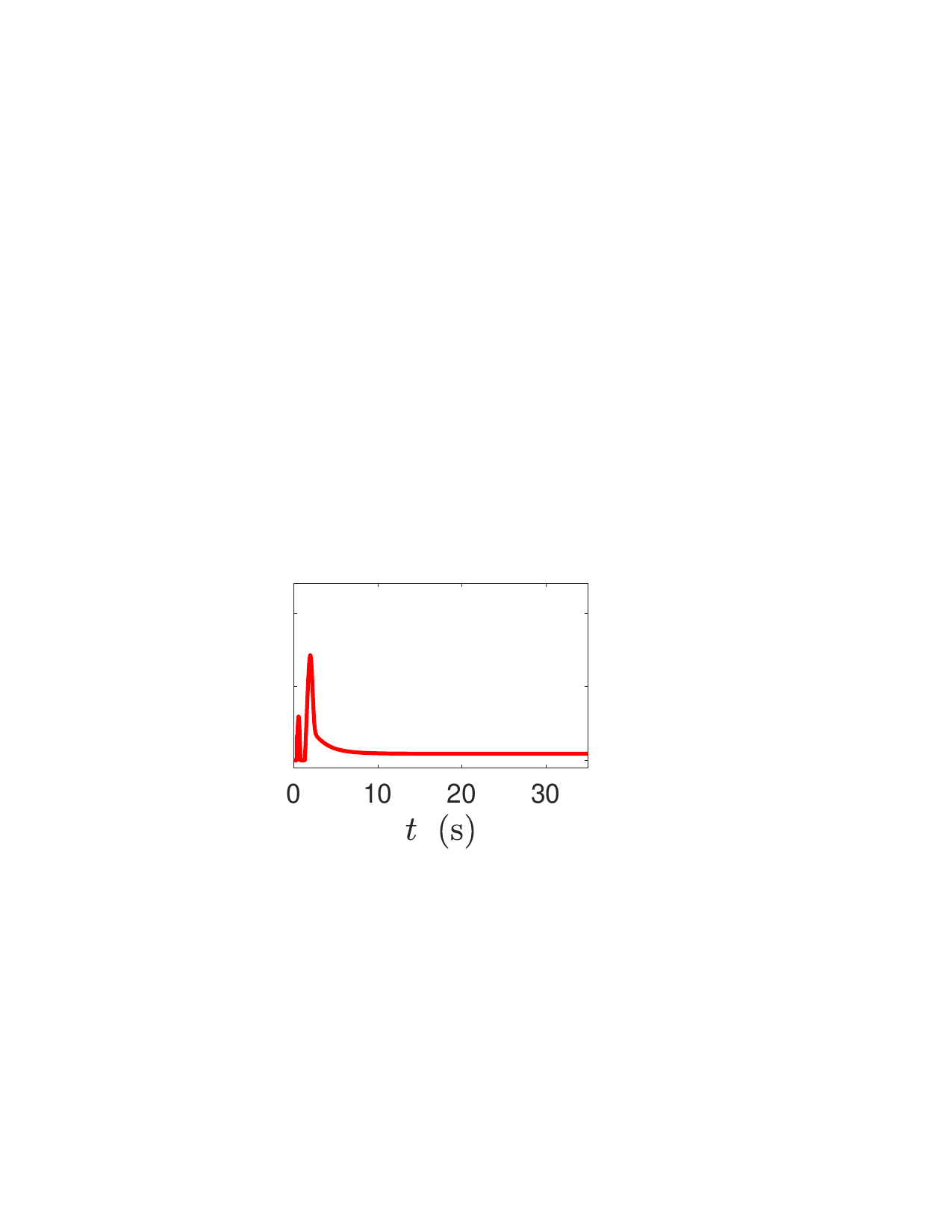}
    \end{subfigure}
    \hspace{-0.37cm}
    \begin{subfigure}[t]{0.32\linewidth} 
        \includegraphics[trim={13.35cm 8.75cm 1.4cm 13.2cm},clip, width=0.95\linewidth]{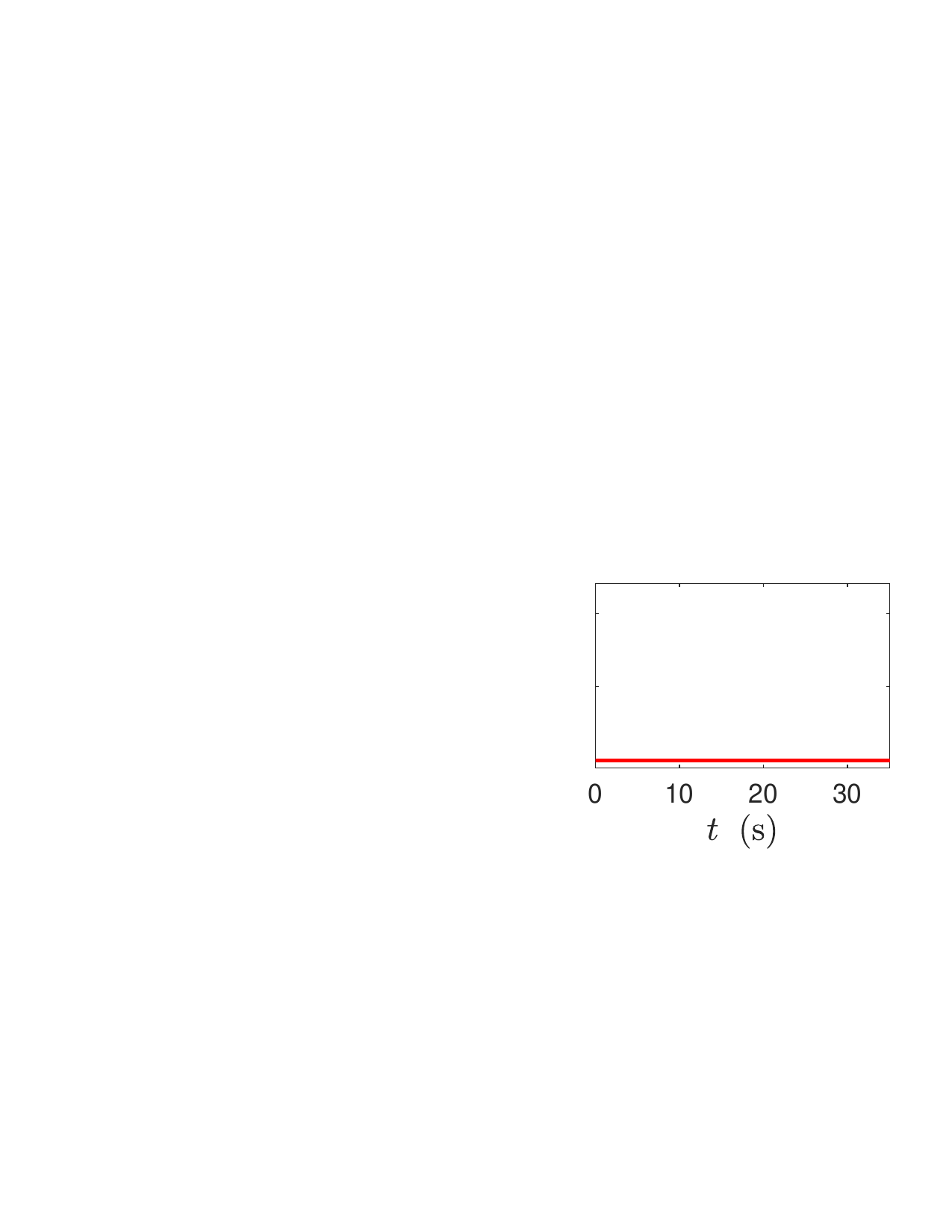}
    \end{subfigure}
    \caption{$\psi_{0}$, $\psi_{1}$ and $\psi$ for Cases 1, 2 and 3. }
    \label{fig:GR:safety}
    \end{figure}
    \vspace{-2mm}

    Figure \ref{fig:GR:control} shows that only in Case 3 do the control inputs $u_{\rm{r}}$ and $u_{\rm{l}}$ not follow the desired controls $u_{\rm{dr}}$ and $u_{\rm{dl}}$.
    
    \begin{figure}[H]
    \centering
    \begin{subfigure}[t]{0.32\linewidth} 
        \includegraphics[trim={1cm 5.1cm 18.6cm 5.5cm},clip, width=1.19\linewidth]{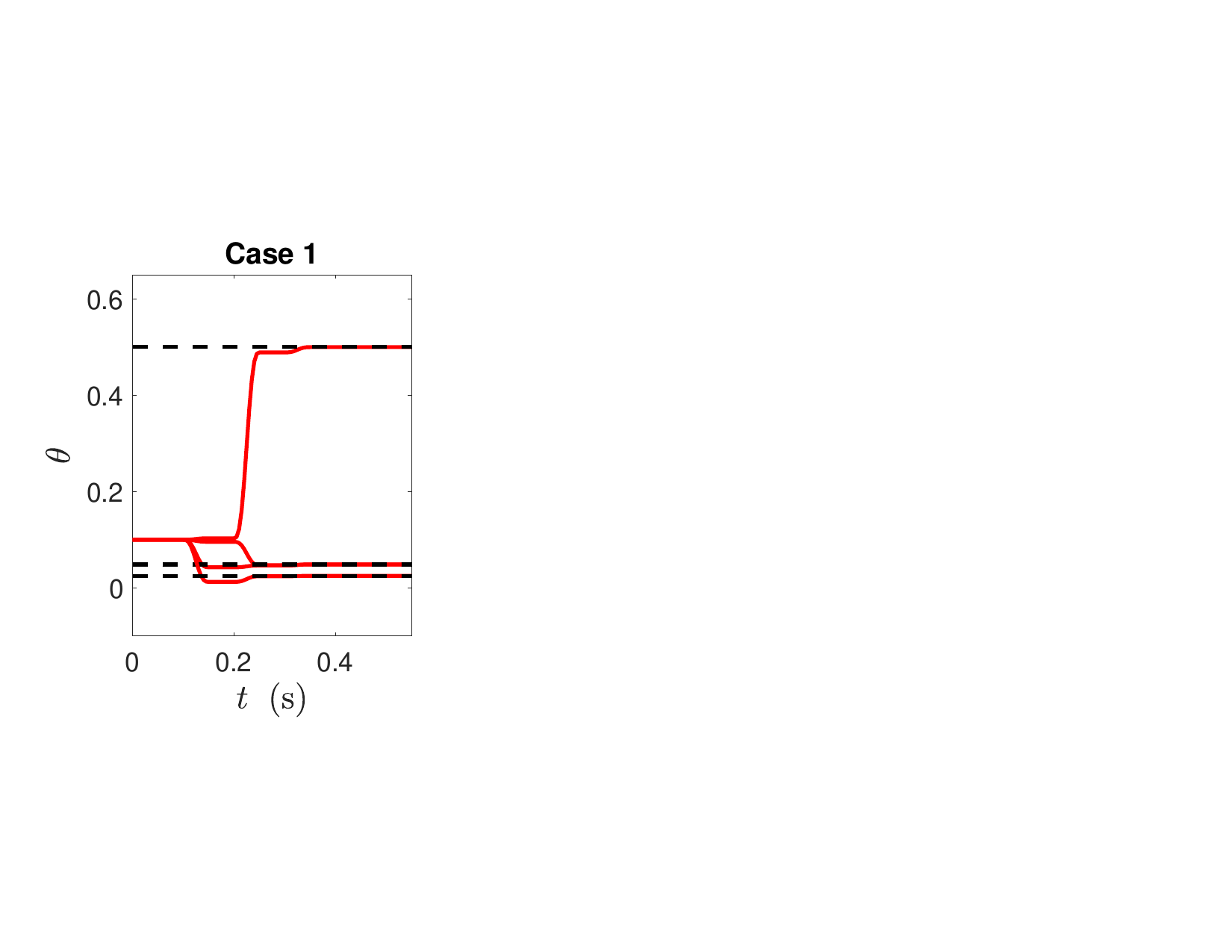}
    \end{subfigure}
    \hspace{0.33cm}
    \begin{subfigure}[t]{0.32\linewidth} 
        \includegraphics[trim={8cm 8.25cm 7cm 8.5cm},clip, width=0.94\linewidth]{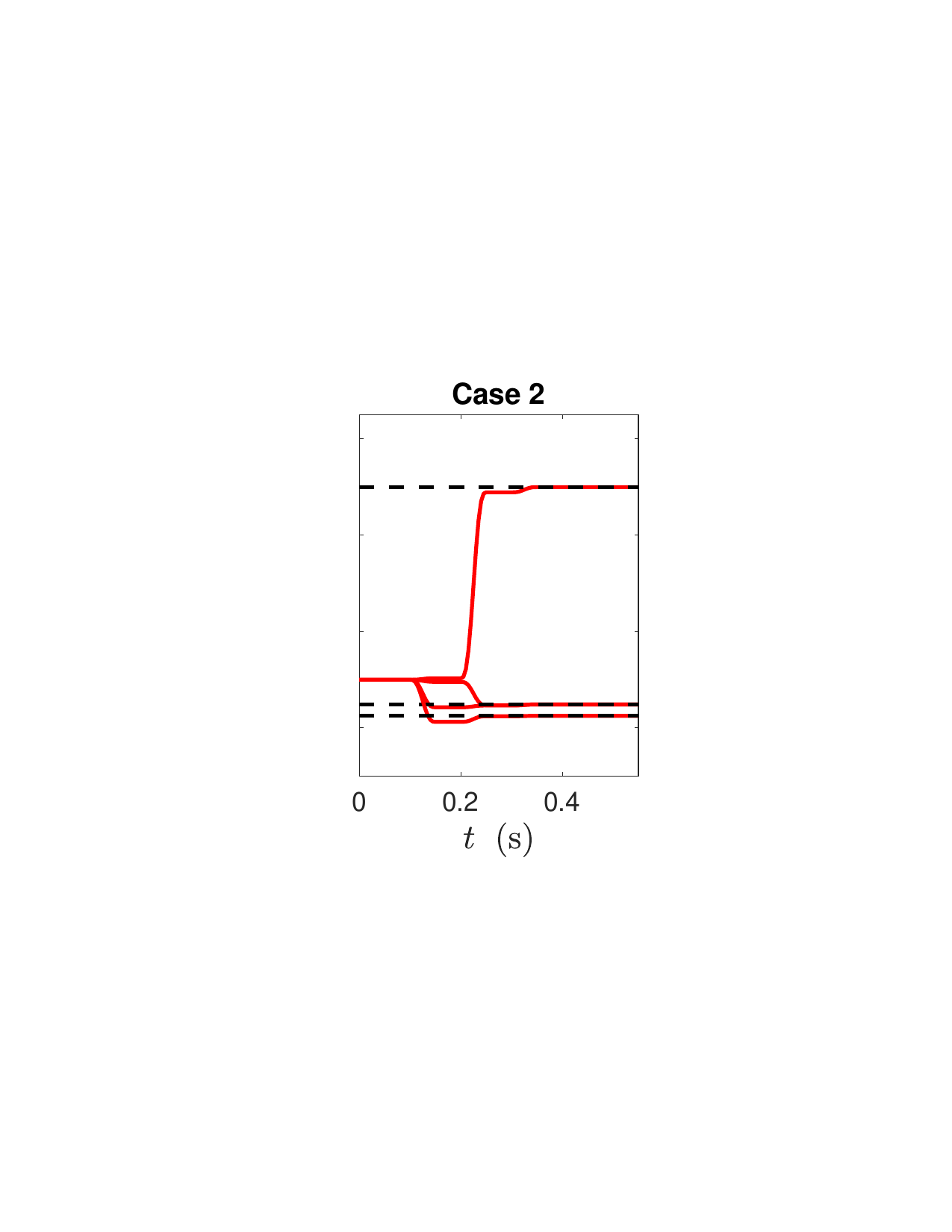}
    \end{subfigure}
    \hspace{-0.42cm}
    \begin{subfigure}[t]{0.32\linewidth} 
        \includegraphics[trim={19.5cm 5.1cm 1.95cm 5.5cm},clip, width=0.93\linewidth]{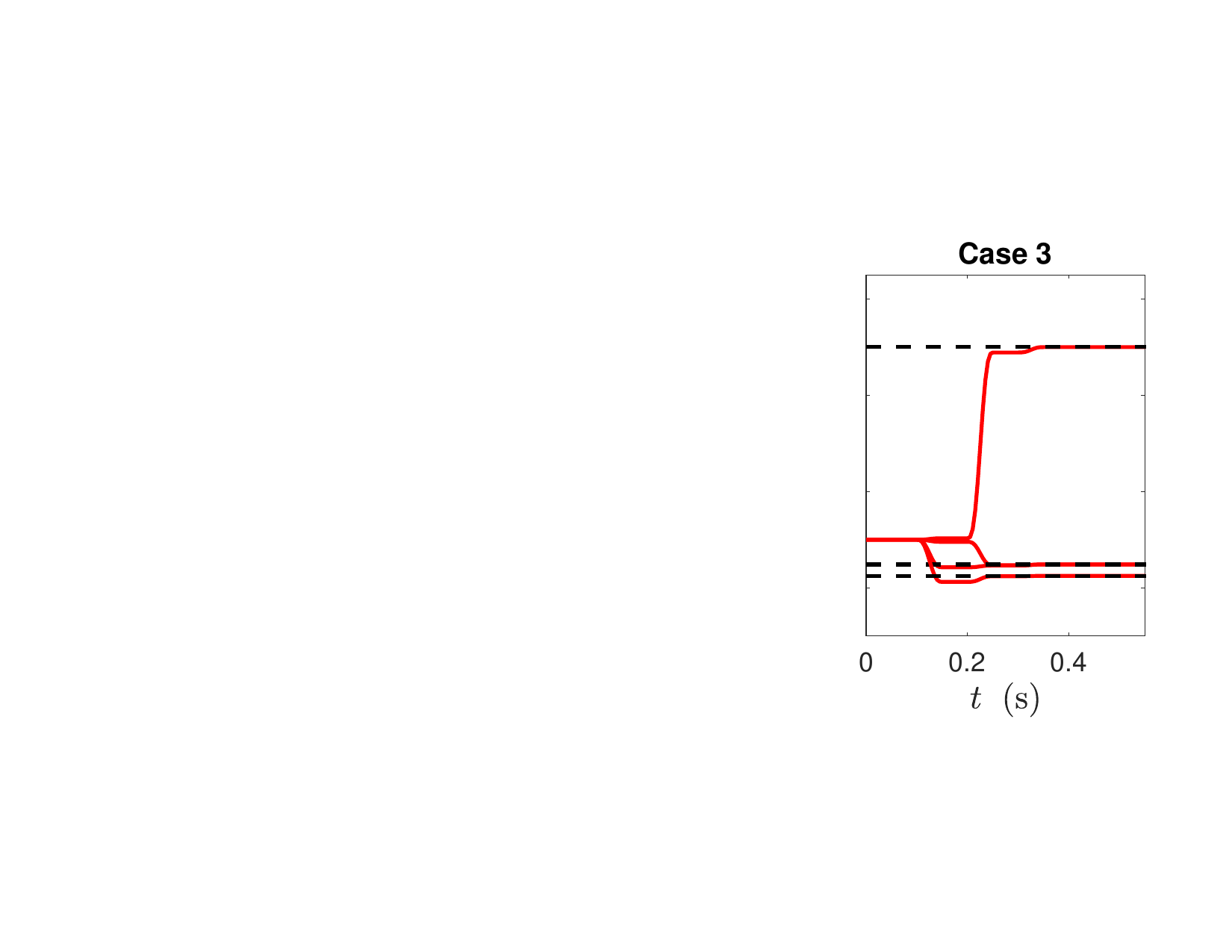}
    \end{subfigure}
        \caption{$\theta(t)$ for Cases 1, 2 and 3. Note that $\theta_{*}$ is shown with dashed lines.}
    \label{fig:GR:est}
    \end{figure}
    \vspace{-2mm}
    
    Figure \ref{fig:GR:safety} shows that $\psi_{0}$, $\psi_{1}$ and $\psi$ are nonnegative, which implies that for all three cases and all $t \geq 0$, $x(t) \in C_{0}$. In Case 3, for all $t\geq 0$, $\psi(x(t))=0$, demonstrating how the conservative safety constraint degrades performance, even as the desired control is adaptively improved.
    
    Figure \ref{fig:GR:est} and \ref{fig:GR:est_error} show that, for all three cases, $\lim_{t\rightarrow \infty}\theta(t)=\theta_{*}$ and $\lim_{t \rightarrow\infty}\nu (t) = 0$.

    \vspace{-2mm}
    \begin{figure}[H]
    \centering
    \begin{subfigure}[t]{0.32\linewidth} 
        \includegraphics[trim={1.4cm 5.6cm 18.2cm 9cm},clip, width=1.17\linewidth]{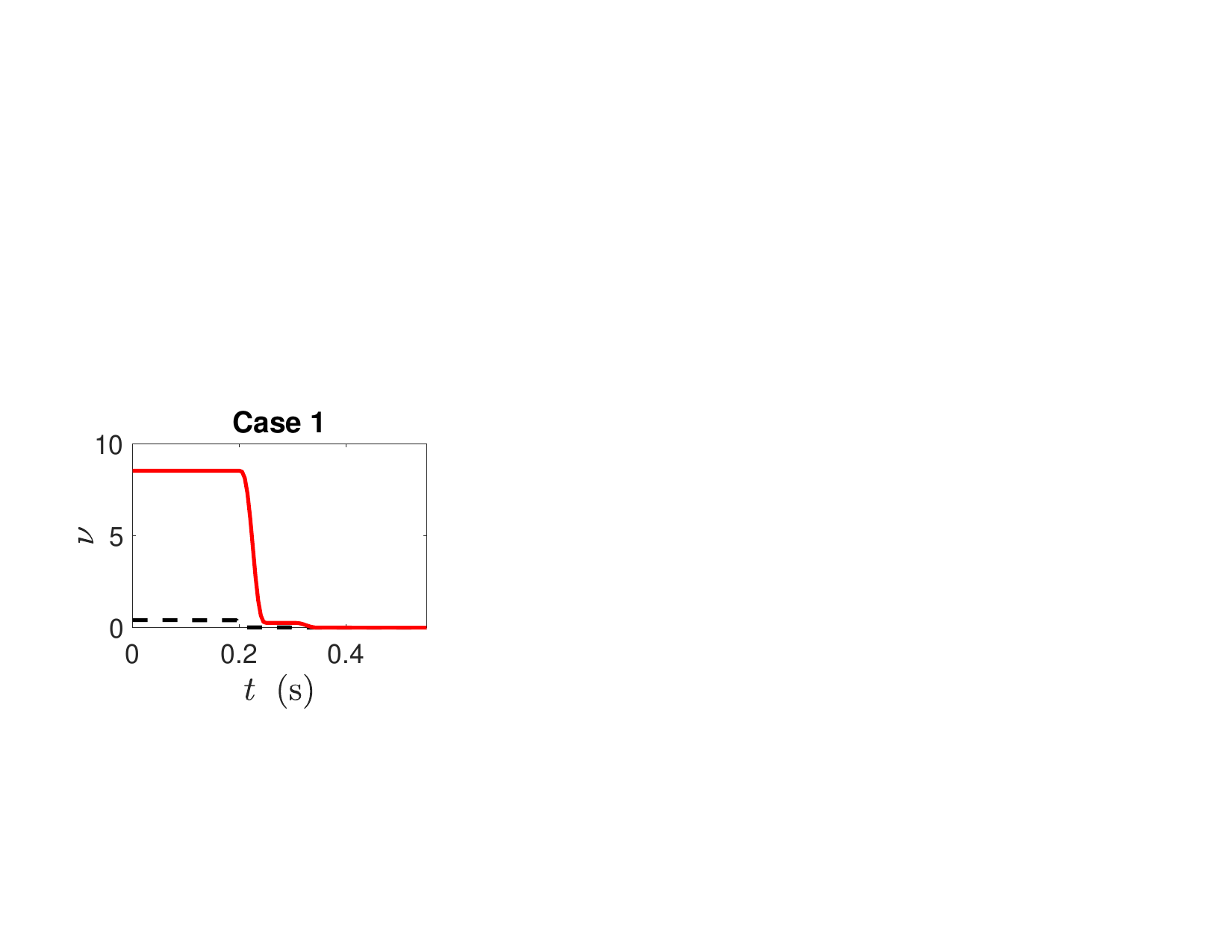}
    \end{subfigure}
    \hspace{0.22cm}
    \begin{subfigure}[t]{0.32\linewidth} 
        \includegraphics[trim={6.55cm 8.8cm 8.25cm 12cm},clip, width=0.96\linewidth]{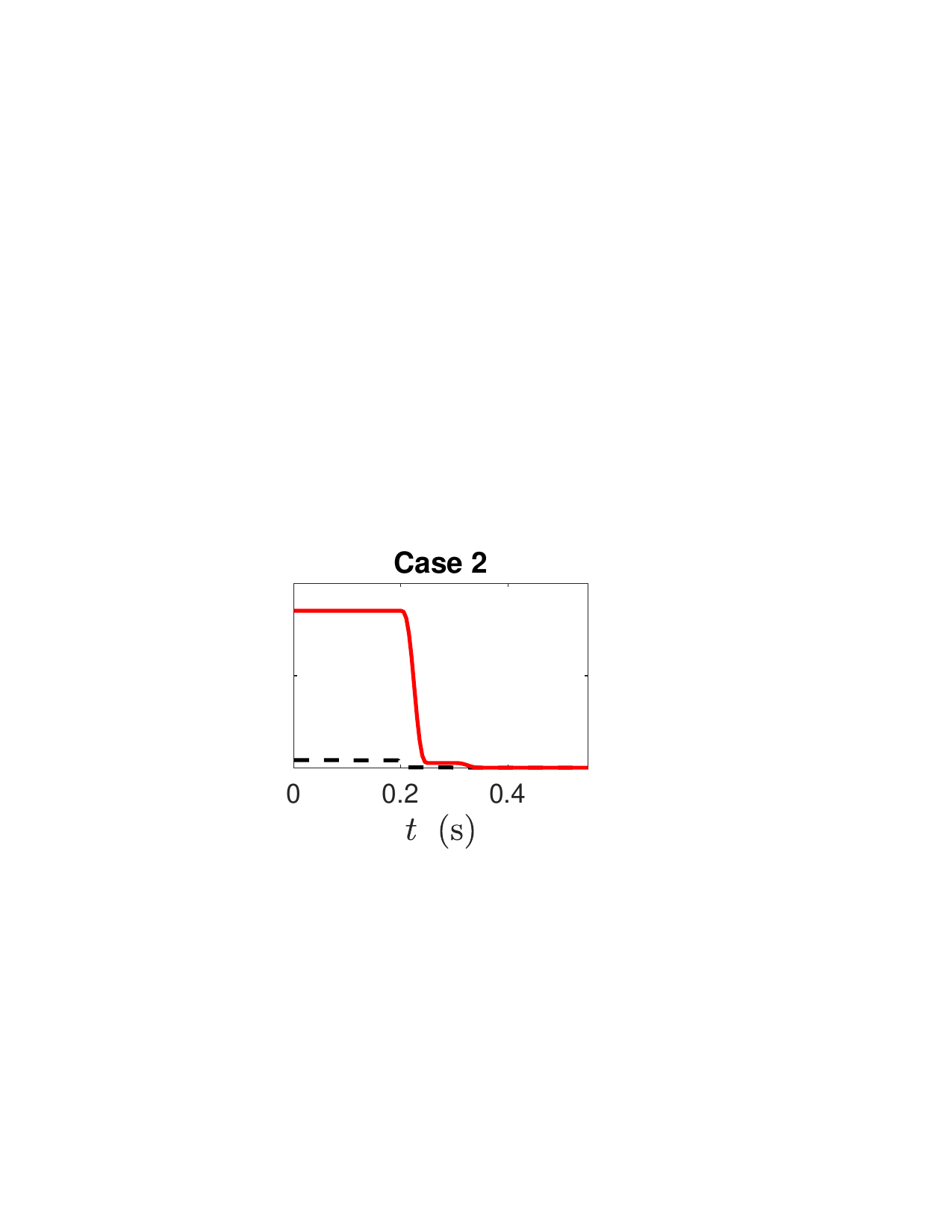}
    \end{subfigure}
    \hspace{-0.34cm}
    \begin{subfigure}[t]{0.32\linewidth} 
        \includegraphics[trim={16.5cm 5.6cm 4.55cm 9cm},clip, width=0.97\linewidth]{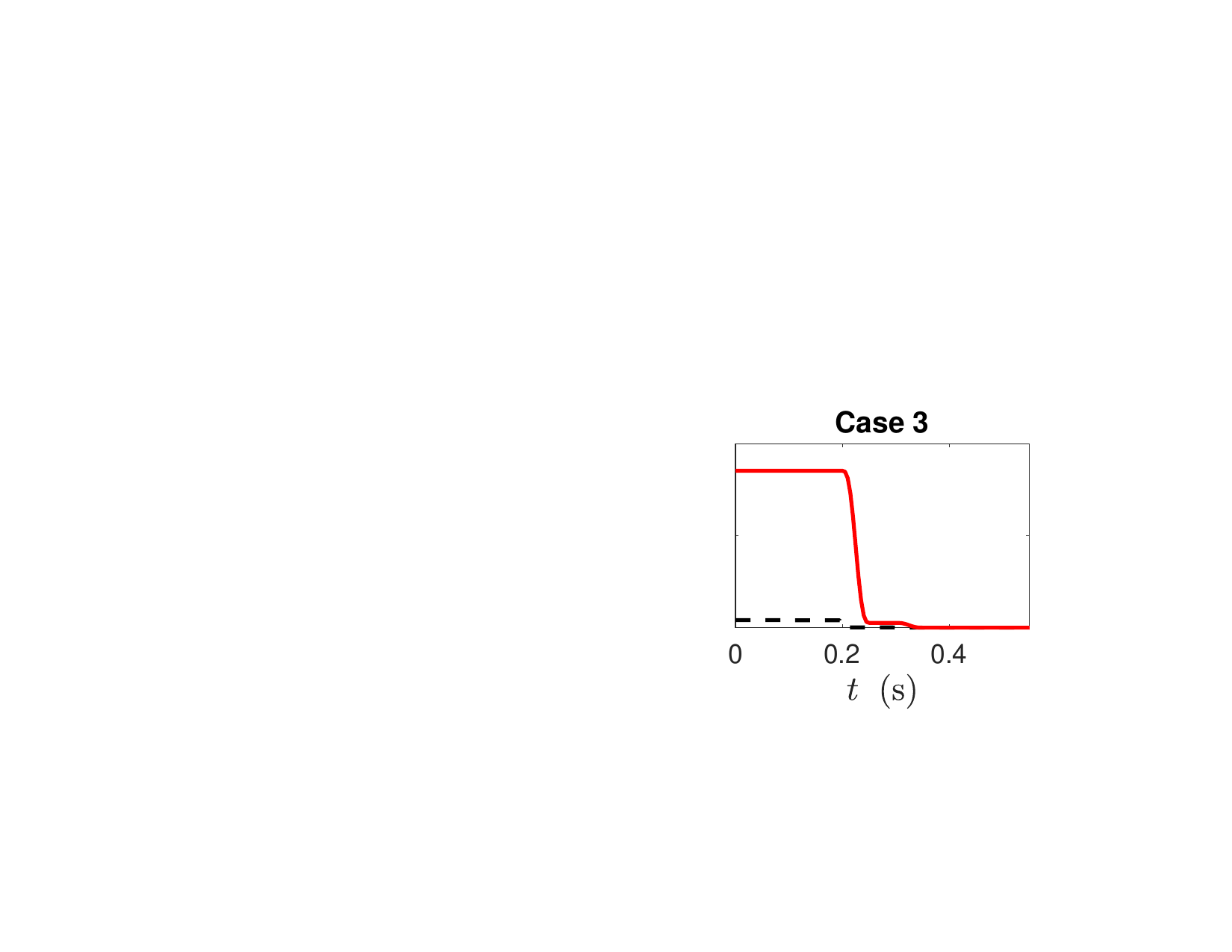}
    \end{subfigure}
    \caption{$\nu$ for Cases 1, 2 and 3. Note that $||\tilde{\theta}_{k}||$ is shown with dashed lines.}
    \label{fig:GR:est_error}
    \end{figure}

    \bibliographystyle{ieeetr}
    \bibliography{References}

\begin{thebibliography}{10}

\bibitem{borrmann2015control}
U.~Borrmann, L.~Wang, A.~D. Ames, and M.~Egerstedt, ``Control barrier certificates for safe swarm behavior,'' {\em IFAC-PapersOnLine}, vol.~48, no.~27, pp.~68--73, 2015.

\bibitem{lippay2021formation}
Z.~S. Lippay and J.~B. Hoagg, ``Formation control with time-varying formations, bounded controls, and local collision avoidance,'' {\em IEEE Trans. on Contr. Sys. Tech.}, vol.~30, no.~1, pp.~261--276, 2021.

\bibitem{heintz2023formation}
C.~Heintz, S.~C. Bailey, and J.~B. Hoagg, ``Formation control for autonomous fixed-wing air vehicles with strict speed constraints,'' {\em Autonomous Robots}, vol.~47, no.~8, pp.~1299--1323, 2023.

\bibitem{nguyen2015safety}
Q.~Nguyen and K.~Sreenath, ``Safety-critical control for dynamical bipedal walking with precise footstep placement,'' {\em IFAC-PapersOnLine}, vol.~48, no.~27, pp.~147--154, 2015.

\bibitem{borrelli2017predictive}
F.~Borrelli, A.~Bemporad, and M.~Morari, {\em Predictive control for linear and hybrid systems}.
\newblock Cambridge University Press, 2017.

\bibitem{bemporad2002model}
A.~Bemporad, F.~Borrelli, M.~Morari, {\em et~al.}, ``Model predictive control based on linear programming\~{} the explicit solution,'' {\em IEEE Trans. Autom. Contr.}, vol.~47, no.~12, pp.~1974--1985, 2002.

\bibitem{tondel2003algorithm}
P.~T{\o}ndel, T.~A. Johansen, and A.~Bemporad, ``An algorithm for multi-parametric quadratic programming and explicit mpc solutions,'' {\em Automatica}, vol.~39, no.~3, pp.~489--497, 2003.

\bibitem{zeng2021safety}
J.~Zeng, B.~Zhang, and K.~Sreenath, ``Safety-critical model predictive control with discrete-time control barrier function,'' in {\em Proc. Amer. Contr. Conf.}, pp.~3882--3889, 2021.

\bibitem{prajna2007framework}
S.~Prajna, A.~Jadbabaie, and G.~J. Pappas, ``A framework for worst-case and stochastic safety verification using barrier certificates,'' {\em IEEE Trans. Autom. Contr.}, vol.~52, no.~8, pp.~1415--1428, 2007.

\bibitem{wieland2007constructive}
P.~Wieland and F.~Allg{\"o}wer, ``Constructive safety using control barrier functions,'' {\em IFAC Proceedings Volumes}, vol.~40, no.~12, pp.~462--467, 2007.

\bibitem{ames2016control}
A.~D. Ames, X.~Xu, J.~W. Grizzle, and P.~Tabuada, ``Control barrier function based quadratic programs for safety critical systems,'' {\em IEEE Trans. Autom. Contr.}, vol.~62, no.~8, pp.~3861--3876, 2016.

\bibitem{nguyen2016exponential}
Q.~Nguyen and K.~Sreenath, ``Exponential control barrier functions for enforcing high relative-degree safety-critical constraints,'' in {\em Proc. Amer. Contr. Conf.}, pp.~322--328, 2016.

\bibitem{jin2018adaptive}
X.~Jin, ``Adaptive fixed-time control for mimo nonlinear systems with asymmetric output constraints using universal barrier functions,'' {\em IEEE Trans. Autom. Contr.}, vol.~64, no.~7, pp.~3046--3053, 2018.

\bibitem{xiao2021high}
W.~Xiao and C.~Belta, ``High-order control barrier functions,'' {\em IEEE Trans. Autom. Contr.}, vol.~67, no.~7, pp.~3655--3662, 2021.

\bibitem{tan2021high}
X.~Tan, W.~S. Cortez, and D.~V. Dimarogonas, ``High-order barrier functions: Robustness, safety, and performance-critical control,'' {\em IEEE Trans. Autom. Contr.}, vol.~67, no.~6, pp.~3021--3028, 2021.

\bibitem{xiao2022event}
W.~Xiao, C.~Belta, and C.~G. Cassandras, ``Event-triggered control for safety-critical systems with unknown dynamics,'' {\em IEEE Trans. Autom. Contr.}, vol.~68, no.~7, pp.~4143--4158, 2022.

\bibitem{breeden2023robust}
J.~Breeden and D.~Panagou, ``Robust control barrier functions under high relative degree and input constraints for satellite trajectories,'' {\em Automatica}, vol.~155, p.~111109, 2023.

\bibitem{wang2023multi}
Z.~Wang, T.~Hu, and L.~Long, ``Multi-uav safe collaborative transportation based on adaptive control barrier function,'' {\em IEEE Trans. Sys., Man, and Cyb.: Systems}, 2023.

\bibitem{zheng2023constrained}
Z.~Zheng, J.~Li, Z.~Guan, and Z.~Zuo, ``Constrained moving path following control for uav with robust control barrier function,'' {\em IEEE/CAA Automatica Sinica}, vol.~10, no.~7, pp.~1557--1570, 2023.

\bibitem{ames2019control}
A.~D. Ames, S.~Coogan, M.~Egerstedt, G.~Notomista, K.~Sreenath, and P.~Tabuada, ``Control barrier functions: Theory and applications,'' in {\em 2019 18th Proc. Euro. Contr. Conf.}, pp.~3420--3431, IEEE, 2019.

\bibitem{cosner2021measurement}
R.~K. Cosner, A.~W. Singletary, A.~J. Taylor, T.~G. Molnar, K.~L. Bouman, and A.~D. Ames, ``Measurement-robust control barrier functions: Certainty in safety with uncertainty in state,'' in {\em 2021 IEEE/RSJ Int. Conf. Int. Robots Sys.}, pp.~6286--6291, IEEE, 2021.

\bibitem{rabiee2023automatica}
P.~Rabiee and J.~B. Hoagg, ``Soft-minimum and soft-maximum barrier functions for safety with actuation constraints,'' {\em Automatica}, 2024.

\bibitem{rabiee2024closed}
P.~Rabiee and J.~B. Hoagg, ``A closed-form control for safety under input constraints using a composition of control barrier functions,'' {\em arXiv preprint arXiv:2406.16874}, 2024.

\bibitem{safari2024time}
A.~Safari and J.~B. Hoagg, ``Time-varying soft-maximum barrier functions for safety in unmapped and dynamic environments,'' {\em arXiv preprint arXiv:2409.01458}, 2024.

\bibitem{jankovic2018robust}
M.~Jankovic, ``Robust control barrier functions for constrained stabilization of nonlinear systems,'' {\em Automatica}, vol.~96, pp.~359--367, 2018.

\bibitem{cohen2022robust}
M.~H. Cohen, C.~Belta, and R.~Tron, ``Robust control barrier functions for nonlinear control systems with uncertainty: A duality-based approach,'' in {\em Proc. Conf. Dec. Contr.}, pp.~174--179, IEEE, 2022.

\bibitem{taylor2020adaptive}
A.~J. Taylor and A.~D. Ames, ``Adaptive safety with control barrier functions,'' in {\em Proc. Amer. Contr. Conf.}, pp.~1399--1405, IEEE, 2020.

\bibitem{lopez2020robust}
B.~T. Lopez, J.-J.~E. Slotine, and J.~P. How, ``Robust adaptive control barrier functions: An adaptive and data-driven approach to safety,'' {\em IEEE Contr. Syst. Letts.}, vol.~5, no.~3, pp.~1031--1036, 2020.

\bibitem{nguyen20221}
Q.~Nguyen and K.~Sreenath, ``L 1 adaptive control barrier functions for nonlinear underactuated systems,'' in {\em Proc. Amer. Contr. Conf.}, pp.~721--728, 2022.

\bibitem{cohen2022high}
M.~H. Cohen and C.~Belta, ``High order robust adaptive control barrier functions and exponentially stabilizing adaptive control lyapunov functions,'' in {\em Proc. Amer. Contr. Conf.}, pp.~2233--2238, IEEE, 2022.

\bibitem{zeng2023robust}
D.~Zeng, Y.~Jiang, Y.~Wang, H.~Zhang, and Y.~Feng, ``Robust adaptive control barrier functions for input-affine systems: Application to uncertain manipulator safety constraints,'' {\em IEEE Contr. Syst. Letts.}, 2023.

\bibitem{ali2016stability}
A.~A. Ali, J.~B. Hoagg, M.~Mossberg, and D.~S. Bernstein, ``On the stability and convergence of a sliding-window variable-regularization recursive-least-squares algorithm,'' {\em Int. J. of Adap. Contr. Sig. Process.}, vol.~30, no.~5, pp.~715--735, 2016.

\bibitem{cortez2022compatibility}
W.~S. Cortez and D.~V. Dimarogonas, ``On compatibility and region of attraction for safe, stabilizing control laws,'' {\em IEEE Trans. Autom. Contr.}, vol.~67, no.~9, pp.~4924--4931, 2022.

\bibitem{anvari2013non}
I.~Anvari, ``Non-holonomic differential drive mobile robot control \& design: Critical dynamics and coupling constraints,'' tech. rep., Arizona State University, 2013.

\end{thebibliography}

 \end{document}